\newcommand{\width}{\ensuremath{\mathit{SubscriptTwoMax}}}
\newcommand{\blocks}{\ensuremath{\mathit{SubscriptOneMax}}}
\newcommand{\sigmastar}{\ensuremath{\Sigma^\ast}}
\newcommand{\wasast}{m}
\newcommand{\singlespacing}{\let\CS=
\@currsize\renewcommand{\baselinestretch}{1}\tiny\CS}
\newcommand{\singlespacingplus}{\let\CS=
\@currsize\renewcommand{\baselinestretch}{1.25}\tiny\CS}
\newcommand{\doublespacing}{\let\CS=
\@currsize\renewcommand{\baselinestretch}{1.75}\tiny\CS}
\newcommand{\extradoublespacing}{\let\CS=
\@currsize\renewcommand{\baselinestretch}{1.9}\tiny\CS}
\newcommand{\nicenicespacing}{\let\CS=
\@currsize\renewcommand{\baselinestretch}{1.9}\tiny\CS}
\newcommand{\draftspacing}{\let\CS=
\@currsize\renewcommand{\baselinestretch}{2.0}\tiny\CS}
\newcommand{\hugedraftspacing}{\let\CS=
\@currsize\renewcommand{\baselinestretch}{2.4}\tiny\CS}
\newcommand{\niceonespacing}{\let\CS=\@currsize\renewcommand{\baselinestretch}{1.1}\tiny\CS}
\newcommand{\nicetwospacing}{\let\CS=\@currsize\renewcommand{\baselinestretch}{1.2}\tiny\CS}
\newcommand{\nicethreespacing}{\let\CS=\@currsize\renewcommand{\baselinestretch}{1.3}\tiny\CS}
\newcommand{\singlespacingplusplus}{\let\CS=\@currsize\renewcommand{\baselinestretch}{1.35}\tiny\CS}
\newcommand{\nicefourspacing}{\let\CS=\@currsize\renewcommand{\baselinestretch}{1.4}\tiny\CS}
\newcommand{\nicefivespacing}{\let\CS=\@currsize\renewcommand{\baselinestretch}{1.5}\tiny\CS}
\newcommand{\nicesixspacing}{\let\CS=\@currsize\renewcommand{\baselinestretch}{1.6}\tiny\CS}
\newcommand{\nicesevenspacing}{\let\CS=\@currsize\renewcommand{\baselinestretch}{1.7}\tiny\CS}
\newcommand{\niceeightspacing}{\let\CS=\@currsize\renewcommand{\baselinestretch}{1.8}\tiny\CS}
\newcommand{\niceninespacing}{\let\CS=\@currsize\renewcommand{\baselinestretch}{1.9}\tiny\CS}
\newcommand{\onlinesystemb}[1]{\mathrm{online}\hbox{-}\allowbreak\mathrm{{#1}}\hbox{-}\allowbreak\mathrm{Bribery}}
\newcommand{\onlinesystembk}[2]{\mathrm{online}\hbox{-}\allowbreak\mathrm{{#1}}\hbox{-}\allowbreak\mathrm{Bribery}[#2]}
\newcommand{\onlinesystemdb}[1]{\mathrm{online}\hbox{-}\allowbreak\mathrm{{#1}}\hbox{-}\allowbreak\mathrm{Destructive}\hbox{-}\allowbreak\mathrm{Bribery}}
\newcommand{\onlinesystemdbk}[2]{\mathrm{online}\hbox{-}\allowbreak\mathrm{{#1}}\hbox{-}\allowbreak\mathrm{Destructive}\hbox{-}\allowbreak\mathrm{Bribery}[k]}
\newcommand{\onlinesystemwb}[1]{\mathrm{online}\hbox{-}\allowbreak\mathrm{{#1}}\hbox{-}\allowbreak\mathrm{Weighted}\hbox{-}\allowbreak\mathrm{Bribery}}
\newcommand{\onlinesystemwbk}[2]{\mathrm{online}\hbox{-}\allowbreak\mathrm{{#1}}\hbox{-}\allowbreak\mathrm{Weighted}\hbox{-}\allowbreak\mathrm{Bribery}[#2]}
\newcommand{\onlinesystemdwb}[1]{\mathrm{online}\hbox{-}\allowbreak\mathrm{{#1}}\hbox{-}\allowbreak\mathrm{Destructive}\hbox{-}\allowbreak\mathrm{Weighted}\hbox{-}\allowbreak\mathrm{Bribery}}
\newcommand{\onlinesystemdwbk}[2]{\mathrm{online}\hbox{-}\allowbreak\mathrm{{#1}}\hbox{-}\allowbreak\mathrm{Destructive}\hbox{-}\allowbreak\mathrm{Weighted}\hbox{-}\allowbreak\mathrm{Bribery}[#2]}
\newcommand{\onlinesystempb}[1]{\mathrm{online}\hbox{-}\allowbreak\mathrm{{#1}}\hbox{-}\allowbreak\mathrm{\$Bribery}}
\newcommand{\onlinesystempbk}[2]{\mathrm{online}\hbox{-}\allowbreak\mathrm{{#1}}\hbox{-}\allowbreak\mathrm{\$Bribery}[#2]}
\newcommand{\onlinesystemdpb}[1]{\mathrm{online}\hbox{-}\allowbreak\mathrm{{#1}}\hbox{-}\allowbreak\mathrm{Destructive}\hbox{-}\allowbreak\mathrm{\$Bribery}}
\newcommand{\onlinesystemdpbk}[2]{\mathrm{online}\hbox{-}\allowbreak\mathrm{{#1}}\hbox{-}\allowbreak\mathrm{Destructive}\hbox{-}\allowbreak\mathrm{\$Bribery}[#2]}
\newcommand{\onlinesystempwb}[1]{\mathrm{online}\hbox{-}\allowbreak\mathrm{{#1}}\hbox{-}\allowbreak\mathrm{Weighted}\hbox{-}\allowbreak\mathrm{\$Bribery}}
\newcommand{\onlinesystempwbk}[2]{\mathrm{online}\hbox{-}\allowbreak\mathrm{{#1}}\hbox{-}\allowbreak\mathrm{Weighted}\hbox{-}\allowbreak\mathrm{\$Bribery}[#2]}
\newcommand{\onlinesystemdpwb}[1]{\mathrm{online}\hbox{-}\allowbreak\mathrm{{#1}}\hbox{-}\allowbreak\mathrm{Destructive}\hbox{-}\allowbreak\mathrm{Weighted}\hbox{-}\allowbreak\mathrm{\$Bribery}}
\newcommand{\onlinesystemdpwbk}[2]{\mathrm{online}\hbox{-}\allowbreak\mathrm{{#1}}\hbox{-}\allowbreak\mathrm{Destructive}\hbox{-}\allowbreak\mathrm{Weighted}\hbox{-}\allowbreak\mathrm{\$Bribery}[#2]}
\newcommand{\systemb}[1]{\mathrm{{#1}}\hbox{-}\allowbreak\mathrm{Bribery}}
\newcommand{\systemwb}[1]{\mathrm{{#1}}\hbox{-}\allowbreak\mathrm{Weighted}\hbox{-}\allowbreak\mathrm{Bribery}}
\newcommand{\systempb}[1]{\mathrm{{#1}}\hbox{-}\allowbreak\mathrm{\$Bribery}}
\newcommand{\systempwb}[1]{\mathrm{{#1}}\hbox{-}\allowbreak\mathrm{Weighted}\hbox{-}\allowbreak\mathrm{\$Bribery}}
\newcommand{\cale}{{\cal E}}
\newcommand{\calc}{{\cal C}}
\newcommand{\condition}{\mid}
\def\land{{\; \wedge \;}}
\newcommand{\littlep}{{p}}
\newcommand{\sigmatwo}{{\Sigma_2^{\littlep}}}
\newcommand{\deltatwo}{{\Delta_2^{\littlep}}}
  \newtheorem{theorem}{Theorem}[section]
  \newtheorem{proposition}[theorem]{Proposition}
  \newtheorem{fact}[theorem]{Fact}
  \newtheorem{definition}[theorem]{Definition}
\newtheorem{observation}[theorem]{Observation}
\newcommand{\manyone}{\ensuremath{\leq_{m}^{p}}}
\newcommand{\dttred}{\ensuremath{\leq_{dtt}^{p}}}
\newcommand{\p}{\ensuremath{\mathrm{P}}}
\newcommand{\np}{\ensuremath{\mathrm{NP}}}
\newcommand{\npnp}{\ensuremath{\np^{\np}}}
\newcommand{\dtime}{\ensuremath{\mathrm{DTIME}}}
\newcommand{\conp}{\ensuremath{\mathrm{coNP}}}
\newcommand{\pspace}{\ensuremath{\mathrm{PSPACE}}}
\newcommand{\OMIT}[1]{} %
\newcommand\sigmalevel[1]{\ensuremath{{\Sigma^p_{#1}}}}
\newcommand\pilevel[1]{\ensuremath{{\Pi^p_{#1}}}}
\newcommand{\qbf}{\ensuremath{\mathrm{QBF}}}
\newcommand{\systemucm}[1]{{\mathrm{{#1}}\hbox{-}\mathrm{UCM}}}
\newcommand{\systemwcm}[1]{{\mathrm{{#1}}\hbox{-}\mathrm{WCM}}}
\newcommand{\systemducm}[1]{{\mathrm{{#1}}\hbox{-}\mathrm{DUCM}}}
\newcommand{\systemdwcm}[1]{{\mathrm{{#1}}\hbox{-}\mathrm{DWCM}}}
\newcommand{\onlinesystemucm}[1]{{\mathrm{online}\hbox{-}\mathrm{{#1}}\hbox{-}\mathrm{UCM}}}
\newcommand{\onlinesystemwcm}[1]{{\mathrm{online}\hbox{-}\mathrm{{#1}}\hbox{-}\mathrm{WCM}}}
\newcommand{\onlinesystemducm}[1]{{\mathrm{online}\hbox{-}\mathrm{{#1}}\hbox{-}\mathrm{DUCM}}}
\newcommand{\onlinesystemdwcm}[1]{{\mathrm{online}\hbox{-}\mathrm{{#1}}\hbox{-}\mathrm{DWCM}}}
\newcounter{alg}
\newcounter{subalg}
\newcommand{\lahselfnote}[1]{}
\newcommand{\pair}[1]{\mathopen\langle{#1}\mathclose\rangle}
\title{The Complexity of Online Bribery\\in Sequential Elections}
\author{Edith Hemaspaandra\thanks{Supported in part
    by NSF grant DUE-1819546
    and
    a Renewed Research Stay grant from the
      Alexander von Humboldt Foundation.
Work done
    in part 
while 
  on sabbatical visits to ETH-Z\"urich and the University of
  D\"usseldorf.}\\Department of Computer Science\\
        Rochester Institute of Technology\\
        Rochester, NY 14623, USA 
\and
        Lane A. Hemaspaandra\thanks{Supported in part by
NSF grant CCF-2006496 and
      a Renewed Research Stay grant from the
      Alexander von Humboldt Foundation.
Work done
    in part 
while 
  on sabbatical visits to ETH-Z\"urich and the University of
  D\"usseldorf.}\\
        Department of Computer Science\\
        University of Rochester\\
        Rochester, NY 14627, USA
\and
J{\"o}rg Rothe\thanks{Supported in part by DFG grants RO~1202/14-2 and RO~1202/21-1.}\\
Institut f\"ur Informatik \\
        Heinrich-Heine-Universit{\"a}t D{\"u}sseldorf  \\
        40225 D\"usseldorf, Germany
      }
      \date{October 24, 2021}
\begin{document}
\sloppy

\maketitle

\begin{abstract}  

 Prior work on the complexity of bribery assumes that the bribery happens simultaneously, and that the briber has full knowledge of all votes.  However, in many real-world settings votes come in sequentially, and the briber may have a use-it-or-lose-it moment to decide whether to alter a given vote, and when making that decision the briber may not know what votes remaining voters will cast.

 We introduce a model for, and initiate the study of, bribery in such an online, sequential setting.  We show that even for election systems whose winner-determination problem is polynomial-time computable, an online, sequential setting may vastly increase the complexity of bribery, jumping the problem up to completeness for high levels of the polynomial hierarchy or even $\pspace$.  But we also show that for some natural, important election systems, such a dramatic complexity increase does not occur, and we pinpoint the complexity of their bribery problems.

  \medskip\noindent
  {\bf Key words:} bribery, computational complexity,
  computational social choice, logic, 
  quantifier assignment, sequential elections.
 \end{abstract}

\section{Introduction}\label{sec:introduction}
In computational social choice theory,
the three most studied types of attacks on elections
are bribery, control, and
manipulation,
and the models of those that are studied seek to model the analogous
real-world actions.
Informally speaking, bribery means that an external agent, the
briber,
by bribing some
voters without exceeding a given budget seeks to influence the
outcome of an 
election; electoral control
refers to an external agent, commonly called the chair, who seeks to
influence the outcome of an election by structural changes such as
adding, deleting, or partitioning candidates or voters; and
manipulation (see Footnote~\ref{foo:ucm-wcm-ducm-dwcm}
for a formal definition) means
that some voter or coalition of voters may vote strategically.
These strategic actions and their applications in
artificial intelligence and multiagent systems have been
surveyed in many book chapters and articles~\cite{fal-hem-hem:j:cacm-survey,con-wal:b:barriers-to-manipulation-in-voting,fal-rot:b:handbook-comsoc-control-and-bribery,bau-rot:b:preference-aggregation-by-voting}.

Such studies are typically carried out for the model in which all the voters
vote simultaneously.  That sometimes is the case in the real world.
But it also is sometimes the case that the voters vote in
sequence---in what is sometimes called a roll-call election (see
Section~\ref{s:related} for some related work); in political
settings such as American political-party
presidential primary conventions, one may actually have a verbal roll
call for votes
going, for example, from Alabama to Alaska to Arizona
and so on through Wyoming.
That type of setting, i.e., sequential,
has been relatively recently introduced and
studied for control and manipulation---in particular, studies have
been done of both control and manipulation 
in the so-called online, sequential
setting~\cite{hem-hem-rot:j:online-manipulation,hem-hem-rot:j:online-candidate-control,hem-hem-rot:j:online-voter-control}.

In the present paper, we study the complexity of, and algorithms for,
the online, sequential case of bribery.
Briefly put, we are studying the case where the
voting order (and the voter weights and cost of bribing each voter)
is known ahead of time to the briber.  
But at the moment a voter seeks to vote, the voter's planned
vote is revealed to the briber, who then has a use-it-or-lose-it
chance to bribe the voter, by paying the voter's bribe-price
(and doing so allows that vote to be changed to any vote the briber desires).

The problem we are studying is the complexity of that decision.  In
particular, how hard is it to decide whether under optimal play on the part
of the briber there is an action for the
briber
regarding the current voter 
such that under continued future optimal play by the briber (in the
face of all future revelations of unknown information being
pessimal), the briber can reach a certain
goal (e.g., having one of
his or her two favorite candidates win; 
or not having any of his or her three most hated candidates win; those
two types of goals are examples of what are known respectively
as constructive
and destructive goals).
We mention that the text ``there is an action for the briber
regarding
the current voter such that under continued future optimal
  play by the briber (in the face of all future revelations of unknown
  information being pessimal)'' in the previous sentence
  is in fact, as will be made clear in Section~\ref{sec:online-brib-sequ},
about alternating existential and universal quantifiers.
Section~\ref{sec:online-brib-sequ}, and for some issues 
Section~\ref{s:related}, provide
a detailed discussion of
issues regarding the model,
the varying forms the costs in bribery can take (from actual
dollars to time or effort spent to risk accepted), and the fact
that, despite the typical associations with the word ``bribery,''
in many settings bribery is not modeling illegal, immoral, or evil acts.

The following list presents the section structure of our results.
\begin{enumerate}

\item Sections~\ref{sec:general-upper-bounds-unlimited}
  and~\ref{sec:general-upper-bounds-limited} establish our upper
  bounds---of $\pspace$ and the $\pilevel{2k+1}$ level of the
  polynomial hierarchy---on online
  bribery (i.e., online, sequential bribery,
  but we will for the rest of the paper and especially in our problem names 
  often omit the word ``sequential'' when the word ``online''
  is present)
  in the general case and in
  the case of being restricted to at most $k$ bribes.
  In Section~\ref{sec:discussion-of-upcoming-proofs}, we will briefly
  discuss why our
  $\pilevel{2k+1}$
  upper-bound proofs are far from trivial and how we
  meet their challenges through establishing 
  a new result, that may be of interest in its own right,
  about 
  alternating Turing machines whose accepting paths are
  ``weight-bounded.''
  
\item\label{p:why-cool}
  Section~\ref{sec:match-lower-bounds} proves that there are
  election systems, with simple winner problems, such that each of the
  abovementioned upper-bounds is tight, i.e., that
  PSPACE-completeness holds or $\pilevel{2k+1}$-completeness holds.
  Again, we will briefly discuss in
  Section~\ref{sec:discussion-of-upcoming-proofs} some substantial,
  novel challenges in our lower-bound proofs and how we surmount them.

\item{} 
In Section~\ref{sec:online-brib-spec},
we look at the complexity of online bribery
for various natural systems. We show that for both Plurality and
Approval, it holds that priced, weighted online bribery is $\np$-complete,
whereas all other problem variants of online bribery are in~$\p$.
Since these other problem variants in the case of
traditional (i.e., nonsequential)
bribery are
known to be $\np$-complete~\cite{fal-hem-hem:j:bribery},
this also shows that nonsequential bribery can be harder than
online bribery for natural systems.
In addition, we provide complete dichotomy theorems that distinguish
NP-hard from easy cases for
all our online bribery problems for scoring protocols
and additionally we show that Veto
elections, even with three candidates, have even higher lower
bounds for weighted online bribery, namely $\p^{\np[1]}$-hardness, where
$\p^{\np[1]}$ denotes the class of problems that can be solved by a
P machine querying its NP oracle at most once on each input.
\end{enumerate}
We handle weighted election systems throughout
this paper in the standard way that one would naturally expect.
In Appendix~\ref{sec:disc-weight-vers}, we discuss the
strengths and weaknesses of using this approach.

\section{Related Work}\label{s:related}
Our paper's general area is computational social choice, in which 
studying the complexity of election and preference aggregation
problems and manipulative attacks on them is a central theme.
There
are many
excellent surveys and book chapters on computational social
choice~\cite{bra-con-end:b:comsoc,rot:b:econ,bra-con-end-lan-pro:b:handook-of-comsoc},
and computational
social choice and computational complexity have a long history of
close, mutually beneficial interaction (see the survey~\cite{hem:c:bffs}).  

The prior papers most related to our work are the papers that
defined and studied the complexity of 
online control~\cite{hem-hem-rot:j:online-candidate-control,hem-hem-rot:j:online-voter-control},
of online manipulation~\cite{hem-hem-rot:j:online-manipulation}, and
of traditional (i.e., nonsequential)
bribery~\cite{fal-hem-hem:j:bribery}.  Particularly
important among those is online manipulation, as we will show
connections/inheritance between online manipulation and our problems.  We
also will show connections/inheritance between nonsequential 
manipulation and our problems.  Traditional
(i.e., nonsequential) manipulation was introduced by
Bartholdi, Tovey, and Trick~\cite{bar-tov-tri:j:manipulating} in the
unweighted case and by Conitzer, Sandholm, and
Lang~\cite{con-lan-san:j:when-hard-to-manipulate} in the weighted
case.

As alluded to in Section~\ref{sec:introduction}, in our sequential
problems the briber's goal in the so-called constructive case is, loosely
speaking, that at least
one of a collection of ``liked'' candidates be a winner, and in the
so-called destructive case is, loosely speaking,
that none of a set of ``disliked'' candidates is a
winner.  This approach to framing the goal
is the same as is used throughout the line of
papers mentioned above on online attacks on elections, and supports
connections and comparisons between this work and that earlier
work.  This model
differs from the single-candidate focused
model used in papers on nonsequential bribery,
and in Footnote~\ref{fn:goal-model} we will discuss
the model choice, and why we feel this goal is
natural, and will present some complexity inheritances
(and a noninheritance) between the
two approaches.  
But we mention here that this
approach to framing the goal, in addition to being the
settled one in papers on online electoral attack problems,
simply seems natural.  For example, agents often do come into an election
system focused not on a particular candidate winning, but having
a collection of candidates from which they hope that at least
one wins; and the agents often act accordingly.  And similarly, agents
often come into an election with some set of candidates they view
as so dangerous or wrong-headed that the agent wants to ensure that
none of them win.  More broadly, we view the problems
studied here
as natural and interesting.  Admittedly, the problems are
formal problems, and so are not perfectly capturing the noise of
reality.  However, formalizing and studying a crisp model of a 
problem is an important step, even if further steps are needed.
Relatedly, we mention that the worst-case nature of our analysis is
itself creating a quite hostile environment for the briber---and
so is modeling conservative bribers who want to handle the case
in which unrevealed things all come out against them---and
in our open questions collection at the end of this paper
we urge the study of online bribery in contexts
that go beyond the worst-case setting.

The existing work most closely related to our work on the effect on
alternating Turing machines
and formulas
of limits on existential actions is the work on online voter
control~\cite{hem-hem-rot:j:online-voter-control}, though the issues
tackled here are different and harder.

The work of Xia and Conitzer~\cite{con-xia:c:stackelberg-sequential}
(see 
also~\cite{slo:j:sequential-voting,dek-pic:j:sequential-voting-binary-elections,des-elk:c:sequential-voting,bab-dea-ten:c:sequential-voting-with-approvals})
that
defines and explores the Stackelberg voting game is also about
sequential voting, although unlike this paper
their analysis is game-theoretic
and is about manipulation rather than bribery.
Sequential (and related types of) voting have also been studied
in an axiomatic way~\cite{ten:c:transitive-voting}
and using Markov decision processes~\cite{par-pro:c:dynamic-social-choice},
though neither of those works focuses on issues of bribery.
Poole and Rosenthal~\cite{poo-ros:b:congress} provide a history
of roll-call voting.
This is used, for instance, in the US Senate;
Thomas discusses the strategic behavior of US senators who, seeking to
be re-elected, ``deliberately change the ideological tenor of their
roll-call voting during the course of their
terms''~\cite[p.~96]{tho:j:election-proximity-and-senatorial-roll-call-voting}.
Another real-life example of a roll-call election can be observed when in
a department meeting the chair goes around the table, asking each
faculty member---one after the other---about their preferences regarding
some important departmental matter (and perhaps about their reasons for these
preferences).

In the original paper on
nonsequential bribery there were other
types of bribery, e.g.,
bribery$'$,
unary-coding,
and succinct variants~\cite{fal-hem-hem:j:bribery}. 
Many other types have been studied since, e.g.,
microbribery~\cite{fal-hem-hem-rot:j:llull},
nonuniform
bribery~\cite{fal:c:nonuniform-bribery},
swap- (and its special case \mbox{shift-)~bribery}~\cite{elk-fal-sli:c:swap-bribery}
(see also~\cite{elk-fal:c:shift-bribery,bre-che-fal-nic-nie:j:prices-matter-shift-bribery,mau-nev-rot-sel:c:complexity-of-shift-bribery-in-iterative-elections}), and extension bribery~\cite{bau-fal-lan-rot:c:lazy}.
However,
for compactness and
since they are very natural, this paper focuses completely on
bribery in its eight
typical versions (as to prices,
weights, and constructive/destructive), except now in
an online, sequential setting.
It would be interesting to see in future work whether our model of
online bribery in sequential elections (to be formally described in
Section~\ref{sec:online-brib-sequ}) can also be applied to these
other variants of bribery.

Pulling back to the bigger picture, it is very important to stress
that bribery can be about bribing in the ``natural'' sense of the
word: paying people to change their votes.  But bribery more generally
models the situation where for each of a number of voters there is a
cost associated with changing that voter's vote.  The cost indeed
could be cash given to bribe them.  But it could also be the ``shoe
leather'' cost of sending campaign workers to the voters' doors to spend
the time to change
the voters' minds so that the voters
actually sincerely believe in, and thus vote in, a
given way.
A variant on this that is more explicitly sequential would
be a local political candidate canvassing door-to-door through a neighborhood
along a fixed path, before an
election,
and after an initial few moments of chatting forming an assessment of
the voter's preferences and then deciding whether to spend the time needed
to change the voter's mind.  
Or it could be that the cost is measuring the danger to
the briber of corrupting without the voter's knowledge the given vote
as it passes through the briber's hands (and the briber is operating
within a limit of how much total danger he or she is willing to
risk).

That is, bribery provides a relatively broad framework for allocating
a limited resource (framed as ``cost'') to change the votes of some among
a number of agents.  In fact, the original bribery paper of
Faliszewski, Hemaspaandra, Hemaspaandra~\cite{fal-hem-hem:j:bribery}
itself already allowed both prices and weights, and also studied the
case where the cost of the bribe varied based on ``how far'' from the
original preference of the voter the briber wanted to move the vote via
bribery (see that paper's coverage of so-called bribery$'$, and see
also the related notion of microbribery from Faliszewski, Hemaspaandra,
Hemaspaandra, and Rothe~\cite{fal-hem-hem-rot:j:llull});
and thus that paper itself was quite
flexible
in what its
framework encompassed.  That paper correctly
stressed that the ``bribery''
being modeled 
was not necessarily an illegal, immoral, or evil act
(\cite[p.~490]{fal-hem-hem:j:bribery}, see also
\cite[p.~280]{fal-hem-hem-rot:j:llull} and those papers,
from the
citation list later in this paragraph,
that are on
``campaign management'').
For example, the bribery could simply be a
transaction in some broader optimization seeking to find the lowest
cost to reach a certain type of outcome.
Papers since the work of 
Faliszewski, Hemaspaandra, Hemaspaandra~\cite{fal-hem-hem:j:bribery}
have proposed a wide range of new 
variants of 
the cost structure or the allowed bribery moves (or
even the vote types or vote ensembles),
depending on the situation being studied (as just a few
examples,~\cite{elk-fal-sli:c:swap-bribery,elk-fal:c:shift-bribery,bra-bri-hem-hem:j:single-peaked-bribery,fal-rei-rot-sch:j:manipulation-bribery-campaign-management-in-bucklin-fallback-voting,bre-fal-nie-tal:j:large-scale-election-campaigns,elk-fal-sch:j:campaign-management-approval-voting,mau-nev-rot-sel:c:complexity-of-shift-bribery-in-iterative-elections}).

This paper's
approach to the briber's goal, which is assuming worst-case
revelations of information, is inspired by the approach used in the
area known as online algorithms~\cite{bor-ely:b:online-algorithms}.
However, 
our goal notion is not of the competitive-ratio type
often used there, since here, as in general is true in computational
social choice, we are not dealing with numerically-valued notions of
degree of preference for a candidate.  Rather, as mentioned earlier
in this section, we adopt the goal model used by the existing line of
work on online attacks on elections.

Interesting work that is related---though somewhat distantly---in
flavor to our study is the paper of Chevaleyre et
al.~\cite{che-lan-mau-mon-xia:j:possible-winners-adding-welcome}
on the addition of candidates.  They also focus on the moment at which one
has to make a key decision, in their case whether all of a group of
potential additional candidates should be added.

\section{Preliminaries}\label{sec:preliminaries}

In this section, we first provide some basic notions from complexity
theory and social choice theory.  Then we formally define
and
discuss
our model of
online bribery.
Finally, we briefly discuss some
technical points of the proofs to come.
We mention that during a first reading of this paper,
the reader may wish to skip most or all of the footnotes, especially
the two long ones in Section~\ref{sec:online-brib-sequ}, since our footnotes
are mostly
used to give extra information, context, contrasts, and in some cases
claims.

\subsection{Basics}\label{sec:basics}
$\p$ is the class of decision problems in deterministic polynomial time.
$\np$ is the class of decision problems in nondeterministic polynomial time.
For each $k\geq 0$,
$\sigmalevel{k}$
is the class of decision problems in the $k$th $\Sigma$ level of
the polynomial hierarchy~\cite{mey-sto:c:reg-exp-needs-exp-space,sto:j:poly},
e.g., $\sigmalevel{0} = \p$,
$\sigmalevel{1} = \np$,
and $\sigmalevel{2} = \npnp$ (i.e., the class of sets accepted by
nondeterministic polynomial-time 
oracle Turing machines
given unit-cost access to an NP oracle).
For each $k\geq 0$,
$\pilevel{k} = \{ L \condition \overline{L} \in \sigmalevel{k}\}$, e.g., 
$\pilevel{0} = \sigmalevel{0} = \p$,
$\pilevel{1} = \conp$, and 
$\pilevel{3} = \conp^{\np^\np}$.
The polynomial-hierarchy level $\deltatwo = \p^{\np}$ is the class of sets 
accepted by deterministic polynomial-time oracle Turing machines given
unit-cost access to an $\np$ oracle, and $\p^{\np[1]}$ is the same class
restricted to one oracle query per input.
  Chandra, Kozen, and Stockmeyer~\cite{cha-koz-sto:j:alternation}
  defined alternating polynomial-time Turing machines and
  showed that the set of 
 languages accepted by 
 alternating polynomial-time Turing machines is
 exactly PSPACE (i.e., the class of problems
 that can be solved in polynomial space).  We
  will not go into detail
  about alternating Turing machines, but simply put, they are Turing
  machines that can make both universal and existential moves.

We say that $A \manyone B$ ($A$ polynomial-time many-one reduces to $B$)
exactly if there is a polynomial-time computable function $f$ such that
$(\forall x)[ x\in A \iff f(x) \in B]$.
\begin{fact}\label{f:m-closure}
For each complexity class $\calc
\in \{
\sigmalevel{0},\allowbreak
\sigmalevel{1},\allowbreak
\pilevel{1},\allowbreak
\sigmalevel{2},\allowbreak
\pilevel{2}, \dots\}$,
$\calc$ is closed downwards under 
polynomial-time many-one reductions, i.e.,
$(B  \in \calc \land A \manyone B) \implies A \in \calc$.
\end{fact}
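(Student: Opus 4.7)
The plan is to verify the standard composition argument for each class named in the list, treating $\p$ as a base case and then handling $\sigmalevel{k}$ and $\pilevel{k}$ for $k\geq 1$ uniformly via the quantifier characterization of the polynomial hierarchy. Throughout, fix $A \manyone B$ via the polynomial-time computable function $f$, and fix a polynomial $p$ with $|f(x)| \leq p(|x|)$.

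For $\calc = \p$: if $B \in \p$ is decided by polynomial-time machine $M_B$, then the machine that on input $x$ first computes $f(x)$ and then simulates $M_B$ on $f(x)$ decides $A$. Its running time is polynomial in $|x|$ since $|f(x)| \leq p(|x|)$ and the composition of polynomials is a polynomial. For $\calc = \sigmalevel{k}$ with $k \geq 1$: use the characterization that $B \in \sigmalevel{k}$ iff there exist a polynomial $q$ and a polynomial-time computable predicate $R$ such that, for all $z$,
\[
z \in B \iff \exists y_1 \forall y_2 \cdots Q_k y_k\, R(z, y_1, \ldots, y_k),
\]
where each $y_i$ ranges over strings of length at most $q(|z|)$ and $Q_k$ is $\exists$ or $\forall$ as $k$ is odd or even. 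Then for every $x$,
\[
x \in A \iff f(x) \in B \iff \exists y_1 \forall y_2 \cdots Q_k y_k\, R(f(x), y_1, \ldots, y_k).
\]
Defining $R'(x, y_1, \ldots, y_k) := R(f(x), y_1, \ldots, y_k)$ yields a polynomial-time predicate (composition of polynomial-time computations), and the length bound $|y_i| \leq q(|f(x)|) \leq q(p(|x|))$ is polynomial in $|x|$. This exhibits $A \in \sigmalevel{k}$. For $\calc = \pilevel{k}$: observe that $f$ is also a many-one reduction from $\overline{A}$ to $\overline{B}$, so $\overline{B} \in \sigmalevel{k}$ yields $\overline{A} \in \sigmalevel{k}$ by the previous case, hence $A \in \pilevel{k}$ by definition of $\pilevel{k}$ as complements of $\sigmalevel{k}$ sets.

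There is no real obstacle here; this is a textbook composition argument, and the only care needed is to confirm that polynomial length bounds on quantified witnesses remain polynomial after substituting $f(x)$ for the original input, which follows from $|f(x)| \leq p(|x|)$ and the closure of the polynomials under composition.
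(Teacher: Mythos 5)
Your proof is correct, but it takes a genuinely different route from the paper, which in fact offers no proof of this statement at all: the Fact is presented as folklore, and the only justification in its vicinity is for the stronger Fact~\ref{f:dtt-closure} (closure under polynomial-time disjunctive truth-table reductions, of which many-one reductions are the single-output special case). That justification is by citation and relativization: the \np\ and \conp\ cases follow from Selman's result~\cite{sel:j:reductions-pselective} that \np\ is closed downwards under positive Turing reductions, and the higher levels follow by relativizing those cases with sets complete for \np, $\npnp$, and so on. Your argument is instead elementary and self-contained: machine composition for $\calc=\p$, substitution of $f(x)$ into the bounded-quantifier characterization for $\calc=\sigmalevel{k}$, and complementation (using that $f$ also reduces $\overline{A}$ to $\overline{B}$) for $\calc=\pilevel{k}$. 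What you gain is a first-principles proof requiring no external theorem; what the paper's route gains is the stronger dtt-closure statement, which the paper actually needs later (the proof of Theorem~\ref{t:2k+1} establishes $L \dttred G$, not $L \manyone G$, before invoking closure). One point you should state more carefully: after substitution, the witnesses range over strings of length at most $q(|f(x)|)$, which is not literally a fixed polynomial in $|x|$; to put $A$ into canonical \sigmalevel{k} form, let the quantifiers range over strings of length at most $q(p(|x|))$ and have $R'$ enforce the true bound $q(|f(x)|)$ internally---rejecting over-long existentially quantified strings and accepting vacuously on over-long universally quantified ones. Since you explicitly flag the length-bound issue and this repair is routine, it is a refinement rather than a gap.
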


Each of the classes mentioned in
Fact~\ref{f:m-closure} is even closed
downwards under what is known as polynomial-time
disjunctive truth-table reducibility~\cite{lad-lyn-sel:j:com}.
Disjunctive truth-table reducibility can be defined as follows.
We say that $A \dttred B$ ($A$ polynomial-time disjunctive truth-table
reduces to $B$)
exactly if there is a polynomial-time computable function $f$ such
that, for each $x$, it holds that (a)~$f(x)$ outputs a list of 0 or more
strings, and
(b)~$x \in A$ if and only if at least one string output by $f(x)$ is a
member of $B$.
(Polynomial-time many-one reductions are simply the special case of
polynomial-time disjunctive truth-table reductions where 
the polynomial-time disjunctive truth-table reduction's output-list function
is required to always
contain exactly one element.)
\begin{fact}\label{f:dtt-closure}
For each complexity class $\calc
\in \{
\sigmalevel{0},\allowbreak
\sigmalevel{1},\allowbreak
\pilevel{1},\allowbreak
\sigmalevel{2},\allowbreak
\pilevel{2}, \dots\}$,
$\calc$ is closed downwards under 
polynomial-time disjunctive truth-table reductions, i.e., 
$(B  \in \calc \land A \dttred B) \implies A \in \calc$.
\end{fact}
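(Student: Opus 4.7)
The plan is to handle the three families on the list separately: $\p = \sigmalevel{0}$, the $\sigmalevel{k}$ for $k \geq 1$, and the $\pilevel{k}$ for $k \geq 1$. In each case, fix $B \in \calc$ and $A \dttred B$ via a polynomial-time $f$, and write $f(x) = (y_1, \ldots, y_m)$ with $m$ polynomially bounded in $|x|$, so that $x \in A$ iff $(\exists i)[y_i \in B]$.

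For $\p$, I would just compute $f(x)$ and then use the $\p$ decider for $B$ to test each $y_i$ in turn, accepting iff some test succeeds. For $\sigmalevel{k}$ with $k \geq 1$, I would write $y \in B$ in its canonical alternating form $(\exists w_1)(\forall w_2) \cdots (Q_k w_k)\, R(y, w_1, \ldots, w_k)$ with $R \in \p$, and then absorb the guess of $i$ into the outer existential block to obtain $(\exists (i, w_1))(\forall w_2) \cdots (Q_k w_k)\, R(y_i, w_1, \ldots, w_k)$, which is still of $\sigmalevel{k}$ form.

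The $\pilevel{k}$ case ($k \geq 1$) is the one that requires care, because the outer $\exists i$ does not align with the universal-leading shape of $\pilevel{k}$. Here I would pass to the complement, observing that $x \notin A$ iff $(\forall i)[y_i \in \overline{B}]$ with $\overline{B} \in \sigmalevel{k}$. Writing each conjunct $y_i \in \overline{B}$ in $\sigmalevel{k}$ form and then pooling across $i$ block by block --- gathering the first-block existential witnesses $w_{1,1}, \ldots, w_{1,m}$ into a single polynomial-size tuple, the second-block universal witnesses $w_{2,1}, \ldots, w_{2,m}$ into another, and so on --- yields a single $\sigmalevel{k}$-form expression for $\bigwedge_i [y_i \in \overline{B}]$, so $\overline{A} \in \sigmalevel{k}$ and $A \in \pilevel{k}$. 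The main obstacle is precisely this block-pooling step: it amounts to Skolemizing the implicit ``$\forall i \exists w_{1,i}$'' pattern into ``$\exists (w_{1,\cdot}) \forall i$,'' which succeeds because $i$ ranges over only polynomially many values and so the resulting witness-tuple remains polynomial-sized. Once that is isolated, the rest is routine quantifier manipulation.
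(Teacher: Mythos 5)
Your proof is correct, but it takes a genuinely different route from the paper's. The paper treats the $\p$ case as obvious, handles $\np$ and $\conp$ by citing Selman's theorem that $\np$ is closed downwards under positive Turing reductions, and then lifts the result to the higher levels by observing that Selman's proof relativizes and applying it with oracles complete for $\np$, $\np^{\np}$, and so on. You instead give a self-contained quantifier-manipulation argument: for $\sigmalevel{k}$ you absorb the disjunction over the polynomially many query strings into the leading existential block, and for $\pilevel{k}$ you pass to the complement and establish that $\sigmalevel{k}$ is closed under polynomial-size conjunctions by pooling witnesses block by block. Your key pooling step is valid: in the nontrivial direction, a strategy for the pooled formula projects to a strategy for each individual conjunct by embedding the adversary's moves for coordinate $i$ into a full tuple (filling the other coordinates arbitrarily), and this stays polynomial-sized precisely because $i$ ranges over polynomially many values, as you note. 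As to what each approach buys: yours is elementary and self-contained, requiring no external theorem and working uniformly and directly at every level of the hierarchy; the paper's is more compact and delivers something strictly stronger, namely closure under positive Turing reductions (of which disjunctive truth-table reductions are a special case) and even under relativized versions of such reductions, a strengthening the paper mentions but does not need.
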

The above fact is obvious for P.  It is also 
easy to see and well known
for NP and $\conp$ (for example, the results follow
immediately from
the
result of Selman~\cite{sel:j:reductions-pselective} that NP
is closed downwards under so-called positive Turing
reductions).
The
results for the NP and coNP cases relativize (as Selman's mentioned
result's proof clearly relativizes).
That
gives
(namely, by relativizing the NP and coNP cases by complete sets
for NP, $\npnp$, etc.)\ the
claims for the higher levels of the polynomial hierarchy
(in fact, it gives something even stronger, since it gives downward
closure under disjunctive truth-table reductions that themselves
are relativized, but we won't need that stronger version in this paper).

All the many-one and disjunctive truth-table reductions discussed
in the paper will be polynomial-time ones.
So we henceforth will
sometimes skip the words ``polynomial-time'' when speaking of a polynomial-time
many-one or disjunctive truth-table reduction.

A set $L$ is said to be (polynomial-time many-one) hard for a class
$\calc$ (for short, ``$L$ is $\calc$-hard'')
exactly if $(\forall B \in \calc)[B
\manyone L]$.  If in addition $L \in \calc$, we say that 
$L$ is polynomial-time many-one complete for $\calc$, or simply
that $L$ is $\calc$-complete.

An (unweighted) election system $\cale$ takes as input
a voter collection $V$ and a candidate set $C$, such that
each element of $V$ contains a voter name
and a preference order
over the candidates in $C$;
and for us in this paper preference orders are always
total orders,
except when we are speaking of approval voting where
the preference orders are bit-vectors from $\{0,1\}^{\|C\|}$.
(For
  the rest of the preliminaries, we'll always speak of
  total orders as the preference orders' type, with it being implicit
  that when later in the paper we speak of
  and prove results about approval voting, all such places will
  tacitly be viewed as speaking of bit-vectors.)
The election system maps from that to
a (possibly nonproper) subset of $C$, often called the
winner set.
We often will call each element of $V$ a vote,
though as is common
sometimes we will use the term vote to refer just to the preference
order.
We often will use the variable names $\sigma$, $\sigma_1$, $\sigma_2$,~\dots,
$\sigma_i$ for total orders.
We allow election systems to, on some inputs, have no
winners.\footnote{Although
in social choice this is often disallowed, as has been 
discussed previously, see, e.g.,~\cite[Footnote~3]{fit-hem-hem:j:xthenx},
artificially excluding the case of
no winners
is unnatural, and many papers in computational
social choice allow this case.
A typical
real-world motivating example is that in Baseball Hall of Fame votes,
having no inductees in a given year is a natural outcome that
has at times occurred.}

For a given  (unweighted, simultaneous) election system, $\cale$, the
(unweighted) winner (aka the winner-determination) problem (in the
unweighted case) 
is the set $\{\pair{C,V,c} \condition c \in C \land$ $c$
is a winner of the election $(C,V)$ under election system~$\cale\}$.

For a given (\emph{unweighted}, simultaneous) election system, $\cale$,
the winner problem in the \emph{weighted} case will be the set of all
strings $\pair{C,V,c}$ such that $C$ is a candidate set, $V$ is a set
of weighted (via binary nonnegative integers as weights) votes (each
consisting of a voter name and a total order over $C$),
$c \in C$, and in the unweighted election created from this by
replacing each $w$-weighted vote in $V$ with $w$~unweighted copies of
that same vote, $c$~is a winner in that election under the
(unweighted) election system~$\cale$.
For an election system
$\cale$, it is clear that if the winner problem in the weighted
case is in $\p$, then so is the winner problem in the unweighted
case.  However, there are election systems $\cale$ for which the
converse fails.
The above approach to defining the weighted winner problem
is natural and appropriate for the election systems discussed
in this paper.
However, see Appendix~\ref{sec:disc-weight-vers}
for a discussion of the strengths and weaknesses of using this approach
to the weighted winner problem in other settings, and for more
discussion of the claims in this paragraph.

\subsection{Online Bribery in Sequential Elections}\label{sec:online-brib-sequ}

This paper is about the study of online bribery in sequential
elections.  In this setting, we are---this is the sequential
part---assuming that the voters vote in a well-known order,
sequentially, with each casting a ballot that expresses preferences
over all the candidates.  And we are assuming---this is the online
part---that the attacker, called ``the briber,'' as each new vote comes
in has his or her one and only chance to bribe that voter, i.e., to
alter that vote to any vote of the briber's choice.

Bribery has aspects of both the other standard
types of electoral attacks: bribery is like
manipulation in that one changes votes and it is like
(voter) control in that one is deciding on a set of voters (in
the case of bribery, which ones to bribe).  Reflecting
this, our model follows as closely as possible the relevant parts of
the existing models that study manipulation and control in online
settings~\cite{hem-hem-rot:j:online-manipulation,hem-hem-rot:j:online-voter-control,hem-hem-rot:j:online-candidate-control}.
In particular, we will follow insofar as possible both the model
of, and the notation of the model of, the
paper by Hemaspaandra, Hemaspaandra, and
Rothe~\cite{hem-hem-rot:j:online-voter-control} that
introduced the
study of online voter control in sequential elections. In
particular, we will follow the flavor of their model of control by
deleting voters, except here the key decision is not
whether to delete a given voter, but rather is whether a given voter
should be bribed, i.e., whether the voter's vote should be erased and
replaced with a vote supplied by the briber.  That ``replace[ment]''
part is more
similar to what happens in the study of online manipulation, which
was modeled and studied by Hemaspaandra, Hemaspaandra, and
Rothe~\cite{hem-hem-rot:j:online-manipulation}.  We will, as both
those papers do, focus on a key moment---a moment of
decision---and in particular on the complexity of deciding whether
there exists an action the briber can take, at that moment, such that
doing so will ensure, even under the most hostile of conditions regarding
the information that has not yet been revealed, that the briber 
will be able to
meet his or her goal.

If $u$ is a
voter and $C$ is a candidate set, an
\emph{election snapshot for $C$ and $u$} 
is specified by a triple $V =
(V_{<u}, u, V_{>u})$,
which loosely put (and in the paragraph after this one we will
provide
close coverage of the precise details) is 
made up of all voters in the order they vote, each accompanied
in models where there are prices and/or weights with their prices
and/or weights (which in this paper are assumed to
be nonnegative integers coded
in
binary).\footnote{\label{fn:binary}Why do we feel it
natural in most situations
for the prices
and weights
to be in binary rather than unary?  A TARK referee,
for example, asked whether it was not natural to assume that weights
and prices would always be small, or if not, would always be multiples
of some integer that when divided out would make the remaining numbers small.

Our
answer is that both prices and weights in many settings tend to be large,
and without any large, shared-by-all divisor.  To see this
clearly regarding weights, 
consider for example  
the number of shares of stock the various stockholders
hold in some large corporation or the
number of residents in each of the states of a country.  Prices 
too are potentially as rich and varied as are individuals and
objects, e.g., in some settings
each person's bribe-price might be the
exact fair market value of his or her house,
or might be closely related to the number of visits a web site they
own has had in the past year.

Pulling back, we note that requiring prices and weights to be in
unary is often tremendously (and arguably inappropriately)
\emph{helping} the algorithms as to
what their complexity is, since in effect one is ``padding'' the
many inputs' sizes as much as exponentially.  But if 
weighted votes are viewed as indivisible objects---and that is
indeed how they are typically
treated in the literature---the right approach
indeed is to code the weights in binary, and not to give
algorithm designers the potentially vastly lowered bar created by
the padding effect of coding the weights in unary.  Indeed,
it is known
in the study of nonsequential bribery that changing prices or weights
to unary can shift problems' complexities from
NP-hardness
to being in deterministic polynomial time~\cite[pp.~500--504]{fal-hem-hem:j:bribery}.

Also, people typically do code natural numbers
in binary, not unary.}
In addition, for each
voter voting before $u$ (namely, the voters in $V_{<u}$), also
included in this listing 
will be
the vote they cast (or if they were bribed, what vote was
cast for them by the briber) 
and whether they were bribed; and for $u$ the listing will also
include the
vote $u$ will cast unless bribed to cast a different vote.
So $V_{>u}$ is simply a list, in the order
they will vote, of the voters, if any, who come
after $u$, each also including the voter's price and/or weight data if we are
in a priced and/or weighted setting.  Further, the vote for $u$ and 
all the votes
in $V_{<u}$ must be votes over the candidate set $C$ (and in particular,
in this paper votes are total orderings of the candidates, e.g.,
$a > b> c$).

There is a slight overloading of notation above, in that we have not
explicitly listed in the structure the location of the mentioned extra
data.  In fact, our actual definition is that the first and last 
components of the 3-tuple $V$ are lists of tuples, and the middle
component is a single tuple.
Each of these contain the appropriate information, as mentioned above.
For example,  for priced, weighted bribery:
\begin{enumerate}
\item\label{item:1}
  the elements of the list $V_{<u}$ will be 5-tuples $(v_i,p_i,w_i,\sigma_i,b_i)$
whose components respectively are the voter's name, the voter's price,
the voter's weight, the voter's cast ballot (which is the voter's
original preference order if the voter was not bribed and is whatever
the voter was bribed into casting if the voter was bribed),
and
a bit specifying whether that vote resulted from being bribed,
\item\label{item:2} the middle component of
$V$ will be a tuple that contains the first four of those five components,
and
\item\label{item:3} the elements of the list $V_{>u}$ will contain the first three
  of the above-mentioned five components.
\end{enumerate}
Similarly, for example, for unpriced, unweighted bribery, the
three tuple types 
would respectively have three components, two components, and one component.

As a remaining tidbit of notational overloading, 
in some places we will speak of $u$ when we
in fact mean the voter name that is the first component of the
tuple that makes up the middle tuple of $V$.  That is, we will use $u$
both for a tuple that names $u$ and gives some of its properties,
and as a stand-in for the voter him- or herself.  Which
use we mean will always be clear from context.
(We mention in
passing that the fact that our voters and candidates have names
is consistent with the existing line of work on
online elections, see,
e.g.,~\cite{hem-hem-rot:j:online-manipulation,hem-hem-rot:j:online-voter-control}, though it also is in keeping with that fact that voters and
candidates typically truly do have names.  We will at times use this
in proofs.)

Let us, with the above in hand, define our notions of online bribery
for sequential elections.  Settings can independently allow or not
allow prices and weights, and so we have four basic types of bribery
in our online, sequential model, each having both
constructive and destructive versions.

Our specification of
these problems as languages is
centered around what Hemaspaandra, Hemaspaandra,
and Rothe~\cite{hem-hem-rot:j:online-manipulation} called a
\emph{magnifying-glass moment}.  This is a moment of decision as to
a particular voter.  To capture precisely what information the
briber does and does not have at that moment, and to thus allow us to
define our problems, we define a structure that we will call an
\emph{OBS}, which stands for \emph{online bribery setting}.  An
OBS is defined as a 5-tuple
$(C, V, \sigma, d, k)$, where $C$ is a set
of candidates;
$V = (V_{<u}, u,
V_{u<})$ is an election snapshot for $C$ and~$u$ as discussed
earlier; $\sigma$ is the
preference order of the briber;
$d \in C$ is a distinguished candidate;
and $k$ is a nonnegative integer (representing for unpriced cases the
maximum number of voters that can be bribed, and for priced cases
the maximum total cost, i.e., the sum of the prices 
of all the bribed voters).

Given an election system~$\cale$, we define the \emph{online unpriced, unweighted
  bribery problem}, abbreviated by
$\onlinesystemb{\cale}$, as the following decision problem.
The input is an OBS\@.
And the question is:
Does there exist a legal (i.e., not violating 
whatever bribe limit holds) 
choice by the briber
on
whether to bribe $u$ (recall that $u$ is specified
in the OBS,
namely, via 
the middle component of $V$) and, if the choice is
to bribe, of what vote to bribe $u$ into casting,
such that if the briber makes that choice 
then no matter what votes the remaining voters after
$u$ are (later) revealed
to have, the briber's goal (the meeting of which itself depends on
$\cale$ and will be defined explicitly two paragraphs from
now) can be reached by the current
decision regarding $u$ and by using the briber's future (legal-only, of
course) decisions 
(if any), each being made using the briber's then-in-hand
knowledge about what votes have been cast
by then?

Note that this approach is about alternating quantifiers.  It is
asking whether there is a current choice by the briber such that for all
potential revealed vote values for the next voter there exists a
choice by the briber such that for all potential revealed vote values
for the next-still voter there exists a choice by the briber such
that\ldots~and so on\ldots~such that the resulting winner set
under election system $\cale$ meets the briber's goal.
This is a bit more subtle than it might at first seem.  The briber is
acting
somewhat powerfully, since the briber is represented by existential
quantifiers.  But the briber is not all-powerful in this model.  In
particular, the briber can't see and act on future revelations of vote values;
after all, those are handled by a universal quantifier that occurs
downstream from an existential quantifier that commits the briber to
a particular choice.

In the above we have not defined what ``the briber's goal'' is, so
let us do that now.
$W_{\cale}(C,U)$
will denote
the winner set, according
to election system $\cale$,
of the (nonsequential) election $(C,U)$, where $C$ is
the candidate set and $U$ is the set of votes.
By \emph{the briber's goal} we mean, in the constructive case, that if
at the end of the above process $U'$ is the set of votes (some may
be the
original ones and some may be the result of bribes), it holds that
$W_{\cale}(C,U') \cap \{c \condition c \geq_{\sigma} d\} \neq
\emptyset$, i.e., the winner set includes some candidate (possibly
itself being~$d$) that the briber likes at least as much as the briber
likes $d$.  In the destructive case, the goal is to ensure that
no 
candidate that the briber hates as much or more than the briber hates
$d$ belongs to the winner set, i.e., the briber's goal is to ensure
that
$W_{\cale}(C,U') \cap \{c \condition d \geq_{\sigma} c\} =
\emptyset$.\footnote{\label{fn:goal-model}In each of these two cases, although we have in the
  problem statement used an order $\sigma$, that order $\sigma$
  really is merely being used to
  determine what set of candidates to (constructive case) try to get
  one of into the winner set, or to (destructive case) try to keep all
  of out of the winner set.  So one might wonder why we don't simply
  pass in such a set, rather than passing in an ordering.  The answer is that for
  the version of the problem that this paper is studying, namely the decision
  version, either way would be fine.
However, allowing an order to be
  passed in keeps our approach in harmony with the approach of earlier
  work~\cite{hem-hem-rot:j:online-manipulation,hem-hem-rot:j:online-voter-control,hem-hem-rot:j:online-candidate-control},
  which made that choice because if one studies the optimization
  version of the problem---e.g., in the constructive case trying to
  find the most preferred candidate within $\sigma$ that the briber can
  ensure will be a winner---a set does not bring in enough information
  to frame the problem but an order does.

  Our approach is also
  following---and thus better allowing comparisons and connections
  to---the earlier papers 
on online attacks on elections,
  in that our paper is using an upper segment
  of the order for the constructive case and a lower segment of the
  order for the destructive case (see~\cite[second paragraph of
  Footnote~4]{hem-hem-rot:j:online-voter-control} for more discussion
  of why that choice is most natural).
  That model
  intentionally
  does not limit one to---as is the case in the
  nonsequential case of bribery---specifying in
  the constructive case a single candidate whom one wants to win,
  or 
  in the destructive case a single candidate whom one wants to keep
  from being a winner.  There is so much uncertainty in the
  online setting
  that it seems natural to allow the
  attacker to have such flexible goals.
  Of course, the attacker is free to specify just one candidate, and
  so the just-one-focus-candidate cases are special cases of our model.
  And so, for each problem here, the version that is limited to 
  a single focus-candidate certainly many-one polynomial-time reduces
  to the same problem in our more-flexible-goal model.  So all 
  our upper-bound complexity results in our model immediately are 
  inherited by the one-focus-candidate sequential versions.  On the other
  hand, it follows from the present paper's work that,
  unless $\np = \conp$,
  there are cases when the one-focus-candidate sequential model
  falls into a simpler complexity class than the analogous
  problem in our model.  In particular, the second part of
  Theorem~\ref{t:veto} proves that
  $\onlinesystemwb{\mbox{\rm 3-candidate-Veto}}$ is
  $\p^{\np[1]}$-complete in our goal model,
  but the second bullet point of the proof of that theorem
  notes that in the one-focus-candidate sequential model the
  problem is NP-complete.  
}

In a moment, we will go on to define online bribery
variants that have prices and/or
weights.
However, let us first pause for a moment to give a
toy example
of $\onlinesystemb{\cale}$, simply to give some
concreteness
to the problem.
(Note: Later in the paper, Section~\ref{sec:online-brib-spec} will
build
efficient 
algorithms for many
important online bribery
problems.)
In our example here, we 
will not focus very much
on the actual formal structures that we have defined to
capture the problem, but will give the example informally.  In this
example, the election system will be 2-Approval, in which the
top two candidates in each voter's vote get one point each from that
voter and the other candidates get zero points from that voter.  And
whichever candidate (or candidates if there is a tie for most points)
gets the most points wins the election.  In our example, the candidate
set $C$ will be $a$, $b$, and $c$, the briber's
preference order $\sigma$ will be $a<b<c$, and
the briber's $d$ 
will be $b$.  So the briber's goal is to have at least one of
$b$ or $c$ be a winner.  Suppose that our magnifying-glass moment is
the moment when the second-to-last voter's vote is revealed, and
suppose that as we enter that moment---we will not focus on the
details of what came before---the current scores are 8 points for $a$,
and $7$ points each for $b$ and $c$.  Suppose that at this
magnifying-glass moment we are in the extreme case in which the
briber's budget has already been completely expended, and so the
briber is helpless to do any further bribing.  Let us completely
cover whether this instance is a Yes instance of 
$\onlinesystemb{\mbox{\rm 2-Approval}}$ (i.e., belongs to that set)
or is a No instance.  If the revealed vote of the second-to-last voter
is $a<b<c$ or is $a<c<b$---i.e., the points of that voter will go to
$b$ and $c$---then this is a Yes instance of the problem.  Why?  After
that second-to-last vote---and keep in mind that the briber has
no bribing budget
left---all three candidates would be tied at 8 points, and so since the final
voter has to give one point to exactly two candidates, whichever of
$b$ or $c$ gets one of those points will be a winner. So at least one
of $b$ or $c$ will be a winner, though we don't know which.  On the
other hand, if the revealed vote of the second-to-last voter is any of
the other four preference orders (which are the orders that will cause
that voter to give one of its points to $a$), then this is a No
instance of $\onlinesystemb{\mbox{\rm 2-Approval}}$.  That is so
simply because, since the briber has no bribing budget left, if the
second-to-last voter's points go to $a$ and $b$ (respectively, $a$
and $c$), then if the final voter gives his or her points to $a$
and $c$ (respectively, $a$ and $b$), then $a$ is the sole winner of
the election, and so the briber's goal has not been achieved.

Turning back to definitions, we already have defined both 
$\onlinesystemb{\cale}$ and 
$\onlinesystemdb{\cale}$.  Those both are in the
unpriced, 
unweighted 
setting.  And so
as per our definitions, the voters passed in as part of the problem statement
do not come with or need price or weight information.
In contrast, 
for the priced and/or weighted settings,
our definitions, naturally enough, require that 
the OBS includes those prices and/or weights.  And so the same
definition text that was used above defines all the other cases, except
that one must keep in mind
for the priced cases that when the ``bribery limit'' is mentioned one must
instead speak of the ``bribery budget,'' and in the weighted cases the
winner set $W$ is of course defined in terms of the weighted version
of the given voting system (which must, for that to be meaningful,
have a well-defined notion of what its weighted version is; 
Section~\ref{sec:basics} provides that notion for all systems 
in this paper, see also Appendix~\ref{sec:disc-weight-vers} for
further discussion).
Thus, we
also have tacitly defined the six problems
$\onlinesystempb{\cale}$,
$\onlinesystemdpb{\cale}$, $\onlinesystemwb{\cale}$,
$\onlinesystemdwb{\cale}$, $\onlinesystempwb{\cale}$, and
$\onlinesystemdpwb{\cale}$.
For an unpriced online bribery problem, we will postpend
the problem name with a ``[$k$]'' to define the version where
as part of the problem definition itself the bribery limit is---in
contrast with the above unpriced problem---not part of
the input
but rather is fixed to be the value $k$.
For example, $\onlinesystembk{\cale}{k}$ denotes the
unpriced, unweighted bribery problem where the number of voters who can
be bribed is set not by the problem input but rather is limited to be
at most $k$.

Why might these
``[$k$]'' cases be interesting and worthwhile to
study?  For many of our problems, we will show that
the versions without a bound on
the number of bribes are extraordinarily complex, namely,
PSPACE-complete. It is natural to wonder if such simpler versions
might drop to less extremely high levels of complexity,
and indeed we will see that that is the case.
So this is a type of parameterized investigation, though a somewhat
unusual one.
Also, 
in various situations one might naturally have both
a budget bound and a fixed upper bound on
the number of bribes that one can do.
For example, suppose a kingdom, to form each valuable alliance,
will need to always
offer in marriage a child of the monarch, plus some amount of treasure
that varies depending on the country being bribed into an alliance.
One might well suppose that the number of children of the monarch is
limited by some relatively small number, such as ten, and that
problem might
better model the problem if such bounds reduce the complexity, as we will
see that they seem to.  

Note that in each of the  ``[$k$]'' variants,
we tacitly are altering the definition of OBS from its standard
5-tuple,
$(C, V, \sigma, d, k)$,
to instead the 4-tuple
$(C, V, \sigma, d)$; that is because for these cases, the $k$ is
fixed as part of the general problem itself, rather than being a
variable part of
the individual instances.  For priced ``[$k$]'' variants,
there will be both a limit (being a variable part
of the input) on the total price of the bribes and a fixed as part
of the general problem itself
limit on the number of voters who can be bribed.

Of course, there are some immediate relationships that hold between
these eight problems.  One has to be slightly careful since there is a
technical hitch here.  We cannot for example simply claim that
$\onlinesystembk{\cale}{k}$ is a subcase of
$\onlinesystempbk{\cale}{k}$.  If we had implemented the unpriced case
by still including prices in the input but requiring them all to be 1,
then it would be a subcase.  But regarding both prices (weights), our
definitions simply omit them completely from problems that are not
about prices (weights).  In spirit, it is a subcase, but formally it
is not.  Nonetheless, we can still reflect the relationship between these
problems, namely, by stating how they are related via polynomial-time
many-one reductions. (We could even make claims regarding more restrictive
reduction types, but since this paper is concerned with complexity classes
that are
closed downwards under polynomial-time many-one
reductions, there is no
reason to do so.)  The following proposition (and the connections that follow from
it by the transitivity of polynomial-time many-one reductions) captures
this.

\begin{proposition}\label{p:reductions}
  \begin{enumerate}
    \item
For each $k>0$ and for each election system $\cale$,
  \begin{enumerate}
    \item 
      $\onlinesystembk{\cale}{k} \manyone \onlinesystempbk{\cale}{k}$,
    \item 
      $\onlinesystembk{\cale}{k} \manyone \onlinesystemwbk{\cale}{k}$,
\item $\onlinesystempbk{\cale}{k} \manyone\onlinesystempwbk{\cale}{k}$, and 
\item $\onlinesystemwbk{\cale}{k} \manyone\onlinesystempwbk{\cale}{k}$.
\end{enumerate}
\item The above item also holds for the case when all of its
  problems are changed to their destructive versions.
\item The above two items also hold for the case when
  all the ``$[k]$''s are removed (e.g., we have 
  $\onlinesystemb{\cale} \manyone \onlinesystempb{\cale}$).
\end{enumerate}  
\end{proposition}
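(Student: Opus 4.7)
The plan is to establish each of the four displayed reductions by a simple padding construction in which the missing ingredient (prices in the priced target, weights in the weighted target) is filled in with a trivial default value, and to observe that the same construction handles the destructive versions and the ``$[k]$''-free versions with only notational changes.

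First I would handle (a), $\onlinesystembk{\cale}{k} \manyone \onlinesystempbk{\cale}{k}$. Given an unpriced unweighted OBS $(C,V,\sigma,d)$ for the source, build a priced unweighted OBS $(C,V',\sigma,d,k)$ by copying $V$ verbatim but inserting a price of $1$ into every voter's tuple, and setting the budget equal to the fixed constant~$k$. Under the target problem the fixed bribery-count cap is also $k$, and the budget of~$k$ at unit prices permits exactly up to $k$ bribes; hence the briber's legal moves and their induced winner sets coincide with those in the source instance, so the answer is preserved. Similarly for (c), append a weight of~$1$ to every voter's tuple in an instance of $\onlinesystempbk{\cale}{k}$; by the paper's definition of the weighted winner problem, a weighted election in which every vote has weight~$1$ is the very same election, under $\cale$, as its unweighted analogue, so the winner sets at every leaf of the game tree are unchanged and YES/NO is preserved.

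Next I would handle (b) and (d), which introduce weights rather than prices. For (b), $\onlinesystembk{\cale}{k} \manyone \onlinesystemwbk{\cale}{k}$, insert weight $1$ into every voter's tuple and leave everything else as is; again weight-1 weighted elections agree with the unweighted election under $\cale$. For (d), $\onlinesystemwbk{\cale}{k} \manyone \onlinesystempwbk{\cale}{k}$, insert a price of $1$ into every voter's tuple and set the budget to $k$; the legal sequences of bribes in the target are precisely the legal sequences (i.e., those hitting the fixed cap of at most $k$ bribed voters) in the source, and the final weighted elections are identical, so the reduction is correct. In all four cases the only work is to walk through the input and insert a constant-size field per voter and, where needed, supply the budget component, which is obviously polynomial time.

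For part~2 (the destructive versions) the exact same constructions apply: the reductions leave the candidate set, votes, distinguished candidate $d$, and briber's preference $\sigma$ untouched, and the briber's goal, whether constructive or destructive, is a function only of those plus the final winner set, which has been argued to be preserved. For part~3 (no ``$[k]$''), the OBS carries a variable bound rather than a fixed one, but the construction is identical: in the unpriced-to-priced reductions, simply copy the input bound $k$ into the budget slot after installing unit prices; in the unweighted-to-weighted reductions, install unit weights and copy the bound through unchanged. There is really no obstacle here; the closest thing to a subtlety, which is worth flagging when writing the formal proof, is the tuple-shape mismatch between the various OBS types noted in the discussion preceding the proposition---the reduction must produce tuples of the correct arity (5-tuples for $V_{<u}$, 4-tuples for the middle component, 3-tuples for $V_{>u}$ in the priced-weighted case, and so on), but since this is a purely syntactic reformatting it adds no real content to the argument.
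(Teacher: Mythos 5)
Your proposal is correct and takes essentially the same approach the paper intends: the paper states this proposition without an explicit proof, treating it as immediate from the preceding discussion that the unpriced (resp., unweighted) problems would literally be subcases of the priced (resp., weighted) ones if one ``had implemented the unpriced case by still including prices in the input but requiring them all to be~$1$,'' so that only the syntactic tuple-reformatting stands in the way. Your unit-price/unit-weight padding reductions---with the budget set to the bribe bound and with attention to the arity mismatch of the OBS tuples---are precisely that argument written out, for all three parts.
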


One might ask why our model reveals the votes only as the voters vote,
but makes the voters' prices/weights known in advance.  The answer is
that in most natural situations of weighted voting, the weights are no
secret.  For example, it is well known how many votes each
state has at a political nominating convention.
And though people generally do not publicize how easily they can be
bribed, in our model for the priced cases, we are assuming that the
briber is acting with knowledge of each voter's price, perhaps due to
familiarity with the voter.  Similarly, we are taking the order that
voters vote as being known in advance, as also is the case in many
central situations, from roll-call votes by state at political
nominating conventions to the briber---a CS department
chairperson---meeting with faculty members one
at a time in his or her office to solicit (and perhaps bribe) their
votes on some central issue.
\lahselfnote{Stable for now and probably forever.  But just so this info is available for the future: We could move this paragraph earlier if the model is discussed earlier.  But for now, it is fine here, so leave it here.  That might change, but only if during the refereeing we have to do a restructuring, such as putting an example much earlier in the paper.  But having this discussion this late, though it has downsides, also has the big upside of the reader by now truly knowing much more and so being better able to appreciate and value such discussions.}

\subsection{A Brief Discussion of Some Upcoming Proofs}
\label{sec:discussion-of-upcoming-proofs}

Finally, let us mention and briefly discuss some of the novel proof
approaches needed to obtain our results.

The $\pilevel{2k+1}$ upper bounds that we will show in
Sections~\ref{sec:general-upper-bounds-unlimited}
and~\ref{sec:general-upper-bounds-limited} are far less
straightforward than upper bounds in the polynomial hierarchy
typically are.  Since bribes can occur on any voter (until one runs
out of allowed bribes), and so a yes-no decision has to be made, even
for the case of at most $k$ bribes, there can be long strings of
alternating existential and universal choices in the natural
alternating Turing machine programs for the problems.  And so there is
the threat that one can prove merely a PSPACE upper bound.

However, in Section~\ref{sec:result-about-altern} we prove a more
general result about alternating Turing machines that, while perhaps
making polynomially 
many alternations between existential and universal choices,
make most of
the existential choices
in a boring way (the exact restriction will be 
defined rigorously in that section).
Basically, we show that in the
relevant setting one can pull much of the existential guessing
upstream and make it external to the alternating Turing machine, and
indeed one can do so in such a way that one transforms the problem
into the disjunction of a polynomial number of uniformly generated
questions about actions of alternating Turing machines each of which
itself has at most $2k+1$ alternation blocks.  From that, we establish
the needed upper bound, both for the relevant abstract case of
alternating Turing machines and for our online bribery problems.

Regarding our result in Section~\ref{sec:match-lower-bounds} that
there are election systems, with simple winner problems, such that each
of the abovementioned upper-bounds is tight (so PSPACE-completeness or
$\pilevel{2k+1}$-completeness holds), there is
a substantial, novel challenge that the proof here has to overcome.
Namely, to prove for example $\pilevel{2k+1}$-hardness, we generally
need to reduce from quantified boolean formulas with particular
quantifiers applying to particular variables.  However, in online
bribery, the briber is allowed to choose where to do the bribing.
This in effect corresponds to having a formula with clusters of
quantified variables, yet such that we, as we attempt to prove theorems
related to these structures, don't have control over which quantifiers
are existential and which are universal.  Rather, in effect what the
online bribery setting will test is whether there exists an assignment
(consistent with the number of bribes allowed---which limits the
number of existential quantifiers one can set) of each quantifier to
be either existential or universal, such that for that
quantifier-assignment the formula evaluates as true.  (This is not at
all the same as quantifier exchange.  In quantifier exchange, the
exchanged quantifiers move around together with their associated
variables.)

However, we handle this by showing how to construct a new formula that
builds in protection against this setting.  In particular, we note
that one can take a quantified boolean formula and turn it into one
such that, in this Wild West setting of quantifier assignment, the new
formula can be made true by a legal (i.e., having at most as many
$\exists$ quantifiers as the original formula) quantifier assignment
exactly if the original formula is true.

\section{General Upper Bounds and Matching Lower Bounds}\label{sec:general-upper-bounds}
Even for election systems with simple winner problems, the best
general upper bounds that we can prove for our problems reflect an
extremely high level of complexity.

One might
wonder whether
that
merely is a weakness in our upper-bound
proofs.  However, in each case, we provide a matching completeness
result
proving that these really are the hardest problems in the
classes their upper bounds put them in.

However, in Section~\ref{sec:online-brib-spec}, we will see that for
many specific natural, important systems, the complexity is
tremendously lower than the upper bounds, despite the fact that the
present section shows that there exist systems that meet the upper
bounds.

\subsection{The General Upper Bound, Without Limits on the Number of Bribes}\label{sec:general-upper-bounds-unlimited}
This section covers upper bounds for
the case when any bribe limit/bribery budget is
passed in through the input---not hardwired into the problem
itself.
\begin{theorem}\label{t:pspace-general}
  \begin{enumerate}
  \item
    For each election system $\cale$ whose winner problem in the unweighted
  case is in
  polynomial time (or even in polynomial space),
each of the problems
  $\onlinesystemb{\cale}$,
  $\onlinesystemdb{\cale}$,
$\onlinesystempb{\cale}$, and 
$\onlinesystemdpb{\cale}$
is in $\pspace$.
\item
      For each election system $\cale$ whose winner problem in the weighted
  case is in
  polynomial time  (or even in polynomial space), each of the problems
$\onlinesystemwb{\cale}$,
$\onlinesystemdwb{\cale}$,
$\onlinesystempwb{\cale}$, and
$\onlinesystemdpwb{\cale}$
is in $\pspace$.
\end{enumerate}
\end{theorem}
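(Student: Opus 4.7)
The plan is to exhibit each of the eight online bribery problems as computable by an alternating Turing machine running in polynomial time, and then invoke the classical theorem that alternating polynomial time equals $\pspace$. The online, sequential bribery game has a natural $2n+1$-alternation structure, where $n$ is the number of voters in $V_{>u}$: starting from the magnifying-glass moment on voter $u$, the briber existentially commits to an action regarding $u$ (either ``do not bribe'' or ``bribe, casting a chosen preference order $\sigma'$ over $C$''); then for each subsequent voter in turn the voter's preference order is universally revealed and the briber existentially commits to an action regarding that voter. At each existential move we restrict the choice to legal actions: in unpriced problems the running count of bribes must not exceed the input bound $k$, and in priced problems the running total of the prices of bribed voters must not exceed $k$. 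Both quantities are updatable on the fly and stored in polynomial space.

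After the final voter is processed the machine holds a concrete vote profile $U'$ on $C$. In part~1 it invokes the polynomial-time unweighted winner algorithm for $\cale$; in part~2 it invokes the polynomial-time weighted winner algorithm, which can use the weights carried along in the OBS\@. The machine then accepts iff $W_{\cale}(C,U')$ satisfies the constructive or destructive goal condition spelled out in Section~\ref{sec:online-brib-sequ} relative to $\sigma$ and $d$. Each existential and each universal branch is labeled by a yes/no bribery flag together with at most one preference order over $C$ (or one bit-vector in the approval case), and so is a polynomially long string; the update of the running budget, vote list, and voter pointer is polynomial-time work. With alternation depth $2n+1$ and polynomial per-node work we obtain alternating polynomial time, hence membership in $\pspace$.

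The one subtlety is the ``polynomial space'' clause on the winner problem, where the leaf computation may take superpolynomial time and so a pure alternating polynomial-time simulation is not available. Here I would instead simulate the same $2n+1$-level AND/OR game tree directly by a deterministic $\pspace$ machine performing depth-first evaluation: at each node enumerate the legal actions one at a time, recurse, combine by OR at existential levels and AND at universal levels, and reuse the work space between siblings; at each leaf call the polynomial-space winner subroutine, again reusing space. The recursion stack stores only one root-to-leaf path, each frame encoding the incremental OBS update in polynomial space; together with the polynomial-space winner subroutine this keeps the total space polynomial. The main thing to watch is that the running bribery count or bribery cost is threaded correctly through the recursion and that the branching enumerations (over $|C|!$ preference orders, or over $2^{|C|}$ approval bit-vectors) are performed in place rather than materialized, but both are routine; I expect no real obstacle in this theorem, since the real difficulty of the paper lies in pushing the $\pilevel{2k+1}$ upper bound of Section~\ref{sec:general-upper-bounds-limited} below the $\pspace$ bound proved here.
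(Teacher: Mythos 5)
Your proposal is correct, and for the main case (polynomial-time winner problem) it is exactly the paper's argument: cast each of the eight problems as a polynomial-time alternating Turing machine computation---existential moves for the briber's legal choices, universal moves for the revealed votes, a deterministic winner/goal check at the leaves---and invoke Chandra--Kozen--Stockmeyer to conclude membership in $\pspace$. The only place you diverge is the ``or even in polynomial space'' clause: the paper handles it by observing that the generalized problems lie in $\pspace^{\pspace}$ (with polynomially length-bounded queries) and invoking $\pspace^{\pspace}=\pspace$, whereas you unwind this into a direct deterministic depth-first evaluation of the AND/OR game tree, reusing space across siblings and across calls to the polynomial-space winner subroutine. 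The two are morally the same argument---your simulation is essentially what a proof of the oracle identity (or of $\mathrm{AP}\subseteq\pspace$) does internally---but your version is self-contained and avoids having to state the caveat about the oracle-query model, while the paper's is more concise by delegating to a known closure fact. Both are sound; your bookkeeping (polynomial recursion depth, polynomial-size frames, in-place enumeration of the exponentially many preference orders, threading of the bribe count/budget) is exactly what is needed to make the direct simulation go through.
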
  

\begin{proof}
  Consider first the case in which the winner problem is in polynomial
  time.  For that problem, each of the eight problems can clearly be
  solved by (what is known as) an alternating polynomial-time Turing
  machine.
  It follows from the same paper of Chandra, Kozen, and
  Stockmeyer that defined alternating polynomial-time
  Turing machines~\cite{cha-koz-sto:j:alternation}
  that each is in $\pspace$, simply from the problems' definitions.

  One can see in various ways
  that the eight problems remain in $\pspace$ even if their winner problem
  is merely assumed to be in $\pspace$.  Perhaps the simplest way to see
  that is
  that it follows from the above combined with the fact that
  $\pspace^{\rm PSPACE} = \pspace$,
  in the model in which oracle queries
  are themselves polynomially length-bounded, since all eight of our problems
  when generalized to allowing $\pspace$ winner problems are,
  in the model in which oracle queries
  are polynomially length-bounded,
in 
  $\pspace^{\rm PSPACE}$.~\end{proof}

\subsection{The General Upper Bound, With Limits on the Number of Bribes}\label{sec:general-upper-bounds-limited}

Turning to the case where in the problem the number of bribes has
a fixed bound of $k$, these problems fall into the $\pilevel{2k+1}$
level of the polynomial hierarchy.  That is not immediately obvious.
After all, even when one can bribe at most $k$ times, one still for
each of the current and future voters seems to need to explore the
one-bit-per-voter decision of whether to bribe the voters (plus in
those cases where one does decide to bribe, one potentially has to
explore the exponential---in the number of candidates---possible votes
to which the voter can be bribed).  On its surface, for our problems,
that would seem to say that the number of alternations between
universal and existential moves that the natural polynomial-time
alternating Turing program for our problem would have to make is about
the number of voters (since for each voter we are asking whether 
there exists a legal bribing decision such that for all possible 
votes of the remaining voters, the briber's goal can be met)---a bound that would not leave the problem in any
fixed level of the polynomial hierarchy, but would merely seem to put
the problem in $\pspace$.

So these problems are cases where even obtaining the stated upper
bound is interesting and requires a twist to prove.  The twist is as follows.
On the surface the exploration of these problems has an unbounded
number of alternations between universal and existential states in
the natural, brute-force alternating Turing machine program.  But
for all but $k$ of the existential guesses on each accepting
path, the guess is a boring one,
namely, we guess that regarding that voter we don't bribe.
We will show, by proving a more general result about alternating
Turing machines and restrictions on the structure of their maximal
existential move segments (i.e., maximal sequences of existential guesses) along accepting paths, a 
$\pilevel{2k+1}$ upper bound on our sets of interest.
In some sense, in terms of being charged
as to levels of the polynomial hierarchy, we will be showing
that if for
a certain collection of 0-or-1 existential
decisions one on each accepting path chooses 0 all but a fixed
number of times (although for the other times one
may then make many more nondeterministic
choices), one can manage to in effect
not be charged at all for the guessing acts
that guessed~0.

We know of only one result in the literature that is anything like
this.  That result, which also came up in the complexity of online
attacks on elections, is a result of Hemaspaandra, Hemaspaandra, and
Rothe~\cite{hem-hem-rot:j:online-voter-control}, where in the context
not of bribery but of voter control they showed that for
each fixed $k>0$ it holds that, for each polynomial-time alternating Turing
machine $M$ whose alternation blocks are each one bit long and that for at
most $k$ of the existential blocks guess a zero, the language accepted
by $M$ is in $\conp$.

In contrast, in the present
paper's case we are in a far more complicated situation, since in bribery our
existential blocks are burdened not just by 1-bit bribe-or-not
decisions, but for the cases when we decide to try bribing, we need to
existentially guess what bribe to do.  And so we do not stay in
$\conp$ regardless of how large
 $k$ is, as held in that earlier case.
 But we show that we can at least limit the growth to at most $2k+1$
 alternating quantifiers---in particular, to the class $\pilevel{2k+1}$.
And since we
later provide problems of this sort that are complete for
$\pilevel{2k+1}$, our $2k+1$ is optimal unless the polynomial
hierarchy collapses.

We will approach this in two steps.  First,
as Section~\ref{sec:result-about-altern},
we will prove the result
about alternating Turing machines.  And then,
as Section~\ref{sec:upper-bound-results},
we will apply that to
online bribery in the case of
only globally fixed numbers of bribers being allowed.

\subsubsection{A Result about Alternating Turing Machines}\label{sec:result-about-altern}

Briefly put, an alternating Turing machine~\cite{cha-koz-sto:j:alternation}
(aka~ATM) is a generalization of
nondeterministic and conondeterministic computation.
We will now briefly review the basics
(see~\cite{cha-koz-sto:j:alternation} for a more complete
treatment).
An ATM can make both universal and existential choices.  For a
universal ``node'' of the machine's action to evaluate to true, all
its child nodes (one each for each of its possible choices) must
evaluate to true.  For an existential ``node'' of the machine's action
to evaluate to true, at least one of its child nodes (it has one child
node for each of its possible choices) must evaluate to true.  A
leaf of the computation tree (a path, at its end) is said to
evaluate to true if the path halted in an accepting state and is
said to evaluate to false if the path halted in a nonaccepting state.
(As our machines are time-bounded, all paths halt.)  Without loss of
generality,
in this paper we assume that each universal or existential node has
either two children (namely, does a universal or existential split
over the choices 0 and 1; we will often call this a 
``1-bit move'') or has exactly one child (it does a
trivial/degenerate
universal or existential choice of an element from the one-element set
$\{0\}$; we will often call this a ``0-bit move'').  
The latter case is in effect a deterministic move, except
allowing degenerate $\forall$ steps of that sort will let us put a
``separator'' between otherwise contiguous $\exists$ computation
segments.
Of course, long existential guesses can be done in this model,
for example by guessing a number of bits sequentially.
An
ATM accepts or rejects based on what its root
node evaluates to (which is determined inductively in the way described
above).

\begin{definition}\label{d:weight}
  Consider a path $\rho$ in the tree of an ATM\@.  The \emph{weight}
  of that path is as follows.  Consider all maximal segments of
  existential nodes with their guesses along the path.  (As mentioned above, we may
  without loss of generality, and do, assume that each nonleaf node is
  $\exists$ or $\forall$, although perhaps a degenerate such node
  in the way mentioned above.)
  The weight of path $\rho$
  is its number of maximal existential segments such that the concatenation
  of the bits guessed in that segment is not the 1-bit string 0 (i.e., the number
of maximal existential segments of length at least 2 plus the number of
maximal existential segments of length 1 with guess 1).
\end{definition}

Let us illustrate this, as Figure~\ref{f:left}.  In the figure,
the illustrated path (the leftmost one at the left edge of the tree)
has weight 0; it has three maximal existential segments, but each
is of length one and makes the guess 0. If we change the last $\forall$ to $\exists$
we have a path of weight 1; it has two maximal existential segments, 
the first one is of length 1 and makes the guess 0, and so does not count, but the
second one is of length 3, and so does count.
\begin{figure}[!tbp]
   \ctikzfig{fig-left-path}
      \caption{\label{f:left}A weight 0 path in the tree of an
        ATM\@.}
\end{figure}

With this definition in hand, we can now state our key theorem showing
that limited weight on accepting paths for ATMs simplifies the complexity of the
languages accepted.  The result is one where one may go back and forth
between thinking it is obvious and thinking it is not obvious.  In
particular, note that even on accepting paths of weight
at most $k$, it is completely possible that the number of alternations
between existential and universal nodes may be far greater than $k$
and may be far greater than $2k+1$, and indeed may grow unboundedly as
the input's size increases (and this might naturally lead one to
worry that perhaps even 
a machine with
bounded accepting-path weight 
potentially might 
accept a $\pspace$-complete language).
What the theorem below is saying is that
despite that, machines with bounded weight on their accepting paths
still accept only $\pilevel{2k+1}$ sets.  

\begin{theorem}\label{t:2k+1}
  Let $k\geq 0$ be fixed.  Each polynomial-time ATM $M$ such that
  on no input does 
  $M$ have an accepting path of weight strictly
  greater than $k$
  accepts a
  language in $\pilevel{2k+1}$.
\end{theorem}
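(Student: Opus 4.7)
The plan is to express $L(M)$ directly as a $\pilevel{2k+1}$ predicate of the shape
\[
\forall a_0\,\exists b_1\,\forall a_1\,\exists b_2\,\cdots\,\exists b_k\,\forall a_k\; R(x,a_0,b_1,\ldots,a_k),
\]
where each $a_i$ is a polynomial-length bit string read off as the sequence of $\forall$-player moves for the $(i+1)$-st ``phase'' of a simulation of $M$ on $x$, and each $b_j$ encodes both the root-to-node address of the $j$-th nontrivial $\exists$-segment and the sequence of $\exists$-choices made within that segment. The polynomial-time predicate $R$ walks down $M$'s computation tree: it consumes bits of the current $a_i$ at $\forall$-nodes, plays the trivial choice at $\exists$-nodes unless the current address matches the next unused $b_j$'s address (in which case it applies the $b_j$-encoded choices and advances to the next phase), and accepts iff the reached leaf is accepting. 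A ``null'' sentinel value for $b_j$ handles plays in which $\exists$ uses fewer than $k$ nontrivial segments.

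For completeness, assume $M$ accepts $x$, so a winning $\exists$-strategy $\sigma$ in $M$ exists and, by the weight-at-most-$k$ hypothesis, every $\sigma$-play ends in an accepting path with at most $k$ nontrivial $\exists$-segments. Given arbitrary $(a_0,\ldots,a_k)$, I define the composite $\forall$-strategy $\tau$ that plays the bits of $a_i$ during phase $i+1$, trace $\sigma$ against $\tau$, and set each $b_j$ to the address and choices of the $j$-th nontrivial segment actually encountered (with ``null'' for leftovers); the traced play is accepting by choice of $\sigma$, so $R$ holds. For soundness I use the contrapositive combined with determinacy of finite game trees: if the $\pilevel{2k+1}$ sentence holds, then for every $\forall$-strategy $\tau$ in $M$, setting $a_0=\cdots=a_k$ to encode $\tau$ yields witnessing $b_j$'s whose induced simulation traces an accepting play of some $\exists$-strategy $\sigma_\tau$ against $\tau$; hence every $\tau$ is beaten in $M$, and determinacy forces the root of $M$'s computation tree on $x$ to evaluate to True, i.e., $M$ accepts $x$.

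The main obstacle is reconciling $M$'s potentially unbounded alternation depth with the fixed $2k+1$ alternation blocks of $\pilevel{2k+1}$: one must justify that collapsing each long run of $\forall$-nodes together with their interspersed trivial $\exists$-segments into a single string $a_i$, and collapsing a whole nontrivial $\exists$-segment (itself possibly making polynomially many internal guesses) into a single string $b_j$, loses no relevant information. The weight-at-most-$k$ hypothesis is precisely what guarantees at most $k$ phase boundaries on any accepting play, and routing soundness through determinacy rather than through explicit extraction of a uniform $\sigma$ from the family of $\sigma_\tau$'s sidesteps the delicate issue of reconciling strategies that may disagree off-play. An alternative presentation inducts on $k$, peeling one $\forall\exists$ block per step and using Fact~\ref{f:dtt-closure} to absorb the ``phase $0$ already reached an accepting leaf'' disjunct; this matches the overview's ``disjunction of uniformly generated questions about alternating Turing machines each of which itself has at most $2k+1$ alternation blocks'' framing.
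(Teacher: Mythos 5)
Your proposal is correct in substance, but it is not the paper's proof; it takes a genuinely different, and in fact more robust, route. The paper argues via Fact~\ref{f:dtt-closure}: on input $x$ it lists all polynomially many bit-vectors $s$ with at most $k$ ones and asks, for each, whether a modified simulation of $M$ accepts, where the $i$th maximal existential segment met along a path is simulated freely if $s_i=1$, and is required to be a single $0$-guess (replaced by a degenerate universal step, with rejection otherwise) if $s_i=0$; each such query is decided by an ATM with at most $2k+1$ alternation blocks, hence lies in $\pilevel{2k+1}$ by Chandra--Kozen--Stockmeyer, and downward closure under $\dttred$ finishes. You instead write $L(M)$ as a single $\pilevel{2k+1}$ predicate whose existential blocks guess the \emph{address} (as well as the contents) of the next nontrivial segment, and you obtain soundness from determinacy of the finite evaluation game. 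The difference matters: in the paper's queries the pattern of free segments is fixed by $s$ before any universal bit is seen and is keyed only to the per-path ordinal index of the segment, whereas your $b_j$ is quantified after $a_0,\dots,a_{j-1}$, so the location of the $j$th deviation may depend on the universal moves already made. A winning strategy of $M$ may genuinely need that dependence. For example (take $k=1$), consider the machine $\forall u_1\, \exists e_1\, \forall u_2\, \exists e_2$ that accepts exactly the paths with $u_1=0 \land e_1=1 \land e_2=0$ or with $u_1=1 \land e_1=0 \land e_2=1$: every accepting path has weight $1$ and $M$ accepts, yet none of the admissible vectors $s\in\{00,10,01\}$ makes the paper's modified machine accept, so the equivalence underlying the paper's reduction (stated there without verification) fails on such machines, while your formula handles them by letting $b_1$'s address depend on $a_0$. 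Your determinacy step is likewise essential and correct: because each $\exists b_j$ is chosen after seeing all of $a_{j-1}$, including bits beyond the deviation address it selects, truth of the formula does not directly yield a strategy for $M$'s existential player; but it does show that every universal strategy is beaten, and Zermelo's theorem then forces the root to evaluate to true.

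Two pieces of phrasing need repair, though neither is a gap in the idea. First, ``setting $a_0=\cdots=a_k$ to encode $\tau$'' cannot be meant literally: a universal strategy is an exponential-size object, and $R$ reads each $a_i$ as a stream of move bits. What your own words ``traces an accepting play \ldots against $\tau$'' require, and what works, is the adaptive choice $a_0:=$ $\tau$'s responses against the default all-$0$ play, $b_1:=$ the formula's witness for that $a_0$, $a_1:=$ $\tau$'s responses in the continuation after the $b_1$-deviation, and so on; this respects the quantifier order since $\forall a_i$ stands after $\exists b_1\cdots\exists b_i$. Second, in the completeness direction, state explicitly that $b_j$ depends only on $a_0,\ldots,a_{j-1}$ (the trace of $\sigma$ up to its $j$th nontrivial segment consumes bits of those strings only), so ``given arbitrary $(a_0,\ldots,a_k)$'' must be read sequentially; and note that a maximal segment in which $\sigma$ guesses several bits that are all $0$ is reproduced by $R$'s default behavior and therefore need not consume a $b_j$, so at most $k$ of your $b_j$'s are ever needed.
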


\begin{proof}
  Let $k\geq 0$ be fixed.  Let $L$ be the language accepted by
  polynomial-time ATM $M$ that has the property that each of its
  accepting paths has weight at most $k$.  Our goal is to prove that
  $L \in \pilevel{2k+1}$.  

  We will do so by proving that there is a set $G \in \pilevel{2k+1}$
  such that $L \dttred G$, i.e., $L$ polynomial-time disjunctively
  truth-table reduces to $G$.  By Fact~\ref{f:dtt-closure},
  it follows that $L \in \pilevel{2k+1}$.

  Let $q$ be a nondecreasing polynomial that upper-bounds the running time
  of $M$.  Let $\pair{\cdot,\cdot}$ be a standard pairing function, i.e.,
  a polynomial-time computable, polynomial-time invertible bijection between
  $\sigmastar \times \sigmastar$ and $\sigmastar$.
  Recall that every step of our ATM $M$ involves either a 0-bit existential
  move
  (which we'll think of basically as existentially choosing 0 from the
  choice palette set $\{0\}$)
  or a 1-bit existential move (which, recall, involves choosing
  one element from the choice palette set $\{0,1\}$ with the machine
  enforcing an ``or'' over the two children thus reached)
   or a 0-bit universal move
  (which we'll think of basically as universally choosing 0 from the
  choice palette set $\{0\}$)
  or a 1-bit universal move.  And the 1-bit moves involve successor
  states hinged on whether the move-choice is a 0 or a 1.
 (As
  Turing machines are standardly defined, there can be (one or multiple)
  successor states to a given state, hinged on a (degenerate or
  nondegenerate) choice.)

  $G$ will be the set of all $\pair{x,s}$ such that all of the 
  list of conditions that we will give below hold relative to $x$ and $s$.
  The intuition here is that $s$ is a
  bit-vector whose $i$th bit controls how the $i$th maximal
  existential segment is handled.  In particular,
  if that $i$th bit is a~1, then the segment moves
  forward unrestrained.  But if that $i$th bit is a~0, then we expect and
  require (and cut off that part of the tree otherwise) the maximal
  existential segment to be a single existential step (either guessing
  a bit from $\{0,1\}$ or the allowed but superfluous existential step
  of guessing a bit
  from the one-element set $\{0\}$) and we 
  basically will (as described
 in part of step~\ref{step:cut} below)
    cut that step out of the tree by replacing it by a trivially 
  universal step.
  Returning to our defining of $G$,
  the set $G$ will be all $\pair{x,s}$ such that all 
  the following claims hold.
  \begin{enumerate}
  \item $x \in \sigmastar$ and $s \in \{0,1\}^{q(|x|)}$.

  \item The number of ``1''s in the bit-string $s$ is at most $k$.
  \item\label{step:cut} $M$ accepts when we simulate it on input $x$ but with the
    following changes in the machine's action.  

    As one simulates $M$ on a given path, consider the first
    existential node (if any) that one encounters.

    If the first bit of $s$ is
    a 1, then for that node we will directly simulate it, and
    on all paths that follow from this one, on all the following
    existential nodes (if any) that are in an unbroken segment of
    existential nodes from this one, we will similarly directly
    simulate them.  On the other hand, if the first bit of $s$ is a~0,
    then (a)~if it is the case that if the current node
    makes the choice 0 then the node that follows it is an existential
    node, then the current path halts and rejects (because something that
    $s$ is specifying as being
    a maximal existential segment consisting of a single~0
    clearly is not); and (b)~if~(a) does not hold (and so the
    node that follows if we make the choice~0 is either
    universal or a leaf), then do not take
    an existential action at the current node but rather implement it
    as a degenerate universal step (namely, a ``$\forall$'' guess over
    one option, namely, 0, matching as to next state and so on
    whatever the existential node would have done on the choice
    of~0).

    If the path we are simulating didn't already end or get cut off
    during the above-described handling of its first, if any,
    maximal existential segment, then continue on until we hit the
    start of its second existential segment.  We handle that exactly
    as described above, except now our actions are controlled not by
    the first bit of $s$ but by the second.

    And similarly for the third maximal existential segment, the fourth,
    and so on.

    All other aspects of this simulation are unchanged from $M$'s own
    native behavior.
  \end{enumerate}

  Note that $G\in\pilevel{2k+1}$.  Why?  Even in the worst of cases for us,
  the modified computation of $M$ starts with a $\forall$ block and then has $k$
  $\exists$ blocks each separated by a $\forall$ block; and then we
  finish with a $\forall$ block.  But then
  our ATM as it does that simulation starts with $\forall$
  and has $2k$ alternations of quantifier type, and thus has $2k+1$ alternation
  blocks with $\forall$ as the leading one.  And so by
  Chandra, Kozen, and Stockmeyer's~\cite{cha-koz-sto:j:alternation}
  characterization of the languages accepted by ATMs with that leading
  quantifier and that number of alternations, this set is in
  \pilevel{2k+1}.

  Finally, we argue that $L \dttred G$.  In particular, note that 
  $L \dttred G$ via the reduction that on input $x$ generates
  every length $q(|x|)$ bit-string having less than or equal to $k$
  occurrences of the bit~1, and as its $\dttred$ output list outputs
  each of those paired with $x$.  This list is easily generated
  and is polynomial in size.  In particular, the number of
  pairs in the list is clearly at most
  $\sum_{0\leq j \leq k} {q(|x|) \choose j}
  \leq \sum_{0\leq j \leq k}  (q(|x|))^j \leq (k+1) (q(|x|))^k$.
  That completes the proof.~\end{proof}

In the above, we focused on maximal existential guess sequences, and
limiting the number of those, on accepting paths, whose
bit-sequence-guessed was other than the string 0.  So we barred from
accepting paths any maximal existential guess sequences that contain a
1 and any that have two or more bits.  We mention in passing that we
could have framed things more generally in various ways.  For example,
we could have made each maximal existential guess be of a fixed
polynomial length and could have defined our notion of a ``boring''
guess sequence not as the string ``0'' but as a string of 0's of
exactly that length.  The $\pilevel{2k+1}$ upper bound holds also in
that setting, via only slight modifications to the proof.

\subsubsection{The Upper-Bound Results Obtained by Applying 
  the Previous Section's Result about Alternating Turing Machines%
}\label{sec:upper-bound-results}
We now establish our upper bounds on online bribery.

\begin{theorem}\label{t:parameterized-case-upper-bounds}
  \begin{enumerate}
  \item\label{p:k-uw}
For each $k \in \{0,1,2,\dots\}$, and for 
each election system $\cale$ whose winner problem in the unweighted
  case is in
  polynomial time,\footnote{\label{fn:1}Unlike
    Theorem~\ref{t:pspace-general}, we cannot allow the winner problem
    here to be in $\pspace$ and argue that the rest of the theorem
    holds unchanged.  However, we can allow the winner problem here to
    even be in $\np\cap\conp$, and then the rest of the theorem holds
    unchanged.  The key point to notice to see that that holds is---as
    follows immediately from the
    fact that
    $\np^{\np\cap\conp} = \np$~\cite{sch:j:low}---that
    for each $k \geq 0$, 
    $\np \cap \conp$ is $\pilevel{2k+1}$-low, i.e.,
    that $\left({\pilevel{2k+1}}\right)^{\np\cap\conp} =  \pilevel{2k+1}$.}
 each of the problems
  $\onlinesystembk{\cale}{k}$,
  $\onlinesystemdbk{\cale}{k}$,
$\onlinesystempbk{\cale}{k}$, and 
$\onlinesystemdpbk{\cale}{k}$
is in $\pilevel{2k+1}$.
\item\label{p:k-w}
For each $k \in \{0,1,2,\ldots\}$, and for 
each election system $\cale$ whose winner problem in the weighted
  case is in
  polynomial time,\footnote{As in the case of Footnote~\ref{fn:1},
    the rest of the theorem remains unchanged even if we relax
    the ``polynomial time'' to instead be ``$\np\cap\conp$.''}
each of the problems
$\onlinesystemwbk{\cale}{k}$,
$\onlinesystemdwbk{\cale}{k}$,
$\onlinesystempwbk{\cale}{k}$, and
$\onlinesystemdpwbk{\cale}{k}$
is in $\pilevel{2k+1}$.
\end{enumerate}
\end{theorem}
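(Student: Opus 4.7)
The plan is to prove both parts uniformly by building, for each of the eight problems, a polynomial-time alternating Turing machine $M$ all of whose accepting paths have weight at most $k$ in the sense of Definition~\ref{d:weight}, and then invoking Theorem~\ref{t:2k+1} to conclude that the accepted language lies in $\pilevel{2k+1}$. The ATM $M$ will process the voters sequentially in their voting order, starting at the magnifying-glass voter $u$.

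For each voter $v_i$ whose preference order is already known to the briber at that point---either because $v_i$ appears in $V_{<u}$, or $v_i = u$, or because $v_i$'s vote was universally guessed in a prior step---$M$ performs a single-bit existential guess $b_i \in \{0,1\}$ indicating whether to bribe $v_i$. If $b_i = 0$, this lone bit is the entire existential segment for $v_i$; if $b_i = 1$, $M$ continues existentially (as a contiguous block of existential steps) and guesses a replacement preference order for $v_i$. Immediately thereafter $M$ takes a $\forall$ step: a genuine universal guess of $v_{i+1}$'s to-be-revealed preference order if $v_{i+1}$ exists and is not already known, and otherwise a degenerate ($0$-bit) universal step. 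This universal step is crucial because it terminates the existential segment belonging to $v_i$, so that $v_i$'s existential activity is a maximal existential segment in its own right. After the last voter is processed, $M$ deterministically checks that the number of bribes is at most $k$, that the total price is within budget (in the priced cases), runs the polynomial-time (weighted or unweighted, as appropriate) winner algorithm for $\cale$, and accepts exactly if the briber's goal is met.

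For weight analysis, note that the maximal existential segment associated with voter $v_i$ on a surviving path is either the one-bit string $0$ (when $b_i = 0$) or a string whose first bit is $1$ (when $b_i = 1$); in the latter case the bit-string is not the one-bit string $0$ and so contributes $1$ to the path's weight, while in the former case it contributes $0$. Thus the weight of any accepting path equals the number of voters actually bribed along it, which by the $[k]$ restriction is at most $k$. Theorem~\ref{t:2k+1} then delivers the claimed $\pilevel{2k+1}$ upper bound for all eight problems simultaneously.

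The main obstacle is really just the careful bookkeeping described above: guaranteeing that a ``don't bribe'' choice is genuinely an isolated one-bit existential segment containing only $0$ (so Definition~\ref{d:weight} scores it as weight~$0$), while a ``bribe'' choice together with its possibly long description of the replacement preference order is packaged as a single maximal existential segment whose leading bit is $1$. Inserting the separating $\forall$ step between consecutive voters (degenerate when no real universal revelation is needed) is what enforces this structural invariant. For the strengthened version in Footnote~\ref{fn:1}, where the winner problem is only assumed to lie in $\np\cap\conp$, the only modification is to replace the deterministic winner check by an oracle call; since $({\pilevel{2k+1}})^{\np\cap\conp} = \pilevel{2k+1}$, which follows from Sch\"oning's relativizable identity $\np^{\np\cap\conp} = \np$, the $\pilevel{2k+1}$ bound is preserved.
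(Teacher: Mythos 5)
Your proposal is correct and follows essentially the same route as the paper's proof: encode the online-bribery game as a polynomial-time ATM in which a ``don't bribe'' decision is an isolated one-bit existential guess of $0$, a ``bribe'' decision is an existential segment beginning with $1$, deterministic work and vote revelations are handled by (possibly degenerate) universal steps---taking care, as you do, that the separators and deterministic portions are universal rather than existential so that each non-bribe remains a maximal existential segment equal to the one-bit string $0$---and then invoke Theorem~\ref{t:2k+1}. The only difference is organizational: the paper runs the construction for the priced, weighted (and destructive) cases, obtains the unpriced weighted cases via Proposition~\ref{p:reductions}, and then notes the construction goes through verbatim in the unweighted setting, whereas you apply the construction to all eight problems directly, which is equally valid.
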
  

\begin{proof}
Let $k\geq 0$ be fixed.
Let us start by arguing that 
$\onlinesystempwbk{\cale}{k} \in 
\pilevel{2k+1}$. After that, we will show 
the remaining cases of part~\ref{p:k-w} and then
we will show part~\ref{p:k-uw}.

As noted (for the case without the bound of~$k$)
in Section~\ref{sec:online-brib-sequ}, what is really going on here
is about alternating quantifiers.  Consider a given input to the problem
$\onlinesystempwbk{\cale}{k}$.
Let the voter under consideration (i.e., $u$) in the focus moment of
that problem
just for
this paragraph be referred to as  $u_1$, and let the ones coming
after it be called, in the order they occur, $u_2$, $u_3$,~$\dots$, $u_\ell$. 
What the membership problem is in essence asking is whether
there \emph{exists} an allowable (within both the price budget and
the global limit of $k$ allowed bribes)
choice as to whether to bribe $u_1$ (and if the decision is to bribe,
then whether there  \emph{exists} a vote to which to bribe $u_1$)
such that, \emph{for each} vote that $u_2$ may then be revealed to have,
there 
\emph{exists} an allowable (within both the price budget and
the global limit of $k$ allowed bribes)
choice as to whether to bribe $u_2$ (and if the decision is to bribe,
then whether there  \emph{exists} a vote to which to bribe $u_2$)
such that, \emph{for each} vote that $u_3$ may then be revealed to
have,~$\dots$,~such that
there \emph{exists} an allowable (within both the price budget and
the global limit of $k$ allowed bribes)
choice as to whether to bribe $u_\ell$ (and if the decision is to bribe,
then whether there  \emph{exists} a vote to which to bribe $u_\ell$)
such that
$W_{\cale}(C,U') \cap \{c \condition c \geq_{\sigma} d\} \neq
\emptyset$ (recall that that inequality says that 
the winner set includes some candidate
that the briber likes at least as much as the briber
likes $d$; $U'$ is here representing the vote set after all
the voting/bribing, as per Section~\ref{sec:online-brib-sequ}'s definitions).

Note that for at most $k$ of the choice blocks associated with
$u_1, u_2, \dots$ can we make the choice to bribe.
 (In fact,
if we have already done bribing of one or more
voters in $V_{< u}$, then our remaining number of allowed
bribes will be less than $k$.)  Keeping that in mind, imagine
implementing the above paragraph's alternating-quantifier-based
algorithm on a polynomial-time ATM\@.  In our model, every step is either
a universal or an existential one, and let us
program up all deterministic computations that are part of the above
via degenerate universal steps.  (We do that rather than
using degenerate existential steps since those degenerate
existential steps would interact fatally with our definition
of maximal existential sequence; we really need those places
where one guesses that one will not bribe to be captured as a
maximal existential sequence of length one with guess bit~0; this
comment is quietly using the fact that when making a 1-bit choice
as to whether to bribe we associate the choice 1 with ``yes bribe
this voter'' and 0 with ``do not bribe this voter.'')
In light of that and the fact that we know that
the weighted winner problem of election system $\cale$ is in~$\p$, the
limit of~$k$
ensures that no accepting path will have weight greater than $k$.
And so by Theorem~\ref{t:2k+1}, we have that
$\onlinesystempwbk{\cale}{k} \in 
\pilevel{2k+1}$.

By the exact same argument,
except changing the test at the end to
$W_{\cale}(C,U') \cap \{c \condition d \geq_{\sigma} c\} =
\emptyset$,
we have that 
$\onlinesystemdpwbk{\cale}{k} \in 
\pilevel{2k+1}$.

From these two results, it follows by Proposition~\ref{p:reductions}
that $\onlinesystemwbk{\cale}{k} \in \pilevel{2k+1}$ and
$\onlinesystemdwbk{\cale}{k} \in \pilevel{2k+1}$.

That completes the proof of part~\ref{p:k-w} of the theorem.
Now, we cannot simply invoke
Proposition~\ref{p:reductions}
to claim that part~\ref{p:k-uw} holds.  The reason is that
part~\ref{p:k-uw}'s hypothesis about the winner problem merely
puts the unweighted winner problem in $\p$, but 
the proof we just gave of part~\ref{p:k-w} used the fact
that for that part we could assume that the
weighted winner problem is in~$\p$.

However, the entire construction of this proof works perfectly
well in the unweighted case, namely, we are only given that
the unweighted winner problem is in $\p$, but the four problems
we are studying are the four unweighted problems of
part~\ref{p:k-uw} of the theorem statement.  So we have that
 each of the problems
  $\onlinesystembk{\cale}{k}$,
  $\onlinesystemdbk{\cale}{k}$,
$\onlinesystempbk{\cale}{k}$, and 
$\onlinesystemdpbk{\cale}{k}$
is in $\pilevel{2k+1}$.  That completes the proof of the theorem.%
{}~\end{proof}

\subsection{Matching Lower Bounds}\label{sec:match-lower-bounds}
For each of the $\pspace$ and $\pilevel{2k+1}$ upper bounds
established so far in this section, we can 
in fact establish a matching lower bound.  We show
that by, for each, proving that there is
an election system, with a polynomial-time winner
problem, such that the given problem is polynomial-time
many-one hard for the relevant
class (and so, in light of the upper-bound results,
is polynomial-time many-one complete for the relevant class).

\begin{theorem}\label{t:lower}
  \begin{enumerate}
  \item\label{p:lower-unbounded}
    For each problem $I$ from this list of problems:  $\onlinesystemb{\cale}$,
$\onlinesystemdb{\cale}$,
$\onlinesystempb{\cale}$, 
$\onlinesystemdpb{\cale}$,
$\onlinesystemwb{\cale}$,
$\onlinesystemdwb{\cale}$,
$\onlinesystempwb{\cale}$, and
$\onlinesystemdpwb{\cale}$,
there exists an (unweighted)  election system $\cale$, whose
winner problem in both the unweighted case 
and the weighted  case is in
polynomial time,
such that $I$ is $\pspace$-complete.
\item\label{p:lower-2}
For each $k \in \{0,1,2,\ldots\}$, and for 
each problem $I$ from this list of problems:
    $\onlinesystembk{\cale}{k}$,
  $\onlinesystemdbk{\cale}{k}$,
$\onlinesystempbk{\cale}{k}$, 
$\onlinesystemdpbk{\cale}{k}$,
$\onlinesystemwbk{\cale}{k}$,
$\onlinesystemdwbk{\cale}{k}$,
$\onlinesystempwbk{\cale}{k}$, and
$\onlinesystemdpwbk{\cale}{k}$,
there exists an (unweighted)  election system $\cale$, whose
winner problem in both the unweighted case 
and the weighted  case is in
polynomial time,
such that $I$ is $\pilevel{2k+1}$-complete.
\end{enumerate}
\end{theorem}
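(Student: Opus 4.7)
The plan is to reduce from quantified Boolean formulas to the online bribery problem by designing a tailor-made election system $\cale$ whose winner problem is computable in polynomial time but which forces the sequential bribery game to simulate QBF evaluation. For part~\ref{p:lower-2}, the source problem is the $\pilevel{2k+1}$-complete class of QBFs of the form $\forall Y_0\,\exists X_1\,\forall Y_1\,\cdots\,\exists X_k\,\forall Y_k\,\phi$; for part~\ref{p:lower-unbounded}, the source is plain TQBF, with the bribery limit supplied through the input and set equal to the number of existential variables of the instance. In either case each variable contributes one voter, the voters arrive in the order dictated by the quantifier prefix, and the candidate set carries, besides a distinguished $d$, enough ``flag'' candidates so that the vote of a voter can encode the value assigned to its variable. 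The winner problem of $\cale$ then simply reads the final vote profile, decodes an assignment from it, evaluates $\phi$ (which is polynomial-time), and rules $d$ a winner iff $\phi$ is satisfied. Universal-variable voters are ``revealed to'' the briber at their magnifying-glass moment with the adversary's intended bit as their default vote; existential-variable voters arrive with a neutral default that must be overwritten by a bribe to contribute any usable bit.

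The main obstacle, flagged already in item~\ref{p:why-cool} of the introduction, is that the briber is free to spend bribes on \emph{any} voter, not only on those standing in for existential variables; this amounts to a ``Wild West'' reassignment of the quantifier types. The fix is the protection construction the authors advertise: before embedding the QBF in the election we replace each universal block by a group of $k+1$ fresh copies of that variable, and we add to $\phi$ consistency clauses stating that for every such group the $k+1$ copies must carry identical values (otherwise the formula is declared false and $d$ loses). Since a legal play allows at most $k$ bribes, the briber cannot flip an entire group (it would cost $k+1$ bribes) and any partial tampering of a group destroys consistency and forfeits the game. Consequently the briber is effectively forced to leave every universal group untouched and to spend their budget on existential-variable voters, where one bribe per variable suffices and exhausts the budget exactly. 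A routine induction on the quantifier prefix then establishes that the briber can win the online bribery game on the produced instance if and only if the original QBF is true.

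The hardest part of executing the plan is making the protection construction fully watertight: I must ensure that no ``creative'' redistribution of bribes, such as partially tampering with one universal group while saving a bribe elsewhere, can subvert the formula. I expect this is handled by coupling the consistency clauses with a padding of \emph{existential} variables as well, so that each existential variable also occupies several voters but with only one copy needing to be bribed to fix its value; a careful bookkeeping argument then shows that any bribery pattern beating the intended one would either exceed the bribe budget or leave an existential variable undetermined (in which case the adversary can pick the wrong value at a later time, since the future revelation decisions still fall under universal quantifiers). The winner problem remains in~$\p$ because the decoding and the evaluation of~$\phi'$ are both polynomial.

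The above yields $\pilevel{2k+1}$-hardness for the unpriced, unweighted, constructive variant $\onlinesystembk{\cale}{k}$. The hardness automatically propagates to $\onlinesystempbk{\cale}{k}$, $\onlinesystemwbk{\cale}{k}$, and $\onlinesystempwbk{\cale}{k}$ by Proposition~\ref{p:reductions} together with Fact~\ref{f:m-closure}. For the four destructive versions the election system is built with the complementary convention (all candidates other than $d$ win exactly when $\phi$ is \emph{false}, so preventing $d$'s enemies from winning coincides with satisfying $\phi$); the same reduction and the destructive clauses of Proposition~\ref{p:reductions} finish the job. Part~\ref{p:lower-unbounded} is obtained by the same template, now reducing from TQBF: the input carries the bribery budget $k$ chosen to be the number of existential variables, the padding parameter inside~$\cale$ is taken to be $k+1$ (still polynomial in the input), and PSPACE-completeness follows by pairing the resulting PSPACE-hardness with the matching upper bound of Theorem~\ref{t:pspace-general}.
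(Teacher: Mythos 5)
Your protection gadget is broken, and it is broken in exactly the direction that matters. You replace each universal variable by $k+1$ copies and add consistency clauses, declaring the augmented formula false (so $d$ loses) whenever some group of copies is inconsistent. But the votes of unbribed future voters are not fixed ``defaults''---they sit under the universal quantifier and are chosen adversarially---and the election system sees only the final vote profile, so it cannot distinguish inconsistency introduced by the briber from inconsistency introduced by the adversary. The adversary therefore simply reveals inconsistent votes inside the universal groups; the formula is then false no matter what the briber does (the adversary can sabotage more groups than $k$ bribes can repair, and every bribe spent on repair is one not spent on an existential variable), so the briber loses even when the source QBF is true, and the ``QBF true $\Rightarrow$ briber succeeds'' direction of your reduction collapses. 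Flipping the convention (inconsistency $\Rightarrow$ true) fails symmetrically, since then one stray bribe wins the game outright. The paper's ``preinsulated cousin'' is shaped precisely to avoid handing the adversary anything it can falsify unilaterally: it adds one indicator variable $b_i$ per quantifier block and conjoins $\bigwedge_{\{i \,\mid\, Q_i = \exists\}} b_i$ to $F$, which punishes the briber for \emph{failing to bribe} at the designated existential positions (an unbribed $b_i$ there gets set to false by the adversary), is harmless at universal positions, and---because the budget equals the number of existential blocks---leaves no headroom for stray bribes elsewhere.

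There is a second, independent gap in part~\ref{p:lower-2}: your granularity is wrong. With one voter per variable and the bribe limit hard-wired to $k$, the briber can fix the value of at most $k$ existential \emph{variables}, but a $\pilevel{2k+1}$-complete QBF has $k$ existential \emph{blocks}, each containing unboundedly many variables (if you restrict to single-variable blocks, the formula has only $2k+1$ variables in total and its evaluation is in $\p$, so no hardness can come from such instances). Your padding idea makes this worse rather than better, since it multiplies the number of voters a bribe would have to reach. The paper instead uses one voter per quantifier block and decodes an assignment to the \emph{entire} block from that single voter's preference order over sufficiently many candidates (via pairwise lexicographic comparisons), so one bribe sets one whole existential block and the budget $k$ is exactly right. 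Your PSPACE case sidesteps the counting issue because there the budget travels in the input, but it still rests on the spoofable consistency gadget, so it does not go through either.
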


\begin{proof}
  All sixteen upper bounds follow from
  Theorems~\ref{t:pspace-general}
  or~\ref{t:parameterized-case-upper-bounds}.  So we need only prove the matching lower bound in
  each case.

We focus on the problem
  $\onlinesystembk{\cale}{k}$ and will discuss the other fifteen cases 
in Section~\ref{s:fifteen}. 
   We will show that there is an election system $\cale$, whose
  unweighted and weighted winner problems are in polynomial time,
  for which this problem is $\pilevel{2k+1}$-hard.

  To do this, we will insofar as possible be inspired by 
    Hemaspaandra, Hemaspaandra, and Rothe's~\cite{hem-hem-rot:j:online-manipulation}
    work on online manipulation, and will insofar
  as possible adopt its notations and phrasings.  This will
  help the novel challenge here to better stand out.  That novel
  challenge is the
  one mentioned in list item~\ref{p:why-cool} on
  page~\pageref{p:why-cool} of the current paper.  In the manipulation case, we can by
  shaping the output of the reduction (from a
  $\pilevel{2k+1}$-complete set) control which voters are
  manipulators, and so can match and control the quantifier structure
  of the quantified boolean formula whose truth we are trying to
  determine via an online manipulation question.  However, in online
  bribery, the briber can bribe any voters the briber wants (within the
  other constraints).  This gives the briber quite a lot of freedom---enough
  to in effect shuffle which quantifiers are which in the boolean formula
  we are trying to in effect simulate; but that would cause chaos and would
  break our simulation's connection to the formula.

  We will handle this
  by doing our simulation in effect on a transformed version of the formula
  that is immune to quantifier shuffling, thanks to the transformed
  formula being
  rigged so that any assignment as to which quantifiers are chosen to be
  $\exists$ and which are chosen to be $\forall$ cannot possibly lead
  to a successful bribe unless the assignment is precisely the one
  that causes the online bribery to in effect correctly model and study the
  underlying formula we care about as to where the quantifiers are.
  (Put somewhat differently, we
  create an online bribery problem where we in effect are forcing the
  hand of the briber as to whom to bribe.)

  We will show that this can be done quite directly, due to the power of
  boolean formulas.  Let us first show how to do this, and then we
  will use this in our proof.

\subsubsection{Transforming the Formula}

  Consider any quantified boolean
  formula, i.e.,
  $(Q_1 \overrightarrow{x_1})
  (Q_2 \overrightarrow{x_2})
  \cdots
  (Q_\ell \overrightarrow{x_\ell}) [ F(\overrightarrow{x_1},
  \overrightarrow{x_2},
  \dots,\allowbreak
  \overrightarrow{x_\ell})]$, where each $Q_i$ is either an $\exists$
  or a $\forall$, each $\overrightarrow{x_i}$ is shorthand for
  a list of boolean variables (with no boolean variable appearing
  in more than one of the $\overrightarrow{x_i}$'s), and $F$ is a
  propositional boolean formula.  Let $j$ be
  the number of $Q_i$ that are $\exists$.
  Now, suppose
  we want to put that formula into a setting where that same interior
  formula $F$ is used, but the $\ell$ quantifiers can be 
  reassigned in every possible way that has at most $j$ of them
  being $\exists$; and then we do an ``OR'' over all the thus-generated
  quantified boolean formulas.  For example, if our original formula
  is 
  \[
    (\forall \overrightarrow{x_1})
  (\exists \overrightarrow{x_2})
  (\forall \overrightarrow{x_3}) [ F(\overrightarrow{x_1},
  \overrightarrow{x_2},
  \overrightarrow{x_3})],
\]
then what we will in fact be assessing in that setting is this statement:
\begin{gather*}
    (\exists \overrightarrow{x_1})
  (\forall \overrightarrow{x_2})
  (\forall \overrightarrow{x_3}) [ F(\overrightarrow{x_1},
  \overrightarrow{x_2},
  \overrightarrow{x_3})] ~\lor \\
    (\forall \overrightarrow{x_1})
  (\exists \overrightarrow{x_2})
  (\forall \overrightarrow{x_3}) [ F(\overrightarrow{x_1},
  \overrightarrow{x_2},
  \overrightarrow{x_3})] ~\lor \\
    (\forall \overrightarrow{x_1})
  (\forall \overrightarrow{x_2})
  (\exists \overrightarrow{x_3}) [ F(\overrightarrow{x_1},
  \overrightarrow{x_2},
  \overrightarrow{x_3})] ~\lor \\
    (\forall \overrightarrow{x_1})
  (\forall \overrightarrow{x_2})
  (\forall \overrightarrow{x_3}) [ F(\overrightarrow{x_1}.
  \overrightarrow{x_2},
  \overrightarrow{x_3})].
\end{gather*}
(Since $\forall$ is never less demanding than $\exists$, the final
disjunct above doesn't affect the formula's truth value, and more
generally, we could just always consider just the quantifier
assignments that have exactly $j$
$\exists$ quantifiers, rather
than those that have $j$ or fewer $\exists$ quantifiers. But there is no
need to use that, so we will just ignore it.)

Notice that our giant disjunctive formula may
very well not be satisfied on the same set of inputs as the original
one, namely, it may be satisfied on additional inputs.
What we'll need in our proof is to
make an easily-computed cousin of the original formula that somehow is
``preinsulated'' from having its truth value changed by the
above type of quantifier (re)assignment, yet
that evaluates to true if and only if the original formula does.  
Let us give that cousin.  If the original formula is (as above)
  \[(Q_1 \overrightarrow{x_1})
  (Q_2 \overrightarrow{x_2})
  \cdots
  (Q_\ell \overrightarrow{x_\ell})  [ F(\overrightarrow{x_1},
  \overrightarrow{x_2},
  \dots,
  \overrightarrow{x_\ell})],\]
then our preinsulated cousin for it is
  \[(Q_1 \overrightarrow{x_1}, b_1)
  (Q_2 \overrightarrow{x_2}, b_2)
  \cdots
  (Q_\ell \overrightarrow{x_\ell}, b_\ell) \Big[
 \Big(F(\overrightarrow{x_1},
  \overrightarrow{x_2},
  \dots,
  \overrightarrow{x_\ell})\Big) ~\land   ~
  \Big(  \bigwedge_{\{ i \, \mid \: Q_i = \exists  \}} b_i\Big)
  \Big].\]
Note that this cousin's propositional statement 
$ \Big(F(\overrightarrow{x_1},
  \overrightarrow{x_2},
  \dots,
  \overrightarrow{x_\ell})\Big) ~\land   ~
  \Big(  \bigwedge_{\{ i \, \mid \: Q_i = \exists  \}} b_i\Big)$ 
  can never evaluate to true unless each $Q_i$ that is an $\exists$ in the
  original formula is an $\exists$ here.    Of course, that is true
  in this naked version of the cousin, since the quantifiers are the
  same and have not been shuffled yet.  However, the key point is
  that if we take this cousin, and as above take the disjunction of
  every possible assignment of its quantifiers that assigns at most
  $j$ of them to be $\exists$, then that new quantified boolean
  formula evaluates to true if and only if the original formula does.
  Further, the transformation from the original formula to the
  cousin (and here we really mean the cousin, not the larger
  disjunctive item) is clearly a polynomial-time transformation and
  does not change the quantifiers, and so in particular if the
  original quantifiers have each $Q_i$ being $\forall$ when $i$ is
  odd and $\exists$ when $i$ is even, then so will the cousin formula
  (and of course the $\ell$ and the $F$ of the cousin formula are the
  same as those of the original formula).

  Basically, we've made a propositional formula that nails down its existential
  quantifiers as to how the formula can possibly be made true, even if
  it has to weather a disjunction over all possible shufflings of its
  quantifiers (even if that shuffling is allowed to---%
pointlessly---decrease the number of
  $\exists$ quantifiers).

  Fix any $k \geq 0$.  
  With the above ``(preinsulated) cousin'' work
  in hand, we can now give our proof that 
  there exists an
  election system $\cale$, whose unweighted and weighted winner problems
  are in polynomial time, such that $\onlinesystembk{\cale}{k}$ is
  $\pilevel{2k+1}$-hard.  We will do so by giving such an election
  system $\cale$ (which depends on $k$)
  and
  giving a 
  polynomial-time many-one reduction to $\onlinesystembk{\cale}{k}$
  from the 
  $\pilevel{2k+1}$-complete problem we will denote $A_{2k+1}$,
  where $A_{2k+1}$ is 
all formulas of the form
  \[
    (\forall \overrightarrow{x_1})
  (\exists \overrightarrow{x_2})
    (\forall \overrightarrow{x_3})
  (\exists \overrightarrow{x_4})
  \cdots
  (\forall \overrightarrow{x_{2k+1}}
  )  [
  F(\overrightarrow{x_1},
  \overrightarrow{x_2},
  \dots, \overrightarrow{x_{2k+1}}
  )]\]
that evaluate to true (here, $F$ is required to be a propositional
boolean formula and as above, the $\overrightarrow{x_i}$'s are
pairwise disjoint variable collections) and such that
(this additional nonstandard requirement clearly can be made without
loss of generality, in the sense that even with this the set
we are defining is clearly $\pilevel{2k+1}$-complete)
at least one variable from within each of the $\overrightarrow{x_i}$'s
occurs in the formula 
$  F(\overrightarrow{x_1},
  \overrightarrow{x_2},
  \dots, \overrightarrow{x_{2k+1}}
  )$.

\subsubsection{The Election System $\cale$}

  Let us give the election system $\cale$ and the polynomial-time
  many-one reduction
  from $A_{2k+1}$ to $\onlinesystembk{\cale}{k}$.
  As mentioned before, we will stay as close as possible to the
  argument line used in the hardness arguments for online manipulation.

Let us first specify our (unweighted) election system $\cale$.  If the
input to $\cale$ is $(C,V)$, the election system will do the following. 
The system will first look at the candidate set and determine which
candidate, let us call it $c$, has the lexicographically smallest among
the candidate names in $C$.  Next, $\cale$ will look at the bit-string $c$
to determine whether it is (i.e., whether it encodes)
a \emph{tiered boolean formula}.  A tiered
boolean formula~\cite{hem-hem-rot:j:online-manipulation} is a formula
whose 
  variable names are each of the form $x_{i,m}$ (which really means a
  direct encoding of a string, such as ``$x_{14,92}$''); the $i,m$
  fields must be positive integers.  If $c$ cannot be validly
  parsed in this way, then we declare no candidates to be
  the winner of the election.  But otherwise, we have in hand
  a tiered formula represented by $c$.  For a given $i$, we will
  think of all the variables $x_{i,m}$ that occur in $c$ as being a ``block''
  of variables (as they will all be falling under a particular
  quantifier).
  Let $\blocks$ denote the maximum value that occurs in the first
  component of any of our $x_{i,m}$ variables that occur in $c$
  (technically, in the formula encoded by $c$, but we will henceforward
  just refer to that as $c$); this will tell us the number of blocks.
  Let $\width$ denote the maximum value that occurs in the second
  component of any of our $x_{i,m}$ variables that occur in $c$.
  Now, $\cale$ will immediately declare that everyone loses unless
  all 
  of the following things hold:
  \begin{enumerate}
  \item The number of voter names in $V$ is at least
    $\blocks + 1 $.
    (That is, each vote consists of a name and an order.  It is possible
    that the same name appears multiple times, perhaps some times with
    the same order---as happens when an unweighted winner problem
    is created from a weighted election that has at least one
    weight that is 2 or more---and perhaps sometimes with different orders.
    But what this condition is saying is that if one considers the set of
    all names that occur in at least one of the votes in $V$,
    that set of names is of cardinality at least $\blocks + 1$.)
\item $\blocks =  2k+1$.

\item The number of candidates in $C$ is
greater than or equal to 
$1 + 2 \cdot \width $.
(This condition is to ensure that
        each vote's preference order is about a large enough number of
        candidates that it can be used to assign all the variables
        in one quantifier block.)
        
        \item No block is unpopulated; that is, for each 
          $b$, $1 \leq b \leq \blocks$, $c$ 
          contains at least one variable $x_{i,m}$ whose
          first component is $b$.
        \end{enumerate}
        Now, $\cale$ will make a list, which we will refer to
        as the
        ``special list,'' of $\blocks$ votes from
        $V$.  $\cale$  will make the list in the following somewhat involved fashion.
        Let $\mathit{Names}$ be the set of all voter names that occur
        in $V$, i.e., it is the set of all names
that are
        the first component of at least one vote in $V$.
        Recall that if we have reached this point, we know that
        $\|\mathit{Names}\| \geq \blocks+1$.
Our list
         will not include any vote whose first component is
        the lexicographically smallest string in $\mathit{Names}$.
        (This feature will be used later in the proof to keep the
        pathological voter $u$ from breaking the proof.)
        For each $i$, $1 \leq i \leq \blocks$, the $i$th vote in our
        special list will be, among all votes whose first component
        is lexicographically the $(i+1)$st smallest string in $\mathit{Names}$,
        the vote with the lexicographically smallest second component.
        Note that, for example,
        even if ``Alice''
        appears many times as the first-component field
        of elements in the list $V$,
        at most one Alice vote will appear in our special list of
        $\blocks$ votes. 

        Now (recall, we are still defining the election system $\cale$'s action; we
        are not here speaking of our online bribery problem),
        $\cale$ will use the vote of the first voter
  in this special list to assign truth values to all
  variables of the form~$x_{1,\wasast}$, will use the vote of the 
  second voter in this special list to assign
  truth values to all variables of the form~$x_{2,\wasast}$, and so on up to the vote of the
  $\blocks$th voter, which will assign truth values to all
  variables of the form~$x_{\blocks,\wasast}$.

  Let us now set out how votes create those assignments.  For this, we
  will use the coding scheme
  from
  Hemaspaandra, Hemaspaandra, and Rothe~\cite{hem-hem-rot:j:online-manipulation}, which is as follows.
  Consider a vote whose total order over $C$ is $\sigma'$ (recall that
  we have that $\|C\| \geq 1 + 2 \cdot \width$).  Remove $c$ (the candidate
encoding the formula) from the
  order $\sigma'$, yielding~$\sigma''$.  Let
  $c_1 <_{\sigma''} c_2 <_{\sigma''} \cdots <_{\sigma''} c_{2 \cdot
    \width}$ be the $2 \cdot \width$ least preferred candidates in
  $\sigma''$.  Then $\cale$ will build a vector in $\{0,1\}^{\width}$
  as follows:
  The $d$th bit of the vector is $0$ if the string that
  names
  candidate $c_{2d - 1}$ (e.g., ``Eilonwy'')
  is lexicographically less than the string
  that names candidate $c_{2d}$
 (e.g., ``Rowella''),
  and this bit is $1$ otherwise.
  Let $\overrightarrow{b_i}$ denote the vector thus built from the $i$th vote (in the above
special list), $1 \leq i \leq \blocks$.  For each variable
$x_{i,m}$ occurring in~$c$, assign to that variable the value
of the $m$th bit
of~$\overrightarrow{b_i}$,
where $0$ represents \emph{false} and $1$ represents
  \emph{true}.  This process has assigned all the variables of~$c$, so $c$
  evaluates to either \emph{true} or \emph{false}.  If $c$ evaluates
  to \emph{true}, everyone wins under system $\cale$, and
  otherwise everyone loses under system $\cale$.  This completes
  our definition of election system~$\cale$.
  Note that the election system $\cale$ that we just defined
  has a polynomial-time winner problem in both the unweighted
  and the weighted cases.  This is because our special list
  was simply constructed and each boolean formula,
  given an assignment for all its variables,
  can easily be evaluated in polynomial time.

\subsubsection{The Reduction}

With our $\cale$ in hand, 
let us now show that $\onlinesystembk{\cale}{k}$ is
$\pilevel{2k+1}$-hard, via
  giving a 
  polynomial-time many-one reduction to $\onlinesystembk{\cale}{k}$
  from the 
  $\pilevel{2k+1}$-complete problem $A_{2k+1}$ that was defined above.

Here is how the reduction works.
Let $y$ be an instance of $A_{2k+1}$, i.e., $y$ is of the form:
  \[
    (\forall \overrightarrow{x_1})
  (\exists \overrightarrow{x_2})
    (\forall \overrightarrow{x_3})
  (\exists \overrightarrow{x_4})
  \cdots
  (\forall \overrightarrow{x_{2k+1}}
  )  [
  F(\overrightarrow{x_1},
  \overrightarrow{x_2},
  \dots, \overrightarrow{x_{2k+1}}
  )]\]
and at least one variable from within each of the $\overrightarrow{x_i}$'s
occurs in $F$.  The above formula may evaluate to true or
may evaluate to false; that is the question we want answered.  (We
here are ignoring syntactically illegal inputs, as they obviously
are not in $A_{2k+1}$ and so can be easily handled.)  Let us transform
$y$ into its cousin formula, as per the above discussion, i.e., we
transform $y$ into the following formula $y'$:
\begin{align*}
\MoveEqLeft[4]    (\forall \overrightarrow{x_1},b_1)
  (\exists \overrightarrow{x_2},b_2)
    (\forall \overrightarrow{x_3},b_3)
  (\exists \overrightarrow{x_4},b_4)
  \cdots
  (\forall \overrightarrow{x_{2k+1}},b_{2k+1}
  )
  \\
  &
    \Big[
 \Big( F(\overrightarrow{x_1},
  \overrightarrow{x_2},
  \dots, \overrightarrow{x_{2k+1}}
  )\Big)  ~\land~
    \Big(b_2 \land b_4 \wedge \dots \wedge b_{2k}\Big)\Big].
\end{align*}
(For the $k=0$ case, simply skip the second conjunct. The odd $b_i$
are superfluous, but do no harm.)
Note that we can view the above $y'$ as being the formula (here
$\Phi$ will be the entire above propositional part):
\begin{align*}
\MoveEqLeft[6] (\forall\, x_{1,1},
x_{1,2}, 
\ldots , x_{1,k_1})\,
(\exists\, x_{2,1},
x_{2,2}, 
\ldots , x_{2,k_2})\, \cdots
(\forall_{2k+1}\, x_{2k+1,1},
x_{2k+1,2}, 
\ldots , x_{2k+1,k_{2k+1}})\,
\\
& [\Phi(x_{1,1},
x_{1,2},
\ldots , x_{1,k_1},
x_{2,1},
x_{2,2},
\ldots , x_{2,k_2}, \ldots ,
x_{2k_1,1},
x_{2k+1,2}, 
\ldots , x_{2k+1,k_{2k+1}})],
\end{align*}
where the quantifier applied to the $x_{i,\wasast}$
variables is $\forall$ if $i$ is odd and is $\exists$ if $i$ is even,
the $x_{i,m}$
are boolean variables,
$\Phi$ is the abovementioned propositional boolean formula,
and note that for each~$i$, $1 \leq i \leq 2k+1$, $\Phi$
will contain at least one
variable of the form~$x_{i,\wasast}$.

Let us, with $y'$ in hand, and viewed in terms of the just-given variable
names and formula~$\Phi$, continue with specifying the reduction, keeping
in mind that this is a preinsulated formula.

Our many-one reduction will map to the instance
$(C,V,\sigma,d,B)$ of 
$\onlinesystembk{\cale}{k}$, specified by the following:
\begin{enumerate}
\item $C$ contains a candidate whose name, $c$, encodes~$\Phi$, and in
  addition $C$ contains $2 \cdot \max(k_1, \ldots , k_{2k+1})$ other
  candidates, all with names lexicographically greater than~$c$---for
  specificity, let us say their names are the $2 \cdot \max(k_1,
  \ldots , k_{2k+1})$ strings that immediately follow $c$ in
  lexicographic order.

\item $V$ contains $2k+2$ voters, 
  $0, 1, 2, \ldots , 2k+1$, who vote in
  that order, where $u = 0$ is the distinguished voter (so there are
  no voters in $V_{<u}$ and there are $2k+1$ voters in $V_{>u}$).
  The voter names will be lexicographically ordered by their
  number, so $0$ is least and $2k+1$ is greatest.
  $u$'s preference order $\sigma$ will actually be irrelevant, because
  since $u$ will have the lexicographically smallest name among all
  voters, $u$'s vote will be ignored by our election system $\cale$;
  but for specificity, let us say $u$'s preference
  order is to simply rank the candidates in lexicographic order.

\item The briber's preference order $\sigma$ is to like
  candidates in the opposite of their lexicographic order.  In
  particular, $c$ is the briber's most preferred candidate.
  And we set $d$ to be $c$ (so the goal of the online bribery
  problem becomes simply to make $c$ be a winner).
\item $B$, the limit on the number of allowed bribes, is $k$.
\end{enumerate}
Note that this 
is a polynomial-time reduction.  And it 
follows from this reduction's construction and 
the definition of $\cale$
 that $y'$ is in $A_{2k+1}$ if
 and only if the thus-constructed $(C, V \sigma, d, B)$ is in
 $\onlinesystembk{\cale}{k}$.

 Why?  The online bribery problem will be asking whether there is some
 way of bribing at most $k$ of the voters so as to make $c$ be
 a winner.  Since this is an online bribery problem, we thus are
 existentially quantifying for each of those at most $k$ bribed voters as to his or her
 cast preference order, and for all the other voters we are universally
 quantifying as to their preference orders.  The preference order
 of the voter named $i$ (who will on this input
 be the $(i+1)$st voter on our ``special
 list'' defined earlier)
 will be controlling the setting of
 the variables in the $i$th block of $\Phi$.

 Now, all possible choices of where to bribe are allowed, as long as
 the total number of bribes is at most $B$ (i.e., at most $k$).
 However, the election system is routing these assigned variables to
 our preinsulated formula $\Phi$, and so by the properties of our
 preinsulated formula, the only possible case that can result in the
 formula being true (and thus all candidates---and so in particular
 candidate $c$---winning, as opposed to all candidates losing) is when
 every even-numbered voter other than voter $0$
 is bribed.  But as there are only $k$
 allowed bribes and there are $k$ even-numbered voters other than
 $0$ (namely, the voters named $2,4,\dots,2k$), those
 bribes already use every
 allowed bribe, and so we know that the only way $c$ can be a winner
 is if every odd-numbered voter and voter $0$ remain unbribed and every
 even-numbered voter other than 0 is bribed.  And so our online-bribery problem is
 in fact testing whether $y \in A_{2k+1}$; $d$ is a winner (in fact,
 all candidates are winners) in the constructed instance
 $(C, V \sigma, d, B)$ if and only if $y \in A_{2k+1}$.

 We thus have shown that  $\onlinesystembk{\cale}{k}$ is
 $\pilevel{2k+1}$-hard, and thus in light of the earlier upper bound,
 have shown that $\onlinesystembk{\cale}{k}$ is
 $\pilevel{2k+1}$-complete.

\subsubsection{The Other Fifteen Cases}
\label{s:fifteen}

 It follows immediately from   Proposition~\ref{p:reductions}
 that $\onlinesystempbk{\cale}{k}$ is
 $\pilevel{2k+1}$-hard, and thus in light of the earlier upper bound
 is  $\pilevel{2k+1}$-complete.

 As to $\onlinesystemwbk{\cale}{k}$, if we use the entire above
 construction and assign to each voter weight 1, we have (keeping in
 mind that for $\cale$ both the unweighted and the weighted winner
 problems are in polynomial time), the thus-altered construction
 shows that 
$\onlinesystemwbk{\cale}{k}$ is
 $\pilevel{2k+1}$-hard, and thus in light of the earlier upper bound
 is
 $\pilevel{2k+1}$-complete.  And from that, the earlier upper bound,
 and Proposition~\ref{p:reductions}, we have that 
 $\onlinesystempwbk{\cale}{k}$ is
 $\pilevel{2k+1}$-complete.

 That covers the four constructive (i.e., not destructive) cases
 of part~\ref{p:lower-2} of the theorem.  

 But the same construction, easily modified for the case of
 unbounded numbers of quantifiers, analogously yields
 PSPACE-hardness results, and thus by the earlier upper bounds
 PSPACE-completeness results, for the four constructive cases
 from part~\ref{p:lower-unbounded} of the theorem.  In particular,
 to show that 
$\onlinesystemb{\cale}$ is
 $\pspace$-hard, and thus in light of the earlier upper bound PSPACE-complete,
 we now map to 
$\onlinesystemb{\cale}$
from the PSPACE-complete set $A_{\qbf}$, defined here as
 all formulas of the form
  \[
    (\forall_1 \overrightarrow{x_1})
  (\exists_2 \overrightarrow{x_2})
    (\forall_3 \overrightarrow{x_3})
  (\exists \overrightarrow{x_4})
  \cdots
  (\forall_z \overrightarrow{x_{z}}
  )  [
  F(\overrightarrow{x_1},
  \overrightarrow{x_2},
  \dots, \overrightarrow{x_{z}}
  )]\]
that evaluate to true (here, $F$ is required to be a propositional
boolean formula, and the $\overrightarrow{x_i}$'s are
required to be pairwise disjoint variable collections, and $z$ is an odd
integer) and such that
at least one variable from within each of the $\overrightarrow{x_i}$'s
occurs in the formula 
$  F(\overrightarrow{x_1},
  \overrightarrow{x_2},
  \dots, \overrightarrow{x_{z}}
  )$.
  Specifying that the leading quantifier is a $\forall$ and that
  the number of quantifiers in our alternating quantifier sequence
  is odd and that at least one variable from each block 
  occurs in $F$ is not the standard version of QBF, but clearly also yields a
  PSPACE-complete set.  And having it be of this form makes it clear
  how to specify $\cale$ (namely, for inputs
  whose $\blocks$ value
  is $2k+1$, we use the version of the above $\cale$ that assumes and enforces
  that $\blocks = 2k+1$) and what reduction to use (namely, given
  a formula with the syntax and properties (except perhaps truth)
  of $A_{\qbf}$, having $2k+1$ alternating
  quantifiers, we use the actions of the reduction for the $\pilevel{2k+1}$ case
  above).  (The reduction's actions are sufficiently uniform and
  simple that what was just mentioned can itself be done in a single
  polynomial-time many-one reduction that handles all odd sequence
  lengths of alternating quantifiers whose first quantifier
  is a $\forall$.)

  So we have handled all eight constructive cases.\footnote{For (no pun intended) completeness, we mention
    that we could have alternatively established the lower bounds for
    $\onlinesystempbk{\cale}{k}$ and
    $\onlinesystempb{\cale}$ 
    via results proved or stated in 
    Hemaspaandra, Hemaspaandra, and Rothe's~\cite{hem-hem-rot:j:online-manipulation}
    work on online manipulation, in light of the fact that one can
    simulate online manipulation by priced online
    bribery (via setting the budget to 1, 
    the price of the manipulators each to~0, and the price
    of the nonmanipulators each to~2).  That paper handles weights differently
    than this paper, and it doesn't provide lower-bound matching 
    results for any of the general-case 
    destructive settings.  However, from what that paper does do one can,
    using the gateway we just mentioned, claim (from that paper's
    stated-without-proof
    result regarding the ``freeform online manipulation
    problem''~\cite[p.~702]{hem-hem-rot:j:online-manipulation})
    the $\pilevel{2k+1}$-hardness of 
    $\onlinesystempbk{\cale}{k}$ for some $\cale$ whose unweighted
    winner problem is in polynomial time, and also one can claim
    (regarding the other part of our theorem) the 
    $\pspace$-hardness of 
    $\onlinesystempb{\cale}$ for some $\cale$ whose unweighted
    winner problem is in polynomial time.}

  The eight
  destructive cases are analogous.  We won't do this in detail,
  but basically as to $\cale$ one does everything as above, 
  except every place that in the definition of an election system
  $\cale$ we said everyone wins one changes that to saying that everyone
  loses, and everywhere we above in the definition of an election
  system $\cale$ said everyone loses one changes that to saying that
  everyone wins.  And as to the reductions, one uses the same
  reductions as above.

  This completes the proof of the theorem.
~\end{proof}

\section{Online Bribery for Specific Election Systems}\label{sec:online-brib-spec}

In this section, we look at the complexity of online bribery
for various natural systems.
For both Plurality and
Approval, we show
that priced, weighted online bribery is
$\np$-complete
but that the election system's other three online bribery variants are in~$\p$.
Since these other three problem variants of nonsequential bribery are
known to be $\np$-complete~\cite{fal-hem-hem:j:bribery},
this also shows that nonsequential bribery can be harder than
online bribery for natural systems.
In addition, we provide complete dichotomy theorems that distinguish 
NP-hard from easy cases for
all our online bribery problems for scoring protocols
and additionally we show that Veto
elections, even with three candidates, have even higher lower
bounds for weighted online bribery, namely $\p^{\np[1]}$-hardness.

The following theorem is useful for proving lower bounds for online
bribery for specific systems.\footnote{For an election
  system~$\mathcal{E}$, $\systemucm{\cal E}$ denotes the
  unweighted coalitional manipulation problem: Given a set $C$
  of candidates, a collection $V$ of nonmanipulative
  voters over $C$, a collection $W$ of manipulative voters
  (who will come in without specified preferences),
  and a
  designated candidate $c \in C$, can we assign
  preference orders over~$C$ to the members of $W$ in such a way that 
  $c$ is a winner of the election $(C,V \cup W)$?  If the voters
  are weighted, we obtain the
  weighted coalitional manipulation problem $\systemwcm{\cal E}$;
  note that the manipulators' weights but not their preferences
  are given in the problem
  instance.  The {destructive} variants of
  these two problems (where the goal is to prevent $c$ from
  being a winner in 
  the manipulated election) are denoted by $\systemducm{\cal E}$ and
  $\systemdwcm{\cal E}$, respectively.
\label{foo:ucm-wcm-ducm-dwcm}}

\begin{theorem}\label{t:m-to-b}
\begin{enumerate}
\item
Nonsequential manipulation reduces to corresponding online bribery.
(So, $\systemucm{\cal E}$ reduces to $\onlinesystemb{\cal E}$,
$\systemducm{\cal E}$ reduces to $\onlinesystemdb{\cal E}$,
$\systemwcm{\cal E}$ reduces to $\onlinesystemwb{\cal E}$, and
$\systemdwcm{\cal E}$ reduces to $\onlinesystemdwb{\cal E}$.)

\item
  Constructive manipulation in the unique winner model\footnote{In the
    \emph{unique winner model}, the goal of a constructive manipulation
    action is to have the designated candidate be the only winner of
    the manipulated election.  In the destructive case, the goal is to
    ensure that the designated candidate is not
    a 
unique winner of the
    manipulated election.}
reduces to corresponding online destructive bribery
(so, $\systemucm{\cal E}$ in the unique winner model
 reduces to $\onlinesystemdb{\cal E}$ and
$\systemwcm{\cal E}$ in the unique winner model
reduces to $\onlinesystemdwb{\cal E}$)
for election systems that always have winners (if there are candidates).

\item
Online manipulation reduces to corresponding online priced bribery.
(So, $\onlinesystemucm{\cal E}$ reduces to $\onlinesystempb{\cal E}$,
$\onlinesystemducm{\cal E}$ reduces to $\onlinesystemdpb{\cal E}$,
$\onlinesystemwcm{\cal E}$ reduces to $\onlinesystempwb{\cal E}$, and
$\onlinesystemdwcm{\cal E}$ reduces to $\onlinesystemdpwb{\cal E}$.)
\end{enumerate}
\end{theorem}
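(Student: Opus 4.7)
My plan is to give three reductions, each tailored to the variant of manipulation on the left-hand side, with a common design principle: place the ``manipulator'' voters inside the online bribery instance so that the briber's unique nondominated strategy is to bribe precisely those voters. In every case, the election system $\cal E$, the candidate set, the nonmanipulator votes, the preferred candidate $p$, and (in the weighted variants) the weights carry over unchanged; the only design decisions are where to place the manipulator voters in the voting order, how to set the budget and any prices, and how to encode the briber's goal via $\sigma$ and $d$.

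For part~1, given a manipulation instance $(C,V_{NM},k,p)$, I would build an unpriced online bribery instance whose past-segment $V_{<u}$ contains all but one nonmanipulator vote, whose magnifying-glass voter $u$ is the remaining nonmanipulator, whose future-segment $V_{>u}$ names $k$ fresh ``manipulator'' voters, and whose bribe limit is $k$; take $\sigma$ with $p$ on top and $d=p$. Correctness rests on a dominance claim: the strategy that bribes exactly the $k$ future voters dominates every feasible alternative. Any unbribed future voter becomes adversary-controlled under the universal quantifier, and spending a bribe on a past or current voter leaves some future voter unbribed, thereby trading a briber-controlled vote for an adversary-controlled one with no net gain---more precisely, specializing the adversary's reveal for that unbribed future voter to ``$u$'s original vote'' converts any successful such strategy into a classical manipulation. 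Under the dominant strategy the briber's free choices are exactly a choice of $k$ votes to add to $V_{NM}$, so the two instances are equivalent; when $V_{NM}$ is empty one prefixes a single dummy nonmanipulator whose vote does not perturb winnerhood.

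For part~2, use the same construction as in part~1 but choose $\sigma$ and $d$ so that $\{c : d\geq_\sigma c\} = C\setminus\{p\}$ (e.g., $p$ strictly at the top of $\sigma$ and $d$ any non-$p$ candidate). Destructive-bribery success becomes $W\cap(C\setminus\{p\})=\emptyset$, i.e., $W\subseteq\{p\}$, and the same dominance argument then shows that the instance is a yes-instance iff the manipulators can force $W\subseteq\{p\}$. This is the main obstacle of the whole theorem: because our framework allows elections to have empty winner sets, $W\subseteq\{p\}$ does not automatically mean $W=\{p\}$. The fix is that for every election system invoked in Section~\ref{sec:online-brib-spec} (Plurality, Approval, Veto, and the scoring rules) a nonempty winner set is guaranteed, so $W\subseteq\{p\}$ and $W=\{p\}$ coincide and the reduction is clean; in the fully general case one would append a sacrificial candidate or voter to guarantee a winner without disturbing the manipulation answer.

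For part~3, the reduction proceeds via prices rather than dominance. Given an online manipulation instance, copy over all structural data, set the price of every manipulator to $0$, the price of every nonmanipulator to $1$, and the bribery budget to $0$. Then bribing any nonmanipulator is infeasible, while arbitrarily many manipulators may be bribed for free, so the set of budget-feasible briber strategies in the constructed instance coincides exactly with the set of online manipulator strategies in the source: at every revealed manipulator the briber may install any vote, and every nonmanipulator's revealed vote is untouchable. The destructive and weighted variants are handled identically by inheriting the destructive/weighted parameters and copying voter weights where applicable, and no dominance argument is needed because the prices have already forced the briber's hand.
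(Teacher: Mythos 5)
Your part~3 is literally the paper's reduction (prices $0$ for manipulators, $1$ for nonmanipulators, budget $0$) and is correct, and your unweighted part~1 argument is sound. The genuine gap is in the weighted variants, which parts~1 and~2 explicitly include ($\systemwcm{\cal E}$ to $\onlinesystemwb{\cal E}$, $\systemdwcm{\cal E}$ to $\onlinesystemdwb{\cal E}$, and the weighted half of part~2). Your construction makes $u$ a nonmanipulator and rests on the dominance claim that spending a bribe on $u$ instead of on a future manipulator is ``no net gain''; with weights that claim is false, and in fact your reduction itself---not merely its proof---is incorrect. Concretely, let $\cale$ be the system in which all candidates win iff at least $4$ of the (multiplicity-expanded) votes equal some fixed order $v^*$, and no one wins otherwise; take a single nonmanipulator $u$ of weight $5$ voting $v_u \neq v^*$ and a single manipulator of weight~$3$. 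Weighted manipulation fails (the profile can contain at most $3$ copies of $v^*$), yet in your bribery instance the briber bribes $u$ to $v^*$ and wins no matter what the adversary does with the weight-$3$ voter, so a no-instance maps to a yes-instance. Your specialization trick visibly breaks here: the unbribed future voter's copy of $v_u$ carries weight $3$, not $u$'s weight $5$, so the resulting profile is not of the form ``$V_{NM}$ plus manipulator votes.''

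The repair is exactly the paper's construction: take the manipulators to be $\{u\}\cup V_{>u}$ (so $u$ is itself a manipulator, with an irrelevant planned vote) and set the bribe limit so that all of $\{u\}\cup V_{>u}$ may be bribed. Then every outcome of the bribery game---under any briber strategy and any adversary reveals---consists of the nonmanipulators' votes plus arbitrary votes in the manipulator positions with the manipulator weights, so equivalence with weighted or unweighted manipulation is immediate; no dominance lemma is needed, and your $V_{NM}=\emptyset$ edge case also disappears (a ``dummy nonmanipulator whose vote does not perturb winnerhood'' need not exist for an arbitrary $\cale$). Two further remarks: your part~2 observation that the destructive goal $W\subseteq\{p\}$ is strictly weaker than the unique-winner condition $W=\{p\}$ when empty winner sets are allowed is a genuine subtlety that the paper's own one-line proof also glosses over, and restricting to systems with guaranteed nonempty winner sets (true of all the theorem's applications) is the right way to handle it; but your claimed fully general fix via a ``sacrificial candidate or voter'' is unsupported for arbitrary $\cale$, for the same reason your dummy-voter fix is.
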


\begin{proof}
For the first part, 
let $V_{<u}$ be the nonmanipulators and
let $\{u\} \cup V_{>u}$ be the manipulators.
The vote of $u$ is irrelevant.
Let $k$ be the number of manipulators 
(meaning all voters in $\{u\} \cup V_{>u}$ can be bribed).
In the constructive case, the preferred candidate $p$ becomes the
designated candidate, which is ranked first in $\sigma$.
In the destructive case, the despised candidate $d$ becomes the
designated candidate, which is ranked last in $\sigma$.

For the second part, 
let again be $V_{<u}$ the nonmanipulators and
$\{u\} \cup V_{>u}$ the manipulators.
The vote of $u$ is irrelevant.
Let $k$ be the number of manipulators 
(meaning again that all voters in $\{u\} \cup V_{>u}$ can be bribed).
The ranking $\sigma$ puts the preferred candidate $p$ first. The other
candidates are ranked in lexicographic order, and the designated
candidate is the lexicographically smallest candidate in $C - \{p\}$, i.e.,
the candidate ranked second in $\sigma$.

For the last part, 
set the price of the manipulators to 0, the price of the nonmanipulators
to 1, and set $k$ to~0.%
{}~\end{proof}
It is interesting to note that, assuming $\p \neq \np$, 
bribery does not reduce to corresponding online bribery,
not even for natural systems. For example, 
Approval-Bribery is NP-complete~\cite[Theorem 4.2]{fal-hem-hem:j:bribery},
but we will show below in Theorem~\ref{t:approval} that
$\onlinesystemb{Approval}$ (and
even $\onlinesystemwb{Approval}$ and $\onlinesystempb{Approval}$)
are in $\p$.

We end this section with a simple observation about
unpriced, unweighted online bribery.
\begin{observation}\label{o:bribers-last}
For unpriced, unweighted online bribery, it is always optimal (meaning that if
the briber can reach his or her goal, it can be reached in this way) to bribe
the last $k$ voters (we don't even have to handle $u$ in a special
way). This implies that unpriced, unweighted online bribery
is certainly reducible to unweighted online
manipulation, and so we inherit those upper bounds.
\end{observation}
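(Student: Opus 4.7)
The plan is to establish the observation in two stages: (1) show that, for unpriced unweighted online bribery, the bribery subgame at position $u$ with remaining budget $b$ has the same value as the online manipulation subgame in which the last $\min(b, n-u+1)$ remaining voter positions are manipulators; and (2) observe that the corresponding online-manipulation instance is polynomial-time constructible from the bribery instance (copy candidates, voter order, magnifying-glass moment, and already-cast votes of $V_{<u}$; designate the last $k - j$ remaining voters as manipulators, where $j$ is the number of bribes already used in $V_{<u}$, and everyone else a nonmanipulator). The second stage is routine; the content is in the first.

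For the equivalence, let $V_B(u, T, b)$ and $V_M(u, T, b)$ denote the two subgame values (with $T$ the multiset of cast votes so far). I proceed by reverse induction on $u$. The base case $u = n + 1$ is immediate, and the case $b \geq n - u + 1$ is trivial because the briber can and optimally does bribe every remaining voter, matching the all-manipulator game. For $b \leq n - u$, expand
\[
V_B(u, T, b) \;=\; \min_{\tau_u}\,\max\!\bigl(V_B(u{+}1, T{+}\tau_u, b),\; \max_\sigma V_B(u{+}1, T{+}\sigma, b{-}1)\bigr),
\]
use the inductive hypothesis to replace each $V_B(u{+}1, \cdot, \cdot)$ by $V_M(u{+}1, \cdot, \cdot)$, and commute $\min_{\tau_u}$ past the outer $\max$ via the identity $\min_x \max(A(x), B) = \max(\min_x A(x), B)$ (valid because the second argument does not depend on $\tau_u$). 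The first term becomes $\min_{\tau_u} V_M(u{+}1, T{+}\tau_u, b) = V_M(u, T, b)$ since $u$ is a nonmanipulator slot in the $V_M$ game.

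The crux is a \emph{manipulator-monotonicity lemma}: for every $T$, every vote $\tau$, every vote $\sigma$, and every $b \geq 1$ with $u \leq n - b$,
\[
V_M(u{+}1, T{+}\tau, b) \;\geq\; V_M(u{+}1, T{+}\sigma, b{-}1).
\]
Given this, taking $\max_\sigma$ on the right collapses the outer $\max$ in the preceding display onto its first term, yielding $V_B(u, T, b) = V_M(u, T, b)$. I prove the lemma by a direct strategy transfer: the left-hand game has one extra manipulator slot, namely position $n - b + 1$, which is a nonmanipulator slot on the right. Given a winning right-hand strategy $S_R$, the left player casts $\sigma$ at position $n - b + 1$ and at every later manipulator position mirrors $S_R$'s response to the simulated right-adversary that reveals exactly what the real left-adversary revealed at positions up to $n - b$ and plays $\tau$ at position $n - b + 1$. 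The two simulated games end with identical final multisets, so the right-hand winning condition transfers verbatim. The main obstacle is precisely this lemma: one might fear that the bribery adversary could adaptively exploit the pre-committed $\tau$ (a value the adversary itself chose at $u$) in ways no manipulation adversary could, but the unweighted voting rule depends only on the final multiset of cast votes, so the exact multiset-matching above neutralizes the adversary's adaptivity.
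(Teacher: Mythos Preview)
The paper states this observation without proof, so there is nothing to compare against; your argument is the natural way to make the claim rigorous, and the backward induction together with the swap-based strategy transfer for the monotonicity lemma cleanly establishes $V_B = V_M$.

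One point worth highlighting: your proof correctly isolates an assumption the paper leaves implicit, namely voter-anonymity. Your state variable $T$ is a multiset, and the strategy-transfer step succeeds only because the left and right games terminate with identical \emph{multisets} of cast votes (you swap the $\tau$ cast at position $u$ with the $\sigma$ cast at position $n{-}b{+}1$). For a non-anonymous system---say one in which voter $1$ is a dictator---bribing the last $k$ voters is plainly suboptimal, so the observation as literally stated fails in general. Since every system treated in Section~\ref{sec:online-brib-spec} (Plurality, Approval, scoring rules, Veto) is anonymous, this is a harmless imprecision on the paper's part, and you have done the right thing by surfacing the hypothesis.

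Two minor remarks. First, your recursion for $V_B$ should treat $b=0$ as a separate base case, since the bribe branch $\max_\sigma V_B(u{+}1,T{+}\sigma,b{-}1)$ is then unavailable; this is trivially handled. Second, for stage~(2), whether the magnifying-glass position may be a nonmanipulator depends on the exact formulation of the online-manipulation model in~\cite{hem-hem-rot:j:online-manipulation}; if it cannot, simply split on whether $u$ lies among the last $k-j$ positions (if so, all remaining voters are manipulators and you get an ordinary manipulation instance; if not, $u$'s revealed vote is fixed and can be absorbed into $V_{<u}$, with the reduction target being the online-manipulation question one step downstream).
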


\subsection{Plurality}

In this section, we completely classify the complexity of all
our versions of online bribery for the most important natural 
system, Plurality. In this system, each candidate scores a
point when it is ranked first in a vote and the candidates
with the most points are the winners.  We show that these
problems are NP-complete if we have both prices and weights,
and in P in all other cases. 

The following observation is crucial in our upper
bound proofs:
For $\onlinesystempwb{Plurality}$ and
$\onlinesystemdpwb{Plurality}$,
there is a successful bribery if
and only if there is a successful bribery where all
bribed voters
from $u$ onward vote for the same
highest-scoring desired candidate\footnote{In the constructive
  case we call all members of $\{c \ | \ c \geq_\sigma d\}$---and
  in the destructive case  we call
all members of $\{c \ | \ c >_\sigma d\}$---\emph{desired} candidates, 
where $\sigma$ 
  is the briber's ideal ranking and $d$ the designated candidate.}
and all nonbribed voters
after $u$ vote for
the same so-far highest-scoring undesired candidate.
If $u$ is bribed, we
do not count $u$'s original vote to compute the highest score. If
$u$ is not bribed, then we count $u$'s vote.

\begin{theorem}\label{t:plurality-easy}
$\onlinesystemb{Plurality}$,
$\onlinesystemdb{Plurality}$,
$\onlinesystemwb{Plurality}$,
$\onlinesystemdwb{Plurality}$,
$\onlinesystempb{Plurality}$, and
$\onlinesystemdpb{Plurality}$
are in \p.
\end{theorem}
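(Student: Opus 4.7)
The approach is to exploit the canonical form observation stated just before the theorem: because Plurality is a positional rule with exactly one point per vote going to the top-ranked candidate, it is never worse for the briber to pool all of the ``from $u$ onward'' bribed votes onto a single desired candidate $p^*$, and it is never worse for the adversary (who controls the votes of all unbribed voters strictly after $u$) to pool all of their unbribed points onto a single undesired candidate $q^*$. Thus the briber's decision at $u$ reduces, over only polynomially many alternatives, to (a) choosing a target desired $p^*$, and (b) deciding whether to bribe $u$ (if so, casting $u$'s vote for $p^*$). In the destructive variants the role of $p^*$ is replaced by a fixed undesired $q^*$ that we want to block, but the pooling argument is symmetric.

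With this canonical form in hand, each of the six variants reduces to a polynomial-time check of the briber's future strategy. For the unpriced, unweighted cases $\onlinesystemb{Plurality}$ and $\onlinesystemdb{Plurality}$, Observation~\ref{o:bribers-last} supplies the optimal future strategy: bribe the last $k'$ voters, where $k'$ is the budget remaining after the bribes already cast in $V_{<u}$. The algorithm enumerates each desired candidate $p^*$ and each of the two options for $u$ (bribe $u$ for $p^*$ vs.\ keep $u$'s revealed vote), simulates the resulting tally under the canonical adversarial response (all remaining unbribed voters from $u$ onward vote for some $q^*$), and accepts if any combination succeeds.

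For the weighted, unpriced cases $\onlinesystemwb{Plurality}$ and $\onlinesystemdwb{Plurality}$, since the entire sequence of weights is known in advance, the optimal future strategy (conditioned on the choice at $u$) is to bribe the $k'$ highest-weight voters among those still to come, because each bribed voter contributes its full weight to $p^*$ while any unbribed voter contributes its weight to $q^*$. For the priced, unweighted cases $\onlinesystempb{Plurality}$ and $\onlinesystemdpb{Plurality}$, the analog is greedy: bribe the cheapest voters from $u$ onward subject to the remaining price budget, since each bribed voter adds exactly one unit to $p^*$'s tally regardless of who they are. In both settings, once $p^*$ and the bribery set are fixed, the final test against the worst-case adversary reduces to comparing Plurality scores, which is clearly in $\p$.

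The main obstacle is justifying the canonical form carefully across all four combinations (constructive vs.\ destructive and $u$-bribed vs.\ $u$-unbribed) and handling cleanly the edge case in which $u$'s revealed vote is already for a desired candidate, so that not bribing $u$ can be strictly better and must be tried separately. One also has to keep in mind that the adversary is only active for voters strictly after $u$, so $u$'s own (revealed) vote interacts asymmetrically with the pooling argument. Because each enumeration ranges over $O(\|C\|)$ candidates and at most two options for $u$, and because Plurality's winner-determination problem is trivially in $\p$, the overall procedure runs in polynomial time in each of the six cases. The crucial reason the proof stops at these six---and does not cover $\onlinesystempwb{Plurality}$ or $\onlinesystemdpwb{Plurality}$---is that when prices and weights coexist the future-strategy subproblem becomes a knapsack-style optimization, defeating the simple greedy that makes the other six cases tractable.
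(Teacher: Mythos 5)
Your constructive cases follow essentially the paper's own route: invoke the pooling observation (all bribed voters from $u$ onward vote for one best desired candidate, all unbribed voters after $u$ vote for one best undesired candidate), branch on whether $u$ is bribed (counting $u$'s revealed vote only when $u$ is not bribed), choose the future bribees greedily (highest-weight voters in the weighted case, cheapest voters in the priced case, any/the last voters in the unpriced unweighted case), and decide by a single score comparison. Your one departure---enumerating the target $p^*$ over all desired candidates rather than directly taking the highest-scoring desired candidate, as the paper does---is harmless, and your closing remark about why prices-plus-weights breaks the greedy (knapsack-like subproblem) matches the paper's dichotomy.

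However, your treatment of the destructive variants is wrong as stated, and those are three of the six problems the theorem claims. You write that the role of $p^*$ ``is replaced by a fixed undesired $q^*$ that we want to block,'' with the pooling argument ``symmetric.'' That is not what happens. In the destructive case the briber still pools all bribes on a best \emph{desired} candidate---only now the desired set is $\{c \mid c >_\sigma d\}$ rather than $\{c \mid c \geq_\sigma d\}$---the adversary still pools the unbribed voters on the highest-scoring undesired candidate $h$, and the acceptance condition changes from weak to \emph{strict} inequality: the pumped candidate's score must exceed $h$'s score, since a tie makes an undesired candidate a co-winner and the destructive goal fails. ``Blocking a fixed $q^*$'' is neither the right quantification (every candidate in $\{c \mid d \geq_\sigma c\}$ must be excluded from the winner set, not one of them) nor the right mechanism (bribes are never directed at or against $q^*$; exclusion is achieved only by making some acceptable candidate strictly beat the strongest undesired one, which then automatically beats all the weaker undesired ones, whose scores are at most $h$'s). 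The paper's proof makes exactly these two changes in one sentence---replace the desired set by $\{c \mid c >_\sigma d\}$ and replace ``at least'' by ``greater than''---and your argument needs that correction before it establishes the destructive half of the theorem.
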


\begin{proof}
First look at $\onlinesystemwb{Plurality}$.
We are given an OBS $(C, V, \sigma, d, k)$, where $V = (V_{<u}, u, V_{>u})$.
We can bribe successfully if and only if we can bribe successfully and we 
bribe $u$ or we can bribe successfully and we do not bribe $u$.
Let $\Gamma_d = \{c \ | \ c \geq_\sigma d\}$ be the desired candidates and
let $k'$ be the number of voters in $V_{<u}$ that have already been bribed.
If $k' > k$, our instance is illegal and we reject. 
If $\Gamma_d = C$, we can successfully bribe, and we accept.

To check whether we can bribe successfully and bribe $u$, let
$c$ be a candidate in $\Gamma_d$ with highest score in $V_{<u}$.
If $k' \geq k$, 
we can't bribe $u$ and we reject. Otherwise, 
bribe $u$ to vote for $c$ and bribe 
$\min(k - k' - 1, \|V_{>u}\|)$ highest-weight voters in  $V_{>u}$
to vote for $c$. This will give us the score of $c$ after bribery.
Now let $h$ be a candidate in $C - \Gamma_d$
with highest score in $V_{<u}$. 
Assume that all the nonbribed voters in $V_{>u}$ vote
for $h$. Then we can successfully bribe if and only if the score of $c$ is
at least the score of $h$.

To check whether we can bribe successfully and not bribe $u$, let
$c$ be a candidate in $\Gamma_d$ with highest score in $V_{<u} \cup \{u\}$.
Bribe $\min(k - k', \|V_{>u}\|)$ highest-weight voters in  $V_{>u}$
to vote for $c$. This will give us the score of $c$ after bribery.
Now let $h$ be a candidate in $C - \Gamma_d$
with highest score in $V_{<u} \cup \{u\}$.
Assume that all the nonbribed voters in $V_{>u}$ vote
for $h$. Then we can successfully bribe if and only if the score of $c$ is
at least the score of $h$.

For the destructive case, we argue similarly, except that we
let
$\Gamma_d = \{c \ | \ c >_\sigma d\}$
and we are successful if the score of
$c$ is greater than the score of $h$.
Next look at $\onlinesystempb{Plurality}$.
We argue similarly as in the weighted case, except that
we are now looking at bribery budgets instead of bribery limits, and 
so $k'$ is the price of the voters in $V_{<u}$ that are bribed, we make
sure that the price of the bribed voters is not higher than $k'$, and we
bribe the lowest-priced voters.

For completeness, we give the complete proof.
We are given an OBS $(C, V, \sigma, d, k)$, where $V = (V_{<u}, u, V_{>u})$.
We can bribe successfully if and only if we can bribe successfully and we 
bribe $u$ or we can bribe successfully and we do not bribe $u$.
Let $\Gamma_d = \{c \ | \ c \geq_\sigma d\}$ and
let $k'$ be the price of the voters in $V_{<u}$ that are bribed.
If $k' > k$, our instance is illegal and we reject. 
If $\Gamma_d = C$, we can successfully bribe, and we accept.

To check whether we can bribe successfully and bribe $u$, let
$c$ be a candidate in $\Gamma_d$ with highest score in $V_{<u}$.
If $k' + \pi(u) > k$, 
we can't bribe $u$ and we reject. Otherwise, 
bribe $u$ to vote for $c$ and bribe
a lowest-priced voter in $V_{>u}$ to vote for $c$,
as long as the price of the bribed voters is at most $k$.
This will give us the score of $c$ after bribery.
Now let $h$ be a candidate in $C - \Gamma_d$
with highest score in $V_{<u}$. 
Assume that all the nonbribed voters in $V_{>u}$ vote
for $h$. Then we can successfully bribe if and only if the score of $c$ is
at least the score of $h$.

To check whether we can bribe successfully and not bribe $u$, let
$c$ be a candidate in $\Gamma_d$ with highest score in $V_{<u} \cup \{u\}$.
Bribe a lowest-priced voter in $V_{>u}$ to vote for $c$,
as long as the price of the bribed voters is at most $k$.
This will give us the score of $c$ after bribery.
Now let $h$ be a candidate in $C - \Gamma_d$
with highest score in $V_{<u} \cup \{u\}$.
Assume that all the nonbribed voters in $V_{>u}$ vote
for $h$. Then we can successfully bribe if and only if the score of $c$ is
at least the score of $h$.

For the destructive case, we again argue similarly, except that we
let
$\Gamma_d = \{c \ | \ c >_\sigma d\}$
and we are successful if the score of
$c$ is greater than the score of $h$.%
{}~\end{proof}

\begin{theorem}\label{t:plurality-hard}
$\onlinesystempwb{Plurality}$ and
$\onlinesystemdpwb{Plurality}$ are \np-complete,
even when restricted to two candidates.
\end{theorem}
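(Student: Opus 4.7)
The plan is to establish $\np$-membership by collapsing the alternating-quantifier structure of the online game to a single existential guess, and to establish hardness by reductions from \partition. For membership, I would invoke the observation stated just before the theorem: up to replacing losing bribes, the briber may assume that every bribed voter from $u$ onward votes for one fixed highest-scoring desired candidate and every unbribed voter from $u$ onward votes for one fixed highest-scoring undesired candidate. With only two candidates $\{p,q\}$ (say $p$ desired and $q$ undesired), these two choices are forced, so the adversary's worst-case behaviour is already pinned down---every unbribed future voter is assumed to vote for $q$. Since under this regime the briber gains no information by waiting, its online decisions may be pre-committed; an $\np$ machine just guesses a subset $S \subseteq \{u\}\cup V_{>u}$ of voters to bribe, verifies that the sum of their prices, added to what has already been spent in $V_{<u}$, stays within the budget, and then checks in polynomial time that the resulting plurality scores satisfy $s_p\geq s_q$ in the constructive case, or $s_p>s_q$ in the destructive case.

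For $\np$-hardness of $\onlinesystempwb{Plurality}$ I would reduce from \partition: given positive integers $a_1,\ldots,a_n$ with $\sum_i a_i = 2S$, decide whether some subset sums to exactly $S$. The reduction outputs an instance with candidate set $\{p,q\}$, the briber's preferred candidate $p$ ranked top in $\sigma$, $d=p$ (so $\{c \mid c \geq_\sigma d\}=\{p\}$), an empty $V_{<u}$, voters $u=v_0,v_1,\ldots,v_n$ all planning to vote for $q$ with $v_i$ carrying weight and price both equal to $a_i$ (if desired, one can take $u$ to be a separate voter of weight and price $0$ and use $v_1,\ldots,v_n$ for the partition items). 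The budget is $k=S$. Any legal bribery costs $\sum_{v_i\in\text{bribed}} a_i \leq S$ and, because every unbribed future voter is (worst case) voting for $q$, yields $s_p = \sum_{v_i\in\text{bribed}} a_i$ and $s_q = 2S - s_p$; so $s_p\geq s_q$ forces $s_p\geq S$ and hence $s_p=S$. The bribed voters therefore witness a \partition{} solution, and conversely any \partition{} solution is easily turned into a successful bribery.

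For $\onlinesystemdpwb{Plurality}$ the briber needs to prevent $q$ from winning, which in the two-candidate setting means $s_p > s_q$ strictly. I would reuse the same skeleton but seed $V_{<u}$ with a single already-cast, unbribed vote for $p$ of weight $1$; the required inequality becomes $1+\sum_{v_i\in\text{bribed}} a_i > 2S - \sum_{v_i\in\text{bribed}} a_i$, i.e.\ $\sum_{v_i\in\text{bribed}} a_i \geq S$, which together with the budget $k=S$ again pins the bribed subset-sum to exactly $S$. The main obstacle is really the upper-bound argument---justifying that for two candidates the worst-case adversary is fully determined and that the briber consequently loses nothing by pre-committing to a subset; once that reduction to a knapsack-style guess is in hand, the \partition{} reductions are routine.
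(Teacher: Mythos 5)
Your proposal is correct and follows essentially the same route as the paper: NP membership via the observation preceding the theorem (collapsing the online game to a single guess of the bribed set, since both players have oblivious optimal strategies), and NP-hardness via Partition reductions that match the paper's constructions exactly---including the weight-1 seed vote for the desired candidate in $V_{<u}$ to force the strict inequality in the destructive case.
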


\begin{proof}
These problems are in NP, using the observation at the start
of this section: Guess a set of voters to bribe, 
check that their price is within the budget,
let all these bribed voters vote for the same highest-scoring desired candidate,
and let all nonbribed voters vote for the same highest-scoring
undesired candidate.
Accept if a desired candidate wins in the constructive case
and accept if no undesired candidate wins in the destructive case.

To show NP-hardness for the constructive case,
we use the same construction as for nonsequential bribery
with prices and weights. We reduce
from (the standard NP-complete problem)
Partition. Let $s_1, \ldots, s_n$ be a sequence of nonnegative
integers such that $\sum_{i=1}^n = 2S$. 
We map to OBS $(C, V, \sigma, d, k)$, 
where $C = \{d,c\}$, $d >_\sigma c$, the price and weight of the $i$th voter
are both $s_i$, $u$ is the first voter and votes for $c$,
and $k = S$.

For the destructive case, our designated candidate will be $c$, and
$V_{<u}$ consists of one unbribed weight-1 voter who votes for $d$.~\end{proof}

\subsection{Beyond Plurality}

A scoring protocol is
a vector  $\alpha = (\alpha_1, \ldots, \alpha_m)$ of integers
$\alpha_1 \geq \alpha_2 \geq \cdots\geq\alpha_m \geq 0$.
This defines an election system on $m$ candidates
where each candidate earns $\alpha_i$ points for each vote
that ranks it in the $i$th position and
the candidates with the most points are the winners. 

\begin{theorem}\label{t:scoring-dichotomy}
For each scoring vector $\alpha = (\alpha_1, \ldots, \alpha_m)$,
\begin{enumerate}
\item
$\onlinesystempwb{\alpha}$
and $\onlinesystemdpwb{\alpha}$ are in \p\ if $\alpha_1 = \alpha_m$ and
$\np$-hard otherwise;
\item
$\onlinesystemwb{\alpha}$ and
$\onlinesystemdwb{\alpha}$ 
are in \p\ if $\alpha_2 = \alpha_m$ and 
$\np$-hard otherwise; and 
\item
$\onlinesystemb{\alpha}, \onlinesystemdb{\alpha},
\onlinesystempb{\alpha}$, and $\onlinesystemdpb{\alpha}$
are in \p.
\end{enumerate}
\end{theorem}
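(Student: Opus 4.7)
The plan is to treat the three items of the dichotomy separately, using triviality or Plurality-equivalence on the easy side and pulling back an existing hardness result on the hard side. For the easy case $\alpha_1=\alpha_m$ of item~1, every ballot contributes the same $\sum_i \alpha_i$ points to every candidate, so the winner set is always all of $C$; the constructive or destructive goal then reduces to a direct check on $\sigma$ and~$d$. For the easy case $\alpha_2=\alpha_m$ of item~2, shifting every coordinate of $\alpha$ down by $\alpha_m$ turns the rule into a positive multiple of Plurality, so each candidate's score is an affine image of its Plurality score, and the polynomial-time algorithms of Theorem~\ref{t:plurality-easy} apply verbatim: only the top slot of a ballot matters, so the briber's best bribed-ballot ranks a fixed highest-scoring desired candidate first and the worst adversarial ballot ranks a fixed highest-scoring undesired candidate first.

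For the hard sides, if $\alpha_2>\alpha_m$ then weighted (standard, nonsequential) manipulation for $\alpha$ is classically NP-hard~\cite{con-lan-san:j:when-hard-to-manipulate}, so Theorem~\ref{t:m-to-b} yields NP-hardness of $\onlinesystemwb{\alpha}$ (and, using the unique-winner half, of $\onlinesystemdwb{\alpha}$); by Proposition~\ref{p:reductions} the priced versions $\onlinesystempwb{\alpha}$ and $\onlinesystemdpwb{\alpha}$ are NP-hard as well. For the remaining sub-case of item~1, where $\alpha_1>\alpha_2=\alpha_m$, the rule is affinely Plurality, and I lift the Partition reduction from Theorem~\ref{t:plurality-hard} by padding each constructed vote with the $m-2$ extra dummy candidates in a fixed order in positions~$3$ through~$m$; because every dummy ends up with the uniform score $\alpha_m W$ (where $W$ is the total weight) and so cannot beat the designated candidate, this padding does not disturb the Plurality analysis and gives NP-hardness of $\onlinesystempwb{\alpha}$ and $\onlinesystemdpwb{\alpha}$.

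Item~3 is where the main obstacle lies. For $\onlinesystemb{\alpha}$ and $\onlinesystemdb{\alpha}$, Observation~\ref{o:bribers-last} reduces online bribery to online unweighted manipulation with the last $k$ voters as manipulators, and online manipulation for scoring rules is in~P. But Observation~\ref{o:bribers-last} fails in the presence of prices, since skipping an expensive early voter to save budget for cheaper later ones can be strictly better; so for $\onlinesystempb{\alpha}$ and $\onlinesystemdpb{\alpha}$ I would instead give a backward-induction dynamic program over states of the form (next voter index, current score vector, remaining budget). Because voters are unweighted and the $\alpha_i$ are fixed constants, every reachable score vector lies in $\{0,1,\ldots,\alpha_1 n\}^m$ and the state space is polynomial; the briber's best bribed-ballot at a node, and the adversary's worst response at a node, are each polynomial-time computable by the standard ``rank desired/undesired candidates extremally'' arguments for scoring rules, so the recursion has polynomial fan-out. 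Justifying this collapse of exponentially many ballots to a short list of rational choices, together with the DP's correctness, is the step that will require the most care.
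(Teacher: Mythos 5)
Your decomposition is the same as the paper's: the all-equal case is trivial, the case $\alpha_1>\alpha_2=\alpha_m$ is ``in essence Plurality'' and is handled by Theorems~\ref{t:plurality-easy} and~\ref{t:plurality-hard}, and the remaining hardness claims are inherited from weighted manipulation via Theorem~\ref{t:m-to-b} (using the unique-winner variant for the destructive problems, exactly because destructive weighted manipulation for scoring rules is in $\p$) and lifted to the priced, weighted problems by Proposition~\ref{p:reductions}. Two minor repairs on that side: the dichotomy ``$\alpha$-Weighted-Manipulation is $\np$-hard iff $\alpha_2>\alpha_m$'' is due to Hemaspaandra and Hemaspaandra~\cite{hem-hem:j:dichotomy-scoring}, not to~\cite{con-lan-san:j:when-hard-to-manipulate}; and for the unpriced, unweighted part of item~3 the paper does not need any ``online manipulation for scoring rules is in $\p$'' claim---after Observation~\ref{o:bribers-last} it simply brute-forces, since the adversary's $k'$ ballots and the briber's $k$ ballots are multisets over the $m!=O(1)$ ballot types (note that $m$ is a constant fixed by $\alpha$), so there are only polynomially many cases to check. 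Your ``rank desired/undesired candidates extremally'' shortcut is likewise unnecessary---and not in general valid for scoring rules---but harmless, since enumerating all $m!$ ballots gives constant fan-out anyway.

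The genuine gap is in your dynamic program for $\onlinesystempb{\alpha}$ and $\onlinesystemdpb{\alpha}$. You put the \emph{remaining budget} into the DP state, and your polynomiality argument covers only the score-vector coordinate. But prices and the budget are integers coded in binary, so the reachable remaining budgets---the input budget minus subset sums of the prices of already-processed voters---can take exponentially many distinct values (e.g., with prices $1,2,4,\ldots,2^{n-1}$), and your table is then exponentially large; as stated, this is not a polynomial-time algorithm. The paper avoids exactly this by moving the budget from the state into the \emph{value} of the DP: it computes $\pi(s_1,\ldots,s_m,v,k)$, the minimum budget with which the briber can still guarantee the goal when the scores accumulated so far are $(s_1,\ldots,s_m)$, the current voter's revealed ballot is $v$, and $k$ voters remain. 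That state space is polynomial (polynomially many score vectors, since unweighted scores are polynomially bounded, times $m!$ ballots times at most $n$ values of $k$), the recursion takes the better of bribing $u$ (pay $u$'s price plus the cost of the worst continuation over the adversary's next revealed ballot, minimized over the ballot $u$ is bribed to) and not bribing $u$, and at the root one simply compares $\pi$ to the input budget. This threshold reformulation is equivalent to your game because the briber's power is monotone in the available budget. With that change your item~3 argument goes through, and the unpriced, unweighted case follows as a special case by giving every voter price~$1$ and budget~$k$.
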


Note that Theorem~\ref{t:scoring-dichotomy}
implies the lower bound of Theorem~\ref{t:plurality-hard}, but
does not imply Theorem~\ref{t:plurality-easy},
since in that theorem the number of candidates is not fixed.
Theorem~\ref{t:scoring-dichotomy} does apply to 3-candidate Veto, 
for which we will additionally prove higher lower bounds in Theorem~\ref{t:veto}.

\begin{proof}
If $\alpha_1 = \alpha_m$, all candidates are always winners.
If $\alpha_1 > \alpha_2 = \alpha_m$, this is in essence Plurality,
which is handled in the theorems above.
In all other constructive cases, the
hardness follows from the hardness for
$\systemwcm{\alpha}$
from
Hemaspaandra and Hemaspaandra~\cite{hem-hem:j:dichotomy-scoring} and Theorem~\ref{t:m-to-b}.
In all other destructive cases, the
hardness follows from the hardness for
$\systemwcm{\alpha}$ in the unique winner model
from
Hemaspaandra and Hemaspaandra~\cite{hem-hem:j:dichotomy-scoring} and Theorem~\ref{t:m-to-b}.
(Note that
$\systemdwcm{\alpha}$ is easily seen to be in \p.  Basically, to
make $d$ not a winner, we need an $a$ such that $a$'s score
is higher than $d$'s score. So, make all manipulators
vote $a > \cdots > d$, and compute the scores.
So, this does not help us with the lower bound for $\onlinesystemdpwb{\alpha}$.)

For the last case, we first look at the unpriced, unweighted
case, even though the result follows from the priced case.
We do this because the algorithm for this case is much simpler.
Note that it follows from 
Observation~\ref{o:bribers-last} that we can assume that all bribed voters go
last and so we have $k'$ nonbribed voters followed by $k$ bribed voters. Since
there are only a constant number of different votes (since $m$ is a constant), simply brute-force
to determine whether it is the case that for all (polynomially many)
possible $k'$ votes, there are $k$ votes such that a
candidate that is preferred to $d$ is a winner (for constructive)
or such that no candidate that $d$ is preferred to
is a winner.

Note that if we have prices, we can not assume that the bribed voters
come last. We also can not assume that we bribe the cheapest voters,
since later voters have more power than earlier voters, and so
a more expensive later voter could be a better choice to bribe than a
cheaper earlier voter. Still, we can solve the priced cases in 
polynomial time, by using dynamic programming. It is crucial that
the scores for unweighted elections are $O(\log n)$.

We want to compute $\pi(s_1, \ldots, s_m, u, k)$ to be the minimum
budget such that if the score of $c_i$ before $u$ is $s_i$ and 
there are $k$ voters after $u$, the briber can accomplish their goal.
Note that there are a constant number of votes for $u$ and that all other
numbers are $O(\log n)$. Also note
that $\pi(s_1, \ldots, s_m, u, k)$ is the minimum of 
the minimum budget needed when $u$ is bribed and
the minimum budget needed when $u$ is not bribed.
To compute the minimum budget needed when $u$ is bribed,
compute the minimum over all votes $v$ and $v'$
of $\pi(u) + \pi(s'_1, \ldots, s_m', v', k - 1)$, where
$s_i'$ is the score of $c_i$ from all but the last $k$ voters when
$u$ is bribed to vote $v$.
To compute the minimum budget needed when $u$ is not bribed,
compute the minimum over all votes $v'$
of $\pi(s'_1, \ldots, s_m', v', k - 1)$, where
$s_i'$ is the score of $c_i$ from all but the last $k$ voters when
$u$ is not bribed.~\end{proof}

In Veto, each candidate scores a point when it is not
ranked last in a vote and the candidates
with the most points are the winners.
Now let's look at 3-candidate-Veto.

\begin{theorem}\label{t:veto}
\begin{enumerate}
\item
$\onlinesystemb{\mbox{\rm 3-candidate-Veto}}$,
$\onlinesystemdb{\mbox{\rm 3-candidate-Veto}}$,
$\onlinesystempb{\mbox{\rm 3-candidate-Veto}}$, and
$\onlinesystemdpb{\mbox{\rm 3-candidate-Veto}}$ are in \p.
\item
$\onlinesystemwb{\mbox{\rm 3-candidate-Veto}}$ and
$\onlinesystemdwb{\mbox{\rm 3-candidate-Veto}}$ are
$\p^{\np[1]}$-complete.
\item
$\onlinesystempwb{\mbox{\rm 3-candidate-Veto}}$
and
$\onlinesystemdpwb{\mbox{\rm 3-candidate-Veto}}$
are $\p^{\np[1]}$-hard and
in $\deltatwo$ (and we conjecture that they are
$\deltatwo$-complete).
\end{enumerate}
\end{theorem}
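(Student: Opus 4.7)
For Part~1, note that 3-candidate Veto is exactly the scoring rule with vector $\alpha=(1,1,0)$, so Theorem~\ref{t:scoring-dichotomy}(3) immediately gives that each of the four unweighted variants (with or without prices, constructive or destructive) lies in~$\p$.

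For the $\p^{\np[1]}$ upper bound in Part~2, my plan is to exploit the fact that a 3-candidate Veto election is completely determined by the three weighted veto totals $(S_a,S_b,S_c)$, with each vote contributing its full weight to exactly one of these counters. I will analyze the briber-versus-adversary game from voter~$u$ onward and argue that, because the candidate set has size only~$3$, the adversary's optimal strategy is semi-oblivious: future unbribed voters should veto whichever of the at most two ``non-despised'' candidates (in the constructive case; ``non-preferred'' in the destructive case) is most helpful to grow, and this choice depends only on the current state $(S_a,S_b,S_c)$ rather than on the briber's entire history. Dually, the briber's best response is to pick a subset $S$ of future voters (plus possibly $u$) to bribe, each redirected to a single ``pivot'' candidate. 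Under this structural analysis, the nominally $n$-alternation game collapses to one existential subset-choice followed by a worst-case two-way partition of the remaining weights; the resulting question---does there exist $S$ with $|S|_w\le k$ such that no partition of the remaining weights into two buckets keeps the relevant minimum above a computed threshold?---reduces to a single NP query, placing the problem in $\p^{\np[1]}$.

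For the matching Part~2 lower bound, I plan to reduce from a canonical $\p^{\np[1]}$-complete problem, for example an exact-value/comparison variant of Subset-Sum or an appropriate $\textsc{Sat}$/$\overline{\textsc{Sat}}$-style conjunction. The reduction embeds two interlocked combinatorial questions into a 3-Veto election: voters in $V_{<u}$ preset the veto counts so as to calibrate the winning thresholds, while the weights of $u$ and of the voters in $V_{>u}$ encode the instance data. The constructive and destructive cases are handled dually, exploiting the symmetry of 3-Veto (a vote against one candidate contributes nothing to the other two). From this, Part~3's $\p^{\np[1]}$-hardness is immediate via Proposition~\ref{p:reductions}, which provides $\onlinesystemwb{\cale}\manyone\onlinesystempwb{\cale}$ and its destructive analog.

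For the $\deltatwo$ upper bound in Part~3, once Part~2 places the unpriced weighted case in $\p^{\np[1]}$, adding a price budget can be handled by a polynomial-time deterministic algorithm with polynomially many NP oracle queries---binary-searching over the relevant price thresholds, each query being the Part~2 NP subquestion parameterized by the current budget---which sits inside $\p^{\np}=\deltatwo$. The main obstacle will be the Part~2 upper bound: a priori the online game has one alternation per remaining voter, which would only yield PSPACE, and collapsing it to a single NP query relies critically on there being only three candidates, forcing both players' optimal strategies into enough structure that the existential bribery choice and the universal partition adversary-response can be fused into one NP question; the rest of the argument, including the lower bound via instance encoding, I expect to proceed relatively routinely once that structural lemma is in place.
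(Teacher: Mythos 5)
There is a genuine gap, and it sits exactly where you predicted the difficulty would be: the Part~2 upper bound. The question you propose to answer with one oracle query---``does there exist $S$ with $|S|_w\le k$ such that no partition of the remaining weights into two buckets keeps the relevant minimum above a threshold?''---is an $\exists\forall$ question, prima facie in $\sigmatwo$, and you supply no mechanism for collapsing it to a single $\np$ query. Worse, the structural lemma you lean on is false as stated: the adversary's optimal play is \emph{not} a state-based greedy choice. In the case where the briber's goal is to stop $c$ (the briber's least-preferred candidate) from winning (uniquely), the adversary must split the unbribed weight between vetoes of $a$ and vetoes of $b$ so that \emph{both} of $a$ and $b$ end up below $c$; this is a Partition-type balancing problem---precisely the source of $\conp$-hardness here---and a greedy ``veto whichever candidate is most helpful to grow'' adversary is suboptimal for it. What actually rescues the theorem, and is the key idea missing from your proposal, is a case analysis on where the designated candidate $d$ sits in the briber's order $a>_\sigma b>_\sigma c$. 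Two positions are trivial. When the goal is to make $a$ a winner (constructive, $d=a$) or the unique winner (destructive, $d=b$), the \emph{adversary} is forced---every nonbribed voter vetoes $a$ no matter what---so the universal quantifier disappears and this slice of the problem is in $\np$ (and is $\np$-hard via weighted manipulation and Theorem~\ref{t:m-to-b}). When the goal is to stop $c$ (constructive, $d=b$; destructive, $d=c$), the \emph{briber} is forced---an exchange argument shows that bribing the $k$ heaviest voters, all to veto $c$, is optimal---so the existential quantifier disappears and this slice is in $\conp$ (and is $\conp$-hard, its complement being essentially Partition). In no case is a genuine $\exists\forall$ question answered by one query; one of the two quantifiers is always eliminated first.

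This case structure also delivers the lower bound that you leave unspecified: the weighted problem is the union of an $\np$-complete set and a $\conp$-complete set that are P-separable (which case an instance falls into is read off from $\sigma$ and $d$ in polynomial time), and such a union is $\p^{\np[1]}$-complete; no monolithic reduction from a canonical $\p^{\np[1]}$-complete problem is needed, and your sketch of one (``two interlocked combinatorial questions'') has no construction behind it. Note also that if your ``reduces to a single NP query'' were meant as a many-one reduction to an $\np$ set, the problem would lie in $\np$, contradicting its $\conp$-hardness unless $\np=\conp$. Your Part~1 is correct and matches the paper's own remark that it follows from Theorem~\ref{t:scoring-dichotomy}, and your Part~3 plan (polynomially many adaptive $\np$ queries for membership in $\deltatwo$, hardness inherited via Proposition~\ref{p:reductions}) is the paper's approach in spirit---the paper uses oracle-guided greedy selection of a maximum-weight affordable set of voters to bribe, all vetoing $c$, followed by one final query for the universal check---but it, too, rests on the Part~2 structure you have not established.
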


\begin{proof}
The first part follows immediately from Theorem~\ref{t:scoring-dichotomy}.

For the second part, we 
look at different cases
for the placement of the designated candidate in the preference order
$a >_\sigma b >_\sigma c$.

\begin{itemize}
\item
If $d = c$ in the constructive case
or $d = a$ in the destructive case, the problem is trivial.
\item
If $d = a$ in the constructive case, we 
need to ensure that $a$ is a winner, and
if $d = b$ in the destructive case, 
we need to ensure that $a$ is the unique winner.
In both these cases, the nonbribed voters veto $a$, no matter what.
This means that the location of the bribed voters doesn't matter (though
their weights will).
This is $\np$-complete. For the upper bound, 
guess a set of voters to bribe, check that they are within the bribery limit, 
guess votes for the bribed voters, and
have all nonbribed voters veto $a$. NP-hardness
follows from the NP-hardness for weighted manipulation~\cite{hem-hem:j:dichotomy-scoring}
plus the proof of Theorem~\ref{t:m-to-b}.

\item
If $d = b$ in the constructive case, the goal is to not have $c$
win uniquely, and if $d = c$ in the destructive case,
the goal is to have $c$ not win.
In this case, all bribed voters veto $c$ (no matter what).

This is coNP-complete. To show that the
complement is in NP, 
pick the $k$ heaviest voters to bribe. Then check if you can
partition the remaining voters to veto $a$ or $b$ in such a way
that $c$ wins uniquely in the constructive case or
that $c$ wins in the destructive case.
Note that it is always best for the
briber to bribe the $k$ heaviest voters: Swapping the weights of 
a lighter voter to be bribed with a heavier voter not to be bribed
will never make things worse for the briber.
To show hardness, note that the complement is basically
(the standard NP-complete problem) Partition.
\end{itemize}

Putting the three cases together, the unpriced, weighted case is $\p^{\np[1]}$-complete,
since it can be written as the union of a NP-complete set and
a coNP-complete set that are P-separable.\footnote{Two sets $A$ and $B$
  are P-separable if there exists a set $X$ computable in polynomial time
  such that $A \subseteq X \subseteq \overline{B}$
  (see~\cite{gro-sel:j:complexity-measures}).}

It remains to show the third part, i.e., the priced, weighted case. 
Note that this case inherits the $\p^{\np[1]}$-hardness
from the weighted case (in fact, it already inherited this from
online manipulation, using Theorem~\ref{t:m-to-b}). For the $\deltatwo$
upper bound, 
we again look at different cases
for the placement of the designated candidate in the preference order
$a >_\sigma b >_\sigma c$.

\begin{itemize}
\item
If $d = c$ in the constructive case
or $d = a$ in the destructive case, the problem is trivial.

\item
If $d = a$ in the constructive case, we 
need to ensure that $a$ is a winner, and
if $d = b$ in the destructive case, 
we need to ensure that $a$ is the unique winner.
In both these cases, the nonbribed voters veto $a$, no matter what.
This means that the location of the bribed voters doesn't matter (though
their prices and weights will).
All cases are in $\np$:
Guess a set of voters to bribe, check that they are within the budget, 
guess votes for the bribed voters, and
have all nonbribed voters veto $a$.

\item
If $d = b$ in the constructive case, the goal is to not have $c$
win uniquely, and if $d = c$ in the destructive case,
the goal is to have $c$ not win.
In this case, all bribed voters veto $c$ (no matter what).

To show the upper bound for the priced, weighted case, note that we need
to check that there exists a set of voters that can be bribed within
the budget such that if all bribed voters veto $c$, then for all
votes for the nonbribed voters, $c$ is not the unique winner (in the
constructive case) or not a winner (in the destructive case).
This is clearly in $\sigmatwo$. With some care, we can show that it is
in fact in $\deltatwo$. 
First use an NP oracle to determine the largest possible
total weight (within the budget) of bribed voters.
Then determine whether we should bribe $u$, by using the oracle
again to determine the largest possible total weight (within the budget)
of bribed voters, assuming
we bribe $u$. If that weight is the same as the previous weight,
bribe $u$. Otherwise, do not bribe $u$. Repeating this will give us a
set of voters to bribe of maximum weight. All these voters will 
veto $c$. It remains to check that for all votes for the nonbribed voters,
$c$ is not the unique winner (in the
constructive case) or not a winner (in the destructive case).
This takes one more query to an NP oracle.\qedhere
\end{itemize}
\end{proof}

We end this section by looking at 
approval voting.
In approval voting, each candidate scores a point when it is approved
in a vote and the candidates
with the most points are the winners.
Note that approval sets have arbitrary sizes.
Though Approval-Bribery is
NP-complete~\cite[Theorem 4.2]{fal-hem-hem:j:bribery},
we show that $\onlinesystemb{Approval}$ (and
even $\onlinesystemwb{Approval}$ and $\onlinesystempb{Approval}$)
are in $\p$. This implies that even for natural systems, 
bribery can be harder than online bribery (assuming $\p \neq \np$).

\begin{theorem}\label{t:approval}
\begin{enumerate}
\item
$\onlinesystemb{Approval}$,
$\onlinesystemdb{Approval}$,
$\onlinesystempb{Approval}$,
$\onlinesystemdpb{Approval}$,
$\onlinesystemwb{Approval}$, and 
$\onlinesystemdwb{Approval}$
are each in $\p$.
\item
$\onlinesystempwb{Approval}$ and
$\onlinesystemdpwb{Approval}$ are
each $\np$-complete.
\end{enumerate}
\end{theorem}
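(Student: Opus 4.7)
The plan exploits a structural feature of Approval voting: each bribed voter's optimal ballot is independently determined (approve exactly the ``desired'' set $\Gamma = \{c \mid c \geq_\sigma d\}$ in the constructive case, or $\Gamma = \{c \mid c >_\sigma d\}$ in the destructive case), while the worst-case adversarial revealed vote is its complement (approve exactly $C \setminus \Gamma$). I would first justify this by a short exchange argument: shifting any bribed vote toward $\Gamma$, or any unbribed future vote toward $C \setminus \Gamma$, only helps the side opposing the briber's goal, so both may be assumed to take their canonical forms. Since the adversary can enforce worst-case reveals regardless of the briber's history, there is no advantage to conditioning future bribes on past reveals, and we may treat the bribe set $B \subseteq \{u\} \cup V_{>u}$ as chosen based only on voter identities, weights, and prices.

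For each of the six in-$\p$ variants, my algorithm would branch on whether $u$ is bribed and then select the remaining bribes within $V_{>u}$ greedily according to the modality: for unpriced-unweighted, take any $\min(k - k', \|V_{>u}\|)$ voters (consistent with Observation~\ref{o:bribers-last}); for unpriced-weighted, take the highest-weight future voters, as this maximizes the good-versus-bad score gap; for priced-unweighted, take cheapest future voters until the remaining budget is exhausted. Then compute the score of each good candidate (approvals in $V_{<u}$, plus $u$'s contribution if $u$ is unbribed, plus total weight of bribed future voters) and each bad candidate (approvals in $V_{<u}$, plus $u$'s contribution if $u$ is unbribed, plus total weight of unbribed future voters), and accept iff some good candidate attains at least the maximum bad score (constructive) or strictly exceeds it (destructive).

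For the priced-weighted variants, $\np$-membership follows by guessing the set $B$, verifying its total price is within budget, and applying the same scoring test. For $\np$-hardness, I would reuse essentially the Partition-based reduction from Theorem~\ref{t:plurality-hard} verbatim: two candidates $\{d,c\}$ with $d >_\sigma c$, Partition-weighted voters whose prices equal their weights and whose original ballots approve only $c$, and budget $S$ (plus, in the destructive case, a single additional unbribed weight-$1$ voter approving only $d$, with designated candidate $c$). A short calculation shows that approving both candidates or neither is never strictly better than approving only the briber's favored candidate, so the reduction transfers from Plurality to Approval unchanged. The main technical obstacle is the adaptive-to-static reduction underlying part~1; I expect it to go through cleanly because the worst-case reveal pattern is position-independent, but some care is needed to correctly account for $u$'s already-revealed vote in the ``do not bribe $u$'' branch.
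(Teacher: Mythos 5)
Your proposal is correct and follows essentially the same route as the paper: the paper's proof rests on exactly your observation that bribed voters optimally approve all desired candidates while worst-case unbribed voters approve all the others, reduces the adaptive problem to a static greedy/score-comparison computation for the six $\p$ cases (mirroring the Plurality algorithms), and establishes $\np$-hardness of the priced, weighted cases by reusing the Partition reduction from Theorem~\ref{t:plurality-hard} unchanged. Your added detail (the exchange argument, the adaptive-to-static justification, and the check that approving both or neither candidate never helps in the reduction) simply makes explicit what the paper treats as immediate.
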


\begin{proof}
The upper bounds are immediate from the observation that it is optimal
for the briber to have the bribed voters
approve all desired candidates
and to have the nonbribed voters approve all other candidates. 
The NP-hardness for the priced, weighted cases follows
with the same reduction as for the priced, weighted cases
of online bribery for Plurality from Theorem~\ref{t:plurality-hard}.~\end{proof}

\section{Conclusions}
\label{sec:conclusions}

We have introduced a model of online, sequential bribery in voting and
have initiated the study of the complexity of the most natural
problems in this setting.  In particular, we have shown that even for
election systems whose winners can be determined in polynomial time,
in an online, sequential setting these bribery problems can be
complete for PSPACE or, when restricted to at most $k$ bribes,
for~$\pilevel{2k+1}$.

On the other hand, we have also shown that for some natural, important
election systems, namely Plurality, Approval, and $3$-candidate-Veto,
such a dramatic complexity jump does not occur, and we pinpoint the
complexity of their bribery problems.  Table~\ref{tab:comparison}
compares our complexity results on online bribery for these specific
natural election systems to the known complexity results for nonsequential
bribery.

\begin{table}[!t]
\small
  \centering
\begin{tabular}{l l l}  
{Problem} & {Online bribery complexity} & {Nonsequential bribery complexity}\\
\midrule
 \textit{Plurality}\\
\ \ $\systemb{}$ &P (Thm.~\ref{t:plurality-easy}) &P  \cite[Thm.~3.1]{fal-hem-hem:j:bribery} \\
\ \ $\systempb{}$ &P (Thm.~\ref{t:plurality-easy}) &P  \cite[Thm.~3.3]{fal-hem-hem:j:bribery} \\
\ \ $\systemwb{}$ &P (Thm.~\ref{t:plurality-easy}) &P  \cite[Thm.~3.3]{fal-hem-hem:j:bribery} \\
\ \ $\systempwb{}$ & NP-complete
(Thm.~\ref{t:plurality-hard}) &NP-complete  \cite[Thm.~3.2]{fal-hem-hem:j:bribery} \\
\textit{3-candidate-Veto}\\
\ \ $\systemb{}$ & P (Thm.~\ref{t:veto}) & P \cite[Thm.~4.13]{fal-hem-hem:j:bribery}\\
\ \ $\systempb{}$ & P (Thm.~\ref{t:veto}) & P \cite[Thm.~4.13]{fal-hem-hem:j:bribery}\\
\ \ $\systemwb{}$ & $\p^{\np[1]}$-complete (Thm.~\ref{t:veto}) &
NP-complete \cite[Thm.~4.9]{fal-hem-hem:j:bribery} \\
\ \ $\systempwb{}$ & $\p^{\np[1]}$-hard and
in $\deltatwo$ (Thm.~\ref{t:veto})& NP-complete \cite[Thm.~4.8]{fal-hem-hem:j:bribery} \\
\textit{Approval}\\
\ \ $\systemb{}$ &P (Thm.~\ref{t:approval}) &NP-complete  \cite[Thm.~4.2]{fal-hem-hem:j:bribery}\\
\ \ $\systempb{}$ &P (Thm.~\ref{t:approval}) &NP-complete  \cite[Thm.~4.2]{fal-hem-hem:j:bribery}\\
\ \ $\systemwb{}$ &P (Thm.~\ref{t:approval}) &NP-complete  \cite[Thm.~4.2]{fal-hem-hem:j:bribery}\\
\ \ $\systempwb{}$ & NP-complete (Thm.~\ref{t:approval}) & NP-complete 
\cite[Thm.~4.2]{fal-hem-hem:j:bribery}\\
\end{tabular}
  \caption{\label{tab:comparison}
  Comparing the complexity of online bribery and nonsequential bribery
  for natural election systems}
\end{table}

A very natural direction for further research is to investigate the
complexity of online bribery in natural election systems other than
those studied in Section~\ref{sec:online-brib-spec} of this paper.
A more narrow, focused challenge would be to close the gap (see
Theorem~\ref{t:veto}),
for online-3-candidate-Veto-Weighted-\$Bribery
and
online-3-candidate-Veto-Destructive-Weighted-\$Bribery,
 between the $\p^{\np[1]}$-hardness lower bounds and
 the 
 $\deltatwo$ upper bounds; we conjecture that both problems in fact are
$\deltatwo$-complete.
The argument made in Footnote~\ref{fn:binary}
notwithstanding, another potential direction for additional study would
be to investigate the complexity of
online bribery when prices and/or weights are encoded
in unary rather than binary.  
A broad area for further research would be to
investigate online bribery
in a
probabilistic setting, perhaps with
probabilistic distribution information about future votes
and their correlations; that would have to be done 
keeping in mind
that later votes could be influenced by earlier
cast votes.
Finally, it
could
be
interesting to extend our model of online, sequential
bribery to other bribery variants such as swap bribery or shift bribery.

\section*{Acknowledgments}
A preliminary version of this paper appeared in the
    Seventeenth Conference on Theoretical Aspects of Rationality
    and Knowledge
    (TARK~2019)~\protect\cite{hem-hem-rot:c:online-bribery}.
We are very grateful to the anonymous conference and journal reviewers
and Eric Pacuit
for many helpful comments and suggestions.

\appendix

\section{Appendix: A Discussion of the
  Weighted Version of a Given Election System}\label{sec:disc-weight-vers}
Let us discuss the issue of weighted versions of election systems,
since the issue is not as straightforward as it might at first seem.
This is
a somewhat
rarefied,
model-focused discussion, and the reader
can safely skip this section unless interested in the issue of
models---and can rest assured that the outcome of the section is that
throughout this paper we use the notion of weighted versions
that is a common and intuitive one.

For natural systems, the weighted versions typically are already
defined.  In particular, usually typical and natural is to treat each
weight-$w$ vote as $w$ unweighted copies of that same vote, and to do
whatever the unweighted system would do given that particular
collection of votes.  Let us call this notion of an election system's
weighted version \emph{multiplicity expansion}.  This is simply interpreting
weights as what called 
the
``succinct'' (e.g.,~\cite{fal-hem-hem:j:bribery,fal-hem-hem-rot:j:llull,hem-hem-rot:j:hybrid,hem-hom:j:dodgson-greedy}
and many more)---as
opposed to the standard, aka ``nonsuccinct''---version
of an election's votes, which is using binary numbers to
represent the number of copies.

The unweighted winner problem is
in a strictly formal sense not 
a special case of
the defined-by-multiplicity-expansion weighted winner problem,
because the types of the voters differ (the former problem has no weights
and the latter problem has weights).  However, the
unweighted winner problem
polynomial-time reduces (and indeed reduces even via far more
restrictive reductions) to
the defined-by-multiplicity-expansion weighted winner 
problem simply by the near-trivial action
of setting the weight of each voter to be one.
So the
unweighted winner problem 
is in effect
never of greater complexity than the
defined-by-multiplicity-expansion weighted winner problem.
In particular,
if for an election system $\cale$ the 
defined-by-multiplicity-expansion weighted winner problem
is in $\p$, then so is the unweighted winner problem for $\cale$.

Note, however, that since the weights in our problems 
are in binary (see Footnote~\ref{fn:binary}), the
multiplicity-expansion
approach means that we may be simulating the original system on
an exponentially long input.  And so in this notion of
weighted elections it does \emph{not} in general
hold that if an (unweighted) election system has a polynomial-time
winner problem, then its weighted version---defined by multiplicity
expansion in the way just mentioned---has a polynomial-time winner
problem.  Indeed, it is very easy to make artificial examples of
election systems with polynomial-time winner problems where the winner
problem of the weighted version (as defined via multiplicity
expansion) is complete for exponential time, i.e., is complete for
${\rm E} = \cup_{c>0} \dtime [2^{cn}]$.\footnote{\label{f:E}For completeness,
  let us give an (admittedly artificial) construction of such
  a case.  Let $L_{\rm E}$ be any problem
  that is complete for E with respect to linear-time many-one reductions;
  it is well-known that E has such complete sets, e.g., the natural
  universal set $\{ \langle M,x,0^k \rangle \condition M$ accepts input
  $x$ within $k$ steps$\}$ (see~\cite{har:b:feasible-provable}).
  Let $\cale$ be the election system such that if there is exactly
  one candidate---$c$---in the election, and all voters in the election cast
  the same vote, and the number of voters is at least
  $2^{|c|}$, then $c$ wins if and only if $c \in L_{\rm E}$.
  Here, we are viewing $c$ as the bit-string naming that candidate, and
  $|c|$ denotes the number of bits in that string.
  In all other cases, all candidates win.
  The unweighted winner problem for this election
  system $\cale$ is clearly in~$\p$.
  However, note that the
    defined-by-multiplicity-expansion weighted winner problem of $\cale$
  is linear-time many-one complete for~$\rm E$.  Why?  It clearly
  is in~$\rm E$, since (though this is overkill) one can expand all
  weighted votes, that expansion transforms problem instances of size $n$ to
  problem instances of size at most $2^n$, and then one can accept
  if and only if the expanded instance is a member of the (polynomial-time
  solvable) unweighted winner problem of $\cale$.  But 
  the
      defined-by-multiplicity-expansion weighted winner problem of $\cale$
  is also many-one linear-time hard for $\rm E$, since
  $L_{\rm E}$ reduces to it by the many-one linear-time reduction
  that maps from $x$ to an election with one candidate, whose name is $x$,
  and with one voter, whose weight is $c'|x|$, where $c'$ is the
  smallest natural number such that $L_{\rm E} \in
  \dtime [2^{c'n}]$ (such a $c'$ exists by the definition of $\rm E$
  and the fact that $L_{\rm E} \in {\rm E}$).  Thus, the
  defined-by-multiplicity-expansion weighted winner problem of $\cale$
  is many-one linear-time complete for~$\rm E$, despite the fact
that the unweighted winner problem for $\cale$ belongs to $\p$.}
Despite that, natural
systems often have nice properties, and in particular, beautifully,
for a wide variety of natural systems, their weighted versions
are
defined by multiplicity expansion yet those versions
still have polynomial-time
winner problems.
For example, the family of election systems known
as Copeland and Llull
elections~\cite{cop:unpub:copeland,col-mcl-lor:j:medieval-social-choice}
(see also, e.g.,~\cite{fal-hem-hem-rot:j:llull}) has this property.
Even more importantly, so-called scoring systems have this
property (basically because, in their case, instead of doing an
exponential number of additions one at a time, one can simply
multiply).
And, for example, Section~\ref{sec:online-brib-spec} of
this paper discusses a number of such real-world systems, and
when speaking of their weighted versions is indeed doing so via
multiplicity expansion, and in doing so, will in each case
need to note that the thus-defined weighted version's winner problem
is in polynomial time.

In light of the previous paragraph, one still is left with the
question: Should we (a)~simply \emph{define} the weighted version of every
election system to be what is defined by the multiplicity expansion
approach above?  Or (b)~for each election system, should
each author, using
his or her human taste, 
hand-tailor, 
possibly in a quirky nonuniform way,
what he or she feels is
a natural notion if its weighted version?

Both those approaches have advantages and downsides.  We will discuss
those, and then---having made clear the downsides we are embracing---will
in this paper adopt approach~(a), namely, multiplicity expansion.
One disadvantage of multiplicity expansion is that
as mentioned above 
it in some
cases takes unweighted election systems whose winner problems are in
polynomial time and boosts the winner-problem complexity of their
weighted versions as high as exponential time.  But that is more a
feature to be aware of---and to avoid being bitten by due to
forgetting that the feature might exist---than a disadvantage.
The only true disadvantages we know of regarding using multiplicity
expansion as providing a general notion of interpreting weights in elections
are (i)~the approach is conflating the issue of weights with the issue of
succinct representations, and in particular
(ii)~for a few natural election systems,
the approach arguably
gets things wrong (and we will soon come back and discuss
this ``gets things wrong'' in more detail).
The advantage of the multiplicity-expansion 
approach is that it is usually highly natural, it is what almost anyone
would think of when asked what weight should be interpreted as, and
it gives an across-the-board, uniform approach to weight, rather than
having the notion be a system-by-system, ad hoc, argued-and-debated construct.
Turning to alternative~(b), since human tastes differ, 
alternative~(b)'s worst downside is that in the extreme one might be 
just basically 
hand in the weights to an election system's computation function and
let it do whatever it feels like with them---making them just
information/bits it can use, perhaps in utterly unnatural
ways.  That the definition of weight may not match
what people broadly
feel that weight means, and also (aside from the
promise that the humans hand-defining the weighted versions of
each system are using their taste) this approach does nothing 
to ensure that
the weighted version of an election systems has any connection
whatsoever to the system's unweighted version.  In light of this,
throughout this paper we outright define, for any (unweighted)
election system $\cale$, the weighted version of $\cale$ (which via
overloading we will also sometimes describe as $\cale$---the inputs to the system
though will make clear which version we are speaking of, since in one
case there are no weights and in the other case there are) via
multiplicity expansion.

It might seem strange for any paper to take a different approach
to
what
``weighted'' means.  Indeed, the literature's papers are generally
taking the multiplicity-expansion approach to weight as so natural and
compelling that they simply are employing it, generally without
mentioning it or mentioning why they are using it.
However, to further discuss this,
let us briefly come back to the worst weakness of
the choice we made, namely, that there are cases where this approach
arguably ``gets things wrong.''  In particular, the fact that the approach
conflates succinctness with weight means that the indivisibility of a
vote is being shattered.  Yet this is
a severe problem for
election systems that are very focused on actions
affecting individual votes.
For example, a famous election system known as Dodgson
elections~\cite{dod:unpub:dodgson-voting-system} is based on the
number of sequential exchanges of adjacent preferences within voters
needed to make a given candidate become
``a Condorcet winner''~\cite{con:b:condorcet-paradox},
i.e., a candidate who beats each other candidate in pairwise
head-on-head elections.  The multiplicity-expansion approach would
allow, for a example, some unweighted copies of a given weighted vote to have
an exchange made in them while other unweighted copies of that same
weighted vote
didn't have the exchange made in them;
yet that seems not to respect the spirit of
Dodgson's system.  (If one were hand-defining a notion of
weight for Dodgson's system, it is simply not clear
whether for a weight $i$ voter, now viewed as
utterly indivisible, one would want to view
an adjacent-candidates exchange as one exchange 
or $w_i$ exchanges.
Each could be argued for, and so there are at least two quite
different, quite reasonable notions of hand-defined weightedness for
Dodgson election.)  In Young
elections~\cite{you:j:extending-condorcet}, which are defined based on
how few deletions of voters are needed to make a given candidate a
(``weak,'' but let us not here worry about that distinction) Condorcet
winner, a similar issue occurs---similar both in the difficulty and in
the fact that there are, in a hand-built notion of weight for the
problem, two quite different notions that reasonable people could
disagree on as to which is more appropriate.  To the best of our
knowledge, the issue of how to frame weighted Dodgson and weighted
Young has been touched on only once in the literature, namely, it is
discussed in the technical-report version of a paper of Fitzsimmons
and Hemaspaandra~\cite{fit-hem:t:succinct}, although
not in the later versions of that same paper.  That paper provides a
valuable discussion of many of the issues discussed in this section,
and we highly commend it to the reader.
Interestingly, that paper finds only one natural election system for
which the weighted (in the sense of multiplicity expansion) winner
problem can even conditionally be shown to
be of greater complexity than the unweighted winner
problem, namely, the paper shows that to be the case for
the system known as Kemeny
elections~\cite{kem:j:mathematics-without-numbers}---which
are a case where the natural interpretation of weights
is via multiplicity expansion
(see~\cite{fit-hem:t:succinct})---if
a certain
complexity-theoretic conjecture holds (namely, that
there are sets acceptable via sequential access to $\np$
that are not acceptable via parallel access to
$\np$).
(In contrast,
footnote~\ref{f:E} of the present paper constructs
a case
where the weighted version of a problem is,
unconditionally, 
of greater complexity
than its unweighted version.)
The system Single Transferable Vote 
(see, e.g.,~\cite{bar-orl:j:polsci:strategic-voting})
has similar
issues to those of Dodgson and Young elections,
as to how to define weight for it.

To be clear, we are not
suggesting that multiplicity expansion captures the right notion of
weightedness for such systems as Dodgson's, Young's,
or Single
Transferable Vote.  It does not,
and papers studying those in weighted contexts should not employ
multiplicity expansion as their notion of weight.  However, what we
are saying is that multiplicity expansion is the best uniform, general
approach to weight, and so we use it here---while carefully making
sure not to apply it to cases such as Dodgson elections where it is
not a good fit, and making sure to always be aware that multiplicity
expansion can distort the complexity of winner problems and that one
must for whatever
cases one covers make sure not to casually
assume otherwise.  We refer the reader to
the work of Fitzsimmons and
Hemaspaandra~\cite{fit-hem:t:succinct,fit-hem:j:succinct-elections}
for further discussion of, and results on,
weights/succinctness/multiply in elections.

As a final comment, we
stress
that our definition regarding
how the winner problem is defined for the weighted version of an
(unweighted) election system in no way is
something
that binds the definitions of manipulative actions within those
systems.  Those actions and their costs are defined not by
the winner-problem handling but rather by the manipulative actions
themselves, e.g., in bribery of priced, weighted elections,
the price of a weighted voter is the cost of bribing that
particular voter (i.e., the weighted vote bribes or fails to be bribed
as a single unit).

\bibliographystyle{alpha}
%
%
%
\newcommand{\etalchar}[1]{$^{#1}$}

\end{document}